\newcommand{\be}{\begin{equation}}
\newcommand{\ee}{\end{equation}}
\newtheorem{theorem}{Theorem}[section]
\newtheorem{lemma}[theorem]{Lemma}
\newtheorem{proposition}[theorem]{Proposition}
\newtheorem{corollary}[theorem]{Corollary}
\theoremstyle{definition}
\newtheorem{definition}[theorem]{Definition}
\newtheorem{remark}[theorem]{Remark}
\theoremstyle{conjecture}
\numberwithin{equation}{section}
\renewcommand{\pod}[1]{\allowbreak\mathchoice
	{\if@display \mkern 18mu\else \mkern 8mu\fi (#1)}
	{\if@display \mkern 18mu\else \mkern 8mu\fi (#1)}
	{\mkern4mu(#1)}
	{\mkern4mu(#1)}
}
\begin{document}

\title{Rota-Baxter operators on Witt and Virasoro algebras}

\author{Xu Gao}
\address{Chern Institute of Mathematics and LPMC, Nankai University, Tianjin 300071, China}
\email{gausyu@gmail.com} %xinghuan-stear@163.com
\author{Ming Liu}
\address{School of Mathematics, South China University of Technology, Guangzhou, Guangdong 510640, China}
%Chern Institute of Mathematics and LPMC, Nankai University, Tianjin 300071, China}
\email{ming.l1984@gmail.com}
\author{Chengming Bai}
\address{Chern Institute of Mathematics and  LPMC, Nankai University, Tianjin 300071, China}
\email{baicm@nankai.edu.cn}
\author{Naihuan Jing*}
\address{% and
Department of Mathematics, North Carolina State University, Raleigh, NC 27695, USA}
%and School of Mathematical Sciences, South China University of Technology,
%Guangzhou, Guangdong 510640, China}
\email{jing@math.ncsu.edu}

\thanks{{\scriptsize
\hskip -0.4 true cm MSC (2010): Primary: 17B30; Secondary: 17B68.
\newline Keywords: Rota-Baxter operator, Witt algebra, Virasoro algebra, pre-Lie algebra, PostLie algebra.\\
$*$Corresponding author.
}}

\maketitle
\begin{abstract}
The homogeneous Rota-Baxter operators on the Witt and Virasoro algebras are classified. As applications, the induced solutions of the classical Yang-Baxter equation and the induced pre-Lie and PostLie algebra structures are obtained.
\end{abstract}

\tableofcontents

\section{Introduction}

Rota-Baxter operators were originally defined on associative algebras by G. Baxter to solve an analytic formula in probability~ \cite{Baxter} and then developed by the Rota school \cite{Rota-Kung}. These operators have showed up in many areas in mathematics and mathematical physics such as number theory, combinatorics, operads and quantum field theory (see \cite{Guo1, Guo2} and the references therein).

Rota-Baxter operators in the context of Lie algebras were developed with different motivation. 
In fact, Semenov-Tian-Shansky's fundamental work \cite{S} shows that a Rota-Baxter operator of weight 0 on a Lie algebra is exactly the operator form of the classical Yang-Baxter equation (CYBE), which was regarded as a ``classical limit'' of the quantum Yang-Baxter equation \cite{Belavin}, whereas the latter is also
an important topic in many fields such as symplectic geometry, integrable systems, quantum groups and quantum field theory~(see~\cite{CP} and the references therein).

The study of Rota-Baxter operators on Lie algebras has practical meanings.
First, both Rota-Baxter operators of weight 0 and 1 on a Lie algebra $\frak g$ give
 rise to solutions of CYBE on the double Lie algebra
$\frak g\ltimes_{{\rm ad}^{\ast}}\frak g^{\ast}$ over the direct sum $\frak g\oplus \frak g^{\ast}$ of the Lie algebra $\frak g$ and its dual space $\frak g^{\ast}$. Note
that such a relationship holds for any Lie algebra, which is different from the correspondence given by Semenov-Tian-Shansky with a strict constraint on the Lie algebra itself. Second, there are certain interesting algebraic structures
coming out of the Rota-Baxter operators, notably the pre-Lie algebras from Rota-Baxter operators  of weight 0 on Lie algebras and the PostLie algebras from Rota-Baxter operators of weight 1 on the Lie algebras.  Pre-Lie algebras
are a class of non-associative algebras emerged from the
study of convex homogeneous cones, affine manifolds and deformations of associative algebras \cite{Ko,G,Vi}.
PostLie algebras were introduced in the context of operads \cite{Va}. These two algebraic structures
have appeared in many other fields in mathematics and mathematical physics (see \cite{Bu, BGN2} and the references therein).

Most of the study on Rota-Baxter operators has been focused on the finite dimensional case. For example, a detailed study of Rota-Baxter operators of weight 0 on $\mathfrak{sl}_2(\mathbb C)$ is available in \cite{PBG}. It is natural to consider the infinite dimensional case. As a step in this direction, we study Rota-Baxter operators on
two important infinite dimensional Lie algebras: the Witt algebra and its central extension the Virasoro algebra.
These two Lie algebras are selected due to their important position
in several areas of mathematics and physics.  The following three
problems are addressed in this paper:
\begin{enumerate}
\item  The classification of homogeneous Rota-Baxter operators of weight 0 and 1 on the Witt algebra $W$ and the Virasoro algebra $V$ respectively.
\item The induced solutions of the CYBE on the Lie algebras $W\ltimes_{{\rm ad}^{\ast}} W^{\ast}$ and $V\ltimes_{{\rm ad}^{\ast}} V^{\ast}$ respectively.
\item Description of the induced pre-Lie and PostLie algebra structures respectively.
\end{enumerate}

We note that the study in (2) and (3) can be regarded as applications of the classification results given in (1).

Our results can be briefly summarized as follows. In Section 2, we classify the homogeneous Rota-Baxter operators
of weight 0 and 1 on the Witt algebra $W$. In Section 3, we classify the homogeneous Rota-Baxter
operators of weight 0 and 1 on the Virasoro algebra $V$.
In particular, we find that although homogeneous Rota-Baxter operators with degree $0$
	on the Witt algebra and Virasoro algebra are closely related,
	those operators with nonzero degree are
	quite different in the sense that
	the former operators are not special cases of the latter, which
	is primarily a result of the central element $C$ in the Virasoro algebra.
In Section 4, we give the induced solutions of the CYBE on the Lie algebras $W\ltimes_{{\rm ad}^{\ast}} W^{\ast}$ and $V\ltimes_{{\rm ad}^{\ast}} V^{\ast}$ respectively.
In Section 5, we give the induced pre-Lie algebras from the Rota-Baxter operators of weight 0 on the Witt algebra
$W$ and Virasoro algebra $V$ respectively.  In Section 6, we give the induced PostLie algebras from the Rota-Baxter operators of weight 1 on $W$ and $V$ respectively.

\section{Homogeneous Rota-Baxter operators on the Witt algebra}

\begin{definition} Let $\mathbb F$ be a field. A {\it Rota-Baxter operator of weight $\lambda \in \mathbb F$} on a Lie algebra $\mathfrak{g}$
over $\mathbb F$ is 
a linear map $R: \mathfrak{g}\rightarrow \mathfrak{g}$ satisfying
 \begin{equation}\label{RB-lambda}
   [R(x),R(y)] = R([R(x),y]+[x,R(y)])+\lambda R([x,y]),\;\;\forall x, y\in \mathfrak{g}.
 \end{equation}
\end{definition}
Note that if $R$ is a Rota-Baxter operator of weight $\lambda \neq 0$, then ${\lambda}^{-1}R$ is a Rota-Baxter
operator $R$ of weight $1$. Therefore one only needs to consider
Rota-Baxter operators of weight $0$ and $1$. 
We also assume that $\mathbb F=\mathbb C$, the complex field since
both the Witt and Virasoro algebras are defined over $\mathbb C$.

\begin{definition}
The {\it Witt algebra} $W$ is a Lie algebra with the basis $\{L_n| n\in \mathbb{Z}\}$ subject to the following relations:
  \begin{equation}\label{Witt}
    [L_m,L_n]=(m-n)L_{m+n}, \;\; \forall m,n \in \mathbb{Z}.
  \end{equation}
\end{definition}

There is a natural $\mathbb{Z}$-grading on the Witt algebra $W$, namely
\[
W=\bigoplus_{n\in\mathbb{Z}}W_n,
\]
where $W_n=\mathbb{C}L_n$ for any $n\in \mathbb{Z}$.

\begin{definition}\label{def:homoWitt}
Let $k$ be an integer. A {\it homogeneous operator $F$ with degree $k$} on the Witt algebra $W$ is a linear operator on $W$ satisfying
\[
F(W_m)\subset W_{m+k},\;\;\forall m\in\mathbb{Z}.
\]
\end{definition}

Therefore, a {\it homogeneous Rota-Baxter operator $R_k$ with degree $k$} on the Witt algebra $W$ is a Rota-Baxter operator on $W$
of the following form
\begin{equation}\label{eq:HRB}
R_k(L_m)=f(m+k)L_{m+k}, \;\;\forall m\in \mathbb{Z},
\end{equation}
where $f$ is a $\mathbb C$-valued function defined on $\mathbb{Z}$.

\subsection{Homogeneous Rota-Baxter operators of weight $0$ on the Witt algebra}
\mbox{}
Let $R_k$ be a homogeneous Rota-Baxter operator of weight $0$ with degree $k$ on the Witt algebra
$W$ satisfying Eq.~(\ref{eq:HRB}). Then by Eqs.~(\ref{RB-lambda}) and (\ref{Witt}), we see that the function $f$ satisfies the following equation:
\begin{equation}\label{weight 0}
  f(m)f(n)(m-n)=f(m+n)(f(m)(m-n+k)+f(n)(m-n-k)),\;\;\forall m,n\in \mathbb{Z}.
\end{equation}

\begin{proposition}\label{prop1}
With the notations as above, the Rota-Baxter operator $R_0$ of weight $0$ with degree $0$
is given by 
\begin{equation*}
f(m)=\alpha\delta_{m,0},\;\; \forall m\in\mathbb{Z},
\end{equation*}
for $\alpha\in \mathbb{C}$.
\end{proposition}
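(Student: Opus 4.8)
The plan is to attack the functional equation (\ref{weight 0}) directly by feeding in well-chosen values of $m$ and $n$. First I would set $k=0$ in (\ref{weight 0}); the right-hand side then collapses because $f(m)(m-n+0)+f(n)(m-n-0)=(m-n)\bigl(f(m)+f(n)\bigr)$, leaving
\[
f(m)f(n)(m-n)=f(m+n)(m-n)\bigl(f(m)+f(n)\bigr),\qquad\forall m,n\in\mathbb{Z}.
\]
On the diagonal $m=n$ both sides vanish and give nothing, so I would restrict to $m\neq n$ and cancel the nonzero factor $(m-n)$, reducing the problem to the cleaner relation $f(m)f(n)=f(m+n)\bigl(f(m)+f(n)\bigr)$ for all $m\neq n$.

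The crucial move is the specialization $n=0$ together with $m\neq 0$. Here $m-n=m\neq 0$ and $m+n=m$, so the relation becomes $f(m)f(0)=f(m)\bigl(f(m)+f(0)\bigr)$, which simplifies at once to $f(m)^2=0$. This forces $f(m)=0$ for every $m\neq 0$, while no constraint is ever placed on $f(0)$. Setting $\alpha:=f(0)$ then yields exactly the asserted form $f(m)=\alpha\delta_{m,0}$.

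To finish I would check that every such $f$ really is a solution of (\ref{weight 0}) with $k=0$. Since $f$ is supported only at $0$, the left-hand side is nonzero only if $m=n=0$, where the factor $m-n$ annihilates it; likewise the right-hand side requires $m+n=0$ together with $m=0$ or $n=0$, which again forces $m=n=0$ and hence $m-n=0$. Both sides therefore vanish identically, confirming admissibility for arbitrary $\alpha\in\mathbb{C}$.

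I expect no real obstacle: the entire argument hinges on noticing that the substitution $n=0$ instantly produces $f(m)^2=0$. The only subtlety worth flagging is the cancellation of $(m-n)$, which is legitimate precisely because I have first isolated and discarded the uninformative diagonal case $m=n$.
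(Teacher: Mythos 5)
Your proposal is correct and follows essentially the same route as the paper: specialize Eq.~(\ref{weight 0}) to $k=0$ and then to $n=0$, obtaining $mf(m)^2=0$ (equivalently, $f(m)^2=0$ for $m\neq 0$ after cancelling $m-n$), which forces $f(m)=\alpha\delta_{m,0}$. The only difference is that you also include the converse verification, which the paper defers to the summary theorem.
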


\begin{proof}
When $k=0$, Eq.~(\ref{weight 0}) becomes
\begin{equation*}
(m-n)f(m)f(n)=(m-n)f(m+n)(f(m)+f(n)),\;\;\forall m,n\in \mathbb{Z}.
\end{equation*}
Plugging $n=0$ in the equation, we have
\begin{equation*}
mf(m)^2=0.
\end{equation*}
Thus $f(m)=\alpha\delta_{m,0}$ for some $\alpha\in \mathbb{C}$.
\end{proof}

When $k\neq 0$, taking $n=0$ in Eq.~(\ref{weight 0}), we have

\begin{equation}\label{Eq2.6}
  f(m)((m+k)f(m)-kf(0))=0,\;\;\forall m\in \mathbb{Z}.
\end{equation}

\begin{proposition}\label{prop2}
With the notations as above, when the degree $k\in\mathbb{Z}^{\ast}:=\mathbb{Z}\setminus\{0\}$ and $f(0)=0$, we have
\begin{equation*}
f(m)=\alpha\delta_{m+k,0},\;\;\forall m\in \mathbb{Z},
\end{equation*}
where $\alpha\in \mathbb{C}^{\ast} := \mathbb{C}\setminus\{0\}$.
\end{proposition}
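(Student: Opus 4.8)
The plan is to extract everything from Eq.~(\ref{Eq2.6}), which was already obtained by setting $n=0$ in Eq.~(\ref{weight 0}). First I would substitute the hypothesis $f(0)=0$ into Eq.~(\ref{Eq2.6}), which collapses it to
\[
(m+k)f(m)^2=0,\qquad\forall m\in\mathbb{Z}.
\]
Since $k\in\mathbb{Z}^{\ast}$, the factor $m+k$ is nonzero for every $m\neq -k$, so this forces $f(m)=0$ for all $m\neq -k$. Hence $f$ is supported at the single point $m=-k$, and writing $\alpha:=f(-k)$ gives the asserted form $f(m)=\alpha\delta_{m+k,0}$.

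It then remains to pin down $\alpha$. If $\alpha=0$ then $f\equiv 0$ and $R_k$ is the trivial operator, which carries no genuine degree; excluding this degenerate case is exactly why one takes $\alpha\in\mathbb{C}^{\ast}$. The real content, and the step I expect to need the most care, is the converse check that every such $\alpha$ actually yields a solution, i.e.\ that no further relation hidden in Eq.~(\ref{weight 0}) (beyond its $n=0$ specialization) constrains $\alpha$.

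For that I would substitute $f(m)=\alpha\delta_{m+k,0}$ back into the full equation Eq.~(\ref{weight 0}) and verify that both sides vanish identically. The left-hand side $f(m)f(n)(m-n)$ is nonzero only when $m=n=-k$, in which case the factor $m-n$ kills it, so the left-hand side is always $0$. On the right-hand side the prefactor $f(m+n)$ is nonzero only when $m+n=-k$; the two subcases $m=-k,\,n=0$ and $m=0,\,n=-k$ are then the only ones in which the bracket $f(m)(m-n+k)+f(n)(m-n-k)$ can survive, and a direct evaluation shows the bracket reduces to $\alpha(m-n+k)$ and $\alpha(m-n-k)$ respectively, both of which vanish under these constraints. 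Thus the right-hand side is also identically $0$, confirming that $f(m)=\alpha\delta_{m+k,0}$ solves Eq.~(\ref{weight 0}) for arbitrary $\alpha$, and the classification is complete.
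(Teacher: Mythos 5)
Your argument is correct and follows the paper's own route exactly: substitute $f(0)=0$ into Eq.~(\ref{Eq2.6}) to get $(m+k)f(m)^2=0$, whence $f$ is supported at $m=-k$. The converse verification you add (that every $\alpha$ works) is accurate but is deferred in the paper to the ``directly verified'' second part of Theorem~\ref{Thm2.14} rather than appearing in the proof of this proposition.
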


\begin{proof}
If $f(0)=0$, then  by Eq. (\ref{Eq2.6}),
we have
\begin{equation*}
  (m+k)(f(m))^2=0,\;\;\forall m\in \mathbb{Z}.
\end{equation*}
Thus, the function $f$ satisfies
\begin{align*}
f(m)=\alpha\delta_{m+k,0},\;\;\forall m\in \mathbb{Z},
\end{align*}
where $\alpha\in \mathbb{C}^{\ast}$.
\end{proof}

 When $f(0)\neq 0$, if follows from Eq. (\ref{Eq2.6}) that $f(-k)=0$. Moreover, substituting this into Eq.~(\ref{weight 0}) with $m=k$ and
$n=-k$, we have $f(k)=0$. For such an $f$ satisfying Eq.~(\ref{weight 0}) so that $kf(0)\neq 0$, we set
\[
\mathcal{I}=\{m\in \mathbb{Z}| f(m)=0\},\;\;
\mathcal{J}=\{m\in \mathbb{Z}|(m+k)f(m)-kf(0)=0\}.
\]
Thus $-k, k\in \mathcal{I}$ and $\mathcal{I}\cap \mathcal{J}=\varnothing$, $\mathcal{I}\cup \mathcal{J}=\mathbb{Z}$.

\begin{lemma}\label{lemma2.5}
Let $f$ be a $\mathbb C$-valued function defined on $\mathbb{Z}$ satisfying Eq.~(\ref{weight 0}). Suppose that $f(0)\neq 0$ and $k\ne 0$. If $n\in \mathcal{J}$ and
$m\neq n, n+k$, then $m\in \mathcal{I}$ if and only if $m+n\in \mathcal{I}$.
\end{lemma}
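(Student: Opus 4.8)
The plan is to exploit the defining functional equation~(\ref{weight 0}) evaluated at the single pair $(m,n)$, together with the key observation that $f(n)\neq 0$ whenever $n\in\mathcal{J}$. Indeed, the definition of $\mathcal{J}$ gives $(n+k)f(n)=kf(0)$, and since $k\neq 0$ and $f(0)\neq 0$ by hypothesis, the product $(n+k)f(n)$ is nonzero; in particular $f(n)\neq 0$ (and incidentally $n\neq -k$). This nonvanishing is exactly what lets me cancel $f(n)$ in both implications.

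For the forward direction, I would assume $m\in\mathcal{I}$, i.e.\ $f(m)=0$, and substitute this into Eq.~(\ref{weight 0}) at the pair $(m,n)$. The equation then collapses to
\[
0=f(m+n)f(n)(m-n-k).
\]
Since $f(n)\neq 0$ and the hypothesis $m\neq n+k$ forces $m-n-k\neq 0$, I conclude $f(m+n)=0$, that is, $m+n\in\mathcal{I}$.

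For the reverse direction, I would assume $m+n\in\mathcal{I}$, i.e.\ $f(m+n)=0$. Then the right-hand side of Eq.~(\ref{weight 0}) vanishes, leaving
\[
f(m)f(n)(m-n)=0.
\]
Again $f(n)\neq 0$, while the hypothesis $m\neq n$ gives $m-n\neq 0$, so $f(m)=0$, i.e.\ $m\in\mathcal{I}$.

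There is no genuine obstacle here; the only delicate point is the bookkeeping. One must notice that the two excluded values $m=n$ and $m=n+k$ are precisely the cases in which the cofactors $m-n$ and $m-n-k$ degenerate, and that the nonvanishing of $f(n)$ is guaranteed exactly by the membership $n\in\mathcal{J}$ in conjunction with $kf(0)\neq 0$. Once these two observations are secured, the statement follows immediately from the two specializations of Eq.~(\ref{weight 0}) displayed above.
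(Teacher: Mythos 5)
Your proposal is correct and follows essentially the same route as the paper's own proof: substitute the vanishing of $f(m)$ (resp. $f(m+n)$) into Eq.~(\ref{weight 0}) at the pair $(m,n)$ and cancel the nonzero factors $f(n)$ and $m-n-k$ (resp. $m-n$). Your explicit justification that $n\in\mathcal{J}$ together with $kf(0)\neq 0$ forces $f(n)\neq 0$ is a detail the paper leaves implicit, but the argument is identical.
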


\begin{proof}
If $m\in \mathcal{I}$, $m\neq n+k$ and $n\in \mathcal{J}$, then by Eq.~(\ref{weight 0}), we have
\begin{align*}
(m-n-k)f(n)f(n+m)=0.
\end{align*}
Since $n\in \mathcal{J}$, we have $f(n+m)=0$. Conversely, if $m+n\in \mathcal{I}$, $m\neq n$ and $n\in \mathcal{J}$, then
by Eq.~(\ref{weight 0}), we have
$$(m-n)f(m)=0.$$
Hence $m\in \mathcal{I}$.
\end{proof}

For an integer $m\in \mathbb{Z}$, set
\[
\mathcal{J}_m=\{n\in \mathcal{J}| mn\in \mathcal{J}\},\;\;
\mathcal{I}_m=\{n\in \mathcal{J}| mn+k\in \mathcal{I}\}.
\]

\begin{proposition}\label{prop2.6} With the conditions as above, we have
\begin{enumerate}
  \item $\mathcal{J}_0=\mathcal{J}_1=\mathcal{J}$.
  \item $(\mathcal{J}\setminus \{-\tfrac{k}{2m}\})\cap \mathcal{J}_m\subset \mathcal{J}_{-m}$ for every $m\neq 0$.
  In particular, $\mathcal{J}\setminus \{-\tfrac{k}{2}\}\subset \mathcal{J}_{-1}$.
  \item $(\mathcal{J}\setminus \{-\tfrac{k}{2}, \tfrac{k}{m+1}\})\cap \mathcal{J}_{m-1}\subset \mathcal{J}_m$,
  $(\mathcal{J}\setminus \{\tfrac{-k}{m+1}\})\cap \mathcal{J}_{1-m}\subset \mathcal{J}_{-m}$ for $m\geq 2$.
  \item $(\mathcal{J}\setminus \{\tfrac{k}{2m-1}\})\cap \mathcal{J}_{1-m}\subset \mathcal{J}_{m}$,
  $(\mathcal{J}\setminus \{-\tfrac{k}{2}, \tfrac{-k}{2m-1}\})\cap \mathcal{J}_{m-1}\subset \mathcal{J}_{-m}$ for $m\geq 2$.
\end{enumerate}
\end{proposition}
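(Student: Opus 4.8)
The plan is to deduce all five inclusions from one closure property of $\mathcal{J}$ under subtraction, read off from Eq.~(\ref{weight 0}) together with the partition $\mathbb{Z}=\mathcal{I}\sqcup\mathcal{J}$ and the fact that $0\in\mathcal{J}$ (which holds since $f(0)\neq0$). Part~(1) I would dispose of immediately, with no computation: by definition $\mathcal{J}_{1}=\{n\in\mathcal{J}\mid n\in\mathcal{J}\}=\mathcal{J}$, while $\mathcal{J}_{0}=\{n\in\mathcal{J}\mid 0\in\mathcal{J}\}=\mathcal{J}$ precisely because $0\in\mathcal{J}$.

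The engine I would isolate for parts~(2)--(4) is the following \emph{subtraction rule}: if $u,v\in\mathcal{J}$ and $u-2v\neq k$, then $u-v\in\mathcal{J}$. To establish it I would substitute $(m,n)=(u-v,v)$ into Eq.~(\ref{weight 0}), so that $m-n=u-2v$ and $m+n=u$, turning it into
\[
f(u-v)f(v)(u-2v)=f(u)\bigl(f(u-v)(u-2v+k)+f(v)(u-2v-k)\bigr).
\]
Assuming $u-v\in\mathcal{I}$, i.e.\ $f(u-v)=0$, the left side vanishes and the right side collapses to $f(u)f(v)(u-2v-k)$; since $u,v\in\mathcal{J}$ forces $f(u),f(v)\neq0$, this yields $u-2v=k$. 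Hence $u-2v\neq k$ gives $u-v\notin\mathcal{I}$, that is $u-v\in\mathcal{J}$. (This is exactly the implication $m\in\mathcal{I}\Rightarrow m+n\in\mathcal{I}$ of Lemma~\ref{lemma2.5}, whose proof needs only $m\neq n+k$.) The case $u=0$ is a \emph{negation rule}: $v\in\mathcal{J}$ and $v\neq-\tfrac{k}{2}$ imply $-v\in\mathcal{J}$.

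With the rule in hand, each remaining inclusion becomes a single application to a pair $(u,v)$ of elements already known to lie in $\mathcal{J}$, chosen so that $u-v$ equals the desired target $\pm mn$; in every case the excluded value in the statement is precisely the degeneracy $u-2v=k$, supplemented by $-\tfrac{k}{2}$ whenever the negation rule is first used to produce $-n\in\mathcal{J}$. For~(2) I would take $(u,v)=(0,mn)$, where $mn\in\mathcal{J}$ because $n\in\mathcal{J}_{m}$: then $u-v=-mn$ and $u-2v=-2mn$, so $-mn\in\mathcal{J}$ unless $n=-\tfrac{k}{2m}$, and the ``in particular'' claim is the case $m=1$. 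The second inclusion of~(3), from $\mathcal{J}_{1-m}$ to $\mathcal{J}_{-m}$, uses $(u,v)=((1-m)n,n)$, giving $u-v=-mn$ and $u-2v=-(m+1)n$, hence the exclusion $n=-\tfrac{k}{m+1}$. The first inclusion of~(4), from $\mathcal{J}_{1-m}$ to $\mathcal{J}_{m}$, uses $(u,v)=(n,(1-m)n)$, giving $u-v=mn$ and $u-2v=(2m-1)n$, hence $n=\tfrac{k}{2m-1}$. The two inclusions carrying the extra $-\tfrac{k}{2}$ first invoke the negation rule: the first inclusion of~(3), from $\mathcal{J}_{m-1}$ to $\mathcal{J}_{m}$, uses $(u,v)=((m-1)n,-n)$, giving $u-v=mn$, $u-2v=(m+1)n$, and the exclusion $n=\tfrac{k}{m+1}$; the second inclusion of~(4), from $\mathcal{J}_{m-1}$ to $\mathcal{J}_{-m}$, uses $(u,v)=(-n,(m-1)n)$, giving $u-v=-mn$, $u-2v=-(2m-1)n$, and $n=-\tfrac{k}{2m-1}$.

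The main obstacle I anticipate is organizational rather than computational: the only substantive input is the subtraction rule, a two-line consequence of Eq.~(\ref{weight 0}), so the real task is to identify, for each of the five memberships, the correct pair $(u,v)$ whose entries are certified to lie in $\mathcal{J}$ (using the hypotheses $n\in\mathcal{J}_{m-1}$ or $n\in\mathcal{J}_{1-m}$, the ambient facts $n,0\in\mathcal{J}$, and, for the two flagged cases, a preliminary negation) and then to check that the single degeneracy $u-2v=k$ reproduces the listed exceptional value. The one point requiring care is the order of operations in those two cases, where the negation rule contributes the additional exclusion $-\tfrac{k}{2}$; and I would stress that, because the subtraction rule needs only $u-2v\neq k$ and no condition of the form $u\neq2v$, no degenerate subcases survive to enlarge the exceptional sets.
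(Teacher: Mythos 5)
Your proof is correct and takes essentially the same route as the paper: your ``subtraction rule'' is exactly the contrapositive of the forward implication of Lemma~\ref{lemma2.5} (which is also what the paper invokes), and each of your applications --- choosing a pair $(u,v)$ in $\mathcal{J}$ so that $u-v$ is the target and noting that the single degeneracy $u-2v=k$ yields the listed exceptional value --- reproduces the paper's argument for part~(2), which the paper then asserts carries over to (3) and (4) ``similarly.'' You have simply written out explicitly the cases the paper leaves to the reader, with all exceptional values checking out.
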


\begin{proof}
(1) follows immediately by definition.  We only give a proof for (2), as (3) and (4) can be proved similarly.

In fact, it is straightforward to check that $0\in (\mathcal{J}\setminus \{-\tfrac{k}{2m}\})\cap \mathcal{J}_m$
and $0\in \mathcal{J}_{-m}$ for $m\neq 0$.
Let $n\neq 0$ and $n\in (\mathcal{J}\setminus \{-\tfrac{k}{2m}\})\cap \mathcal{J}_m$.
To prove (2), we only need to show
that $-nm\in \mathcal{J}$. Otherwise,  $-nm\in \mathcal{I}$. Then by Lemma~\ref{lemma2.5},  we have  $nm-nm=0\in \mathcal{I}$,
which is a contradiction with the assumption that $f(0)\neq 0$.
\end{proof}
\begin{corollary}\label{cor2.7} With the conditions as above, we have
\begin{align*}
\mathcal{J}\setminus \{\tfrac{-k}{2}\}\subset \bigcap_{m\in \mathbb{Z}} \mathcal{J}_m.
\end{align*}
\end{corollary}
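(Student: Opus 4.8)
The plan is to fix an element $n\in\mathcal{J}\setminus\{-\tfrac{k}{2}\}$ and prove, by strong induction on $|m|$, that $n\in\mathcal{J}_m$ for every $m\in\mathbb{Z}$; since $n$ is arbitrary this will yield the asserted inclusion. The base of the induction is supplied directly by Proposition~\ref{prop2.6}: part~(1) gives $n\in\mathcal{J}_0=\mathcal{J}_1=\mathcal{J}$, while the ``in particular'' clause of part~(2) gives $n\in\mathcal{J}_{-1}$ (this is precisely where the exclusion of $-\tfrac{k}{2}$ is used). Thus $n\in\mathcal{J}_j$ for $j\in\{-1,0,1\}$.

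For the inductive step I would fix $m\geq 2$ and assume $n\in\mathcal{J}_j$ for all $j$ with $|j|<m$; note that $|m-1|=|1-m|=m-1<m$, so both $\mathcal{J}_{m-1}$ and $\mathcal{J}_{1-m}$ are available by hypothesis. To reach $\mathcal{J}_m$ there are two routes: the first inclusion of part~(3) coming from $\mathcal{J}_{m-1}$, blocked only if $n=\tfrac{k}{m+1}$ (the exclusion $n\neq-\tfrac{k}{2}$ being guaranteed), and the first inclusion of part~(4) coming from $\mathcal{J}_{1-m}$, blocked only if $n=\tfrac{k}{2m-1}$. Symmetrically, to reach $\mathcal{J}_{-m}$ one uses the second inclusion of part~(3) from $\mathcal{J}_{1-m}$, blocked only if $n=-\tfrac{k}{m+1}$, or the second inclusion of part~(4) from $\mathcal{J}_{m-1}$, blocked only if $n=-\tfrac{k}{2m-1}$. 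For $m\geq 3$ the two forbidden values in each pair are distinct, since $\tfrac{k}{m+1}=\tfrac{k}{2m-1}$ forces $m=2$; hence in each pair at least one route is unobstructed, and both $n\in\mathcal{J}_m$ and $n\in\mathcal{J}_{-m}$ follow.

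The one genuinely delicate case is $m=2$, where both routes toward $\mathcal{J}_2$ carry the same obstruction $n=\tfrac{k}{3}$ and both routes toward $\mathcal{J}_{-2}$ the same obstruction $n=-\tfrac{k}{3}$. Here I would argue by sign: since $k\neq 0$, the values $\tfrac{k}{3}$ and $-\tfrac{k}{3}$ cannot both equal $n$. If $n\neq\tfrac{k}{3}$ then $n\in\mathcal{J}_2$ directly, and if in addition $n\neq-\tfrac{k}{3}$ then $n\in\mathcal{J}_{-2}$ directly as well. When $n=\tfrac{k}{3}$, the direct route still yields $n\in\mathcal{J}_{-2}$ (as $n\neq-\tfrac{k}{3}$), and I then bridge to $\mathcal{J}_2$ using part~(2): applying it with index $-2$ gives $(\mathcal{J}\setminus\{\tfrac{k}{4}\})\cap\mathcal{J}_{-2}\subset\mathcal{J}_2$, and $\tfrac{k}{3}\neq\tfrac{k}{4}$, so $n\in\mathcal{J}_2$. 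The case $n=-\tfrac{k}{3}$ is handled identically with the roles of $\mathcal{J}_2$ and $\mathcal{J}_{-2}$ interchanged. I expect this $m=2$ overlap to be the main obstacle, as it is the only index at which the two direct inductive routes share a forbidden value; the essential point making the argument go through is that part~(2) provides an independent bridge between $\mathcal{J}_m$ and $\mathcal{J}_{-m}$ whose own exceptional value ($\pm\tfrac{k}{4}$) is distinct from the obstructions blocking the direct routes.
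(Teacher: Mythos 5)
Your proposal is correct and follows essentially the same route as the paper: an induction establishing $n\in\mathcal{J}_{\pm m}$ from $\mathcal{J}_{m-1}$ and $\mathcal{J}_{1-m}$ via the two inclusions in parts (3) and (4) of Proposition~\ref{prop2.6}, whose forbidden values $\tfrac{k}{m+1}$ and $\tfrac{k}{2m-1}$ coincide only at $m=2$, with that exceptional case resolved by bridging through $\mathcal{J}_{-2}$ (resp.\ $\mathcal{J}_2$) using part (2). The only difference is presentational -- you argue element-wise where the paper takes unions of sets -- so there is nothing further to add.
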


\begin{proof}
We only need to show that $\mathcal{J}\setminus \{\tfrac{-k}{2}\}\subset \mathcal{J}_m\cap \mathcal{J}_{-m}$ for every $m\geq 1$.

By Proposition~\ref{prop2.6}, we show that $\mathcal{J}\setminus \{\tfrac{-k}{2}\}\subset \mathcal{J}_{-1}$. Moreover, since
$\mathcal{J}_1=\mathcal{J}_0=\mathcal{J}$, we have $\mathcal{J}\setminus \{\tfrac{-k}{2}\}\subset \mathcal{J}_1\cap \mathcal{J}_{-1}$.

By Proposition~\ref{prop2.6} again, we have
$$
\mathcal{J}\setminus \{\tfrac{-k}{2},\tfrac{k}{3}\}\subset \mathcal{J}_2,\;\;
\mathcal{J}\setminus \{\tfrac{-k}{2},\tfrac{-k}{3}\}\subset \mathcal{J}_{-2},\;\;
\mathcal{J}\setminus \{\tfrac{k}{4}\}\cap \mathcal{J}_{-2}\subset \mathcal{J}_2,\;\;
\mathcal{J}\setminus \{\tfrac{-k}{4}\}\cap \mathcal{J}_2\subset \mathcal{J}_{-2}.
$$
Therefore
$$\mathcal{J}\setminus \{\tfrac{-k}{2},\tfrac{-k}{3},\tfrac{k}{4}\}=
(\mathcal{J}\setminus \{\tfrac{k}{4}\})\cap (\mathcal{J}\setminus \{\tfrac{-k}{2},\tfrac{-k}{3}\})\subset (\mathcal{J}\setminus \{\tfrac{k}{4}\})\cap \mathcal{J}_{-2}\subset \mathcal{J}_2.$$
Hence $$\mathcal{J}\setminus \{\tfrac{-k}{2}\}=(\mathcal{J}\setminus \{\tfrac{-k}{2},\tfrac{-k}{3},\tfrac{k}{4}\})\cup (\mathcal{J}\setminus \{\tfrac{-k}{2},\tfrac{k}{3}\})\subset \mathcal{J}_2.$$

Similarly, we show that $\mathcal{J}\setminus \{\tfrac{-k}{2}\}\subset \mathcal{J}_{-2}$.

Now assume that $\mathcal{J}\setminus \{\tfrac{-k}{2}\}\subset \mathcal{J}_{m-1}\cap \mathcal{J}_{1-m}$ holds for $m>2$. By Proposition~\ref{prop2.6} we have that
\begin{align*}
\mathcal{J}\setminus \{\tfrac{-k}{2},\tfrac{k}{m+1}\}=(\mathcal{J}\setminus \{\tfrac{-k}{2},\tfrac{k}{m+1}\})\cap (\mathcal{J}\setminus \{\tfrac{-k}{2}\})\subset
(\mathcal{J}\setminus \{\tfrac{-k}{2},\tfrac{k}{m+1}\})\cap \mathcal{J}_{m-1}\subset \mathcal{J}_m,
\end{align*}
and
\begin{align*}
\mathcal{J}\setminus \{\tfrac{-k}{2},\tfrac{k}{2m-1}\}=(\mathcal{J}\setminus \{\tfrac{k}{2m-1}\})\cap (\mathcal{J}\setminus \{\tfrac{-k}{2}\})\subset
(\mathcal{J}\setminus \{\tfrac{k}{2m-1}\})\cap \mathcal{J}_{1-m}\subset \mathcal{J}_m.
\end{align*}
Since $\tfrac{k}{m+1}\neq \tfrac{k}{2m-1}$ for $m>2$, we have
\begin{align*}
\mathcal{J}\setminus \{\tfrac{-k}{2}\}=(\mathcal{J}\setminus \{\tfrac{-k}{2},\tfrac{k}{m+1}\})\cup
(\mathcal{J}\setminus \{\tfrac{-k}{2},\tfrac{k}{2m-1}\})\subset \mathcal{J}_m.
\end{align*}
Similarly we show that $\mathcal{J}\setminus \{\tfrac{-k}{2}\}\subset \mathcal{J}_{-m}$ for $m>2$.
\end{proof}

\begin{proposition}\label{prop2.8} With the conditions as above, we have
\begin{enumerate}
  \item $\mathcal{I}_0=\mathcal{J}$.
  \item $\mathcal{J}\setminus \{\tfrac{-k}{2},\tfrac{k}{m}\}\subset \mathcal{I}_m$ for any $m\neq 0$.
  \item $\mathcal{J}\setminus \{\tfrac{-k}{2},\tfrac{-k}{2m}\}\subset \mathcal{I}_m$ for any $m\neq 0$.
\end{enumerate}
\end{proposition}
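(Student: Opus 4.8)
The plan is to reduce all three parts to a single fact established just before Lemma~\ref{lemma2.5}, namely that $k\in\mathcal{I}$ (since $f(k)=0$), combined with Corollary~\ref{cor2.7} and one well-chosen application of Lemma~\ref{lemma2.5}. Part (1) is immediate from the definition: $\mathcal{I}_0=\{n\in\mathcal{J}\mid 0\cdot n+k\in\mathcal{I}\}=\{n\in\mathcal{J}\mid k\in\mathcal{I}\}$, and since $k\in\mathcal{I}$ the defining condition is vacuous, whence $\mathcal{I}_0=\mathcal{J}$.

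For (2) and (3) the common engine is the following. Whenever $n\in\mathcal{J}\setminus\{-\tfrac{k}{2}\}$, Corollary~\ref{cor2.7} places $n$ in $\bigcap_{j}\mathcal{J}_j$, so in particular both $mn\in\mathcal{J}$ and $-mn\in\mathcal{J}$. Having produced an element of $\mathcal{J}$, I will feed it into Lemma~\ref{lemma2.5} to transport membership in $\mathcal{I}$. The target in each case is to show $mn+k\in\mathcal{I}$, which is exactly the assertion $n\in\mathcal{I}_m$. The degenerate value $n=0$ is handled directly in both parts, since then $mn+k=k\in\mathcal{I}$; so I may assume $n\neq0$, hence $mn\neq0$.

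For (2) I will apply Lemma~\ref{lemma2.5} with the $\mathcal{J}$-element $mn$ and the test element $k$: the distinctness hypotheses $k\neq mn$ and $k\neq mn+k$ translate precisely into $n\neq\tfrac{k}{m}$ and $n\neq0$, and since $k\in\mathcal{I}$ the lemma then yields $k+mn\in\mathcal{I}$, i.e. $n\in\mathcal{I}_m$. For (3) I will instead take the $\mathcal{J}$-element $-mn$ and the test element $mn+k$; here their sum is $(mn+k)+(-mn)=k\in\mathcal{I}$, while the two distinctness hypotheses $mn+k\neq-mn$ and $mn+k\neq-mn+k$ become $n\neq-\tfrac{k}{2m}$ and $n\neq0$, so Lemma~\ref{lemma2.5} again forces $mn+k\in\mathcal{I}$.

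The only delicate point is the bookkeeping of the excluded values: one must check that the distinctness hypotheses of Lemma~\ref{lemma2.5} fail exactly at the points $-\tfrac{k}{2}$ (needed to guarantee $\pm mn\in\mathcal{J}$ through Corollary~\ref{cor2.7}), at $\tfrac{k}{m}$ for part (2), and at $-\tfrac{k}{2m}$ for part (3), and that the value $n=0$ (which lies in each allowed set, as $0\neq-\tfrac{k}{2},\tfrac{k}{m},-\tfrac{k}{2m}$) is disposed of directly. There is no genuine analytic obstacle here; the entire content is the correct choice of the pair consisting of a $\mathcal{J}$-element and a test element whose sum is the known $\mathcal{I}$-element $k$, so that Lemma~\ref{lemma2.5} propagates $k\in\mathcal{I}$ to $mn+k\in\mathcal{I}$.
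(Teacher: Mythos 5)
Your proof is correct and follows essentially the same route as the paper: part (1) from $k\in\mathcal{I}$, and parts (2)--(3) by using Corollary~\ref{cor2.7} to place $\pm mn$ in $\mathcal{J}$ and then applying Lemma~\ref{lemma2.5} with a test element whose sum with that $\mathcal{J}$-element is $k\in\mathcal{I}$. Your choice of the pair $(-mn,\,mn+k)$ for (3) is exactly the paper's argument, and your treatment of (2) is the ``similar'' case the paper leaves to the reader.
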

\begin{proof}
(1) follows from the fact that $k\in \mathcal{I}$.  We only give a detailed proof of (3) and the proof of (2) is similar.

Let $m$ be a fixed non-zero integer.  Since $0\in \mathcal{J}$ and $k\in \mathcal{I}$, we show that $0\in \mathcal{I}_m$.
Let $n_0$ be an arbitrary nonzero integer in $\mathcal{J}\setminus \{\tfrac{-k}{2},\tfrac{-k}{2m}\}$.
Then we have $k+mn_0\neq -mn_0$ and
$k+mn_0\neq -mn_0+k$. By Corollary~\ref{cor2.7}, we have $-mn_0\in \mathcal{J}$. Hence by Lemma~\ref{lemma2.5} and since $mn_0+k-mn_0=k\in \mathcal{I}$, we have
$mn_0+k\in \mathcal{I}$.
\end{proof}

By Proposition~\ref{prop2.8}, we get the following result.
\begin{corollary}\label{cor2.9} With the conditions as above, we have
\begin{align*}
\mathcal{J}\setminus \{\tfrac{-k}{2}\}\subset \bigcap_{m\in \mathbb{Z}} \mathcal{I}_m.
\end{align*}
\end{corollary}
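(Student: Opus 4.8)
The plan is to read the corollary off directly from Proposition~\ref{prop2.8}, handling the case $m=0$ separately and then combining parts (2) and (3) for each nonzero $m$.

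First I would dispose of $m=0$: part (1) of Proposition~\ref{prop2.8} gives $\mathcal{I}_0=\mathcal{J}$, so $\mathcal{J}\setminus\{\tfrac{-k}{2}\}\subset\mathcal{J}=\mathcal{I}_0$ with nothing to check.

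Next, fixing $m\neq 0$, I would invoke the two containments supplied by parts (2) and (3),
$$
\mathcal{J}\setminus\{\tfrac{-k}{2},\tfrac{k}{m}\}\subset\mathcal{I}_m,\qquad
\mathcal{J}\setminus\{\tfrac{-k}{2},\tfrac{-k}{2m}\}\subset\mathcal{I}_m,
$$
and take their union. Writing $A=\mathcal{J}\setminus\{\tfrac{-k}{2}\}$, these read $A\setminus\{\tfrac{k}{m}\}\subset\mathcal{I}_m$ and $A\setminus\{\tfrac{-k}{2m}\}\subset\mathcal{I}_m$. The one point worth verifying is that the two omitted elements are genuinely distinct: the equality $\tfrac{k}{m}=\tfrac{-k}{2m}$ would force $3k=0$, which is impossible since $k\neq 0$. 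Consequently every element of $A$ survives in at least one of the two punctured sets, whence $A=(A\setminus\{\tfrac{k}{m}\})\cup(A\setminus\{\tfrac{-k}{2m}\})\subset\mathcal{I}_m$.

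Putting the two cases together yields $\mathcal{J}\setminus\{\tfrac{-k}{2}\}\subset\mathcal{I}_m$ for every $m\in\mathbb{Z}$, which is exactly the asserted inclusion into $\bigcap_{m\in\mathbb{Z}}\mathcal{I}_m$. I do not anticipate any real obstacle here: all the substance is already carried by Proposition~\ref{prop2.8}, and the corollary amounts to nothing more than unioning its parts (2) and (3), with the lone arithmetic check $\tfrac{k}{m}\neq\tfrac{-k}{2m}$.
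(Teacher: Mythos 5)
Your proof is correct and follows exactly the route the paper intends: the paper simply states that Corollary~\ref{cor2.9} follows from Proposition~\ref{prop2.8}, and your argument supplies the only details needed, namely the $m=0$ case via part (1) and, for $m\neq 0$, the union of the two punctured sets from parts (2) and (3), justified by the check that $\tfrac{k}{m}\neq\tfrac{-k}{2m}$ when $k\neq 0$.
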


\begin{proposition}\label{prop2.10} With the conditions as above, let $n\in \mathcal{J}\setminus \{\tfrac{-k}{2}\}$ and $n\ne 0$. Then we have $n\nmid k$, and
for any $m\in \mathbb{Z}$,
\begin{enumerate}
  \item if $m\in \mathcal{I}$, then $m+n\mathbb{Z}\in \mathcal{I}$;
  \item if $m\in \mathcal{J}$, then $m+n\mathbb{Z}\in \mathcal{J}$.
\end{enumerate}
\end{proposition}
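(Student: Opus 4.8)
The plan is to leverage the two corollaries just established, which record exactly the two facts one needs about a good index $n\in\mathcal{J}\setminus\{-\tfrac{k}{2}\}$: by Corollary~\ref{cor2.7} every multiple $jn$ lies in $\mathcal{J}$, and by Corollary~\ref{cor2.9} every shifted multiple $jn+k$ lies in $\mathcal{I}$. These will be combined with Lemma~\ref{lemma2.5} to propagate $\mathcal{I}$-membership (hence also $\mathcal{J}$-membership) along the entire coset $m+n\mathbb{Z}$.

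First I would settle $n\nmid k$. Suppose to the contrary that $k=j_0 n$ for some $j_0\in\mathbb{Z}$. Corollary~\ref{cor2.7} gives $mn\in\mathcal{J}$ for every $m\in\mathbb{Z}$, so taking $m=-j_0$ yields $-k\in\mathcal{J}$. But $-k\in\mathcal{I}$ was established just before Lemma~\ref{lemma2.5}, and $\mathcal{I}\cap\mathcal{J}=\varnothing$, a contradiction. Hence $n\nmid k$.

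For the coset statements, since $\mathcal{I}$ and $\mathcal{J}$ partition $\mathbb{Z}$, it suffices to prove (1), i.e. that $p\in\mathcal{I}$ implies $p\pm n\in\mathcal{I}$ for every $p\in\mathcal{I}$, and then to iterate. To show $p\in\mathcal{I}\Rightarrow p+n\in\mathcal{I}$: since $p\in\mathcal{I}$ and $n\in\mathcal{J}$ we have $p\neq n$, so Lemma~\ref{lemma2.5} (with $\mathcal{J}$-element $n$) delivers the conclusion whenever also $p\neq n+k$; in the one remaining case $p=n+k$ we have $p+n=2n+k\in\mathcal{I}$ directly by Corollary~\ref{cor2.9}. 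The step $p\in\mathcal{I}\Rightarrow p-n\in\mathcal{I}$ is symmetric: apply Lemma~\ref{lemma2.5} to the point $q=p-n$, relating $q$ and $q+n=p$, which is valid unless $q\in\{n,n+k\}$; the case $p=2n$ is impossible since $2n\in\mathcal{J}$ by Corollary~\ref{cor2.7} while $p\in\mathcal{I}$, and in the leftover case $p=2n+k$ one has $p-n=n+k\in\mathcal{I}$ again by Corollary~\ref{cor2.9}. Induction on $|j|$ then yields $m+jn\in\mathcal{I}$ for all $j\in\mathbb{Z}$, which is (1). Statement (2) follows from the partition: if $m\in\mathcal{J}$ but $m+jn\in\mathcal{I}$ for some $j$, then applying (1) to $m+jn$ with shift $-jn$ would force $m\in\mathcal{I}$, a contradiction.

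I expect the only genuine obstacle to be the careful bookkeeping of the two exceptional values $p=n$ and $p=n+k$ appearing in Lemma~\ref{lemma2.5}, since these are precisely the boundary points where the direct propagation can fail. The role of Corollaries~\ref{cor2.7} and \ref{cor2.9} is exactly to resolve those coincidences by hand, through the explicit memberships $2n\in\mathcal{J}$ and $2n+k\in\mathcal{I}$; once these are in place the induction is routine.
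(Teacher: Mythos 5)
Your proof is correct and rests on exactly the same three ingredients as the paper's: Lemma~\ref{lemma2.5} to propagate membership along the coset, Corollary~\ref{cor2.7} to place $n\mathbb{Z}$ in $\mathcal{J}$, and Corollary~\ref{cor2.9} to place $k+n\mathbb{Z}$ in $\mathcal{I}$. The only difference is organizational — you run a uniform one-step induction $p\mapsto p\pm n$ and patch the exceptional points $p=n+k$, $p=2n$, $p=2n+k$ by hand, whereas the paper splits into cases according to whether $m$ lies in $n\mathbb{Z}$, in $k+n\mathbb{Z}$, or in neither — and your version makes explicit the iteration that the paper leaves implicit.
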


\begin{proof}
If $m$ is neither in $n\mathbb{Z}$ nor in $k+n\mathbb{Z}$, the conclusion holds due to Lemma~\ref{lemma2.5}.
On the other hand, by Corollary \ref{cor2.7}, we show that $n\in \bigcap\limits_{l\in \mathbb{Z}} \mathcal{J}_l$.
Hence $n\mathbb{Z}\subset \mathcal{J}$.
Furthermore, by Corollary \ref{cor2.9} and the fact that $k\in \mathcal{I}$, we have $n\in \bigcap\limits_{l\in \mathbb{Z}} \mathcal{I}_l$.
Thus $k+n\mathbb{Z}\subset \mathcal{I}$. Therefore for any $m\in n\mathbb{Z}$ or $m\in k+n\mathbb{Z}$, if $m\in \mathcal{I}$, then $m\in k+n\mathbb{Z}$ and hence
$m+n\mathbb{Z}\in \mathcal{I}$, and if $m\in \mathcal{J}$, then $m\in n\mathbb{Z}$ and hence $m+n\mathbb{Z}\in \mathcal{J}$.
Moreover $n\nmid k$. Otherwise we have $n\mathbb{Z}=k+n\mathbb{Z}\subset \mathcal{I}\cap \mathcal{J}$, which is a contradiction.
\end{proof}

For any $m,n\in\mathbb{Z}$ (not both zero), let ${\rm gcd}(m,n)$ denote the greatest common divisor of $m$ and $n$.

\begin{corollary}\label{cor2.11} With the conditions as above, if $n_1\in \mathcal{J}$, $n_2\in \mathcal{J}\setminus \{0,\tfrac{-k}{2}\}$, then ${\rm gcd}(n_1,n_2)\mathbb{Z}\subset \mathcal{J}$.
\end{corollary}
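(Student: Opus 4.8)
The plan is to read Proposition~\ref{prop2.10} as a periodicity statement: for any admissible period $n\in\mathcal{J}\setminus\{0,\tfrac{-k}{2}\}$, membership in $\mathcal{J}$ depends only on the residue modulo $n$, and in particular $n\mathbb{Z}\subset\mathcal{J}$ (apply part~(2) with $m=0$, which lies in $\mathcal{J}$ since $(0+k)f(0)-kf(0)=0$). The corollary then asks us to fuse the two periods $n_1$ and $n_2$ into their greatest common divisor, so the natural engine is B\'ezout's identity rather than a step-by-step Euclidean descent; the latter would force us to certify that the forbidden value $\tfrac{-k}{2}$ is avoided at every intermediate remainder, which the one-shot B\'ezout argument sidesteps.

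First I would isolate the clean symmetric case as a lemma: if $a,b\in\mathcal{J}\setminus\{0,\tfrac{-k}{2}\}$, then ${\rm gcd}(a,b)\mathbb{Z}\subset\mathcal{J}$. Writing $d={\rm gcd}(a,b)$, choose $\alpha,\beta\in\mathbb{Z}$ with $\alpha a+\beta b=d$. For an arbitrary $l\in\mathbb{Z}$, the element $l\alpha a$ lies in $a\mathbb{Z}\subset\mathcal{J}$ by the periodicity in $a$; now applying the periodicity in $b$ (Proposition~\ref{prop2.10}(2) with modulus $b$ and $m=l\alpha a\in\mathcal{J}$) gives $l\alpha a+b\mathbb{Z}\subset\mathcal{J}$. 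Since $ld=l\alpha a+l\beta b$ belongs to this coset, we get $ld\in\mathcal{J}$, and letting $l$ range over $\mathbb{Z}$ yields $d\mathbb{Z}\subset\mathcal{J}$.

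It remains to remove the asymmetry in the hypotheses, where $n_2$ is an admissible period but $n_1$ is only assumed to lie in $\mathcal{J}$, so a priori $n_1\in\{0,\tfrac{-k}{2}\}$ is possible and $n_1$ cannot itself be used as a modulus. The idea is to replace $n_1$ by another representative of its class modulo $n_2$. By Proposition~\ref{prop2.10}(2) with modulus $n_2$ and $m=n_1$, the entire coset $n_1+n_2\mathbb{Z}$ lies in $\mathcal{J}$; since $n_2\neq 0$ this coset is an infinite arithmetic progression, hence contains at most one integer equal to $0$ and at most one equal to $\tfrac{-k}{2}$, so we may select $m\in(n_1+n_2\mathbb{Z})\setminus\{0,\tfrac{-k}{2}\}\subset\mathcal{J}\setminus\{0,\tfrac{-k}{2}\}$. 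As $m$ and $n_1$ differ by a multiple of $n_2$, we have ${\rm gcd}(m,n_2)={\rm gcd}(n_1,n_2)$, and the lemma applied to the admissible pair $(m,n_2)$ delivers ${\rm gcd}(n_1,n_2)\mathbb{Z}\subset\mathcal{J}$.

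I expect the only genuine obstacle to be precisely this bookkeeping around the two excluded residues $0$ and $\tfrac{-k}{2}$: one must guarantee that every modulus fed into Proposition~\ref{prop2.10} is admissible, which is exactly why I invoke B\'ezout a single time and only ever use the periodicity with the fixed good modulus $n_2$ (plus the one replacement $m$). The remaining ingredients are elementary, namely that $0\in\mathcal{J}$, that an infinite coset with nonzero step avoids any two prescribed integers, and that translating a generator by a multiple of $n_2$ leaves its gcd with $n_2$ unchanged.
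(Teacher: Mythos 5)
Your proof is correct and follows essentially the same route as the paper: both arguments amount to showing $n_1\mathbb{Z}+n_2\mathbb{Z}\subset\mathcal{J}$ by two applications of the periodicity in Proposition~\ref{prop2.10}, and both dispose of a bad $n_1$ by replacing it with another representative of its coset modulo $n_2$ (the paper takes $n_1+n_2$; you take any representative avoiding $0$ and $\tfrac{-k}{2}$). Your version is in fact slightly tidier, since choosing a representative outside $\{0,\tfrac{-k}{2}\}$ also covers the degenerate possibility that the shifted generator equals $0$, which the paper's case analysis glosses over.
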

\begin{proof}
If $n_1\neq \tfrac{-k}{2}$, then by Proposition~\ref{prop2.10}, we show that for every $m_1$, $m_2\in\mathbb{Z}$,
$n_1m_1+n_2m_2\in \mathcal{J}$. Furthermore, we have ${\rm gcd}(n_1,n_2)\mathbb{Z}=n_1\mathbb{Z}+n_2\mathbb{Z}$. Thus
${\rm gcd}(n_1,n_2)\mathbb{Z}\subset \mathcal{J}$.

If $n_1=\tfrac{-k}{2}\in \mathcal{J}$, then by Proposition~\ref{prop2.10}, we show that $n_1+n_2\in \mathcal{J}$.
On the other hand, we have $n_2$, $n_1+n_2\in \mathcal{J}\setminus \{\tfrac{-k}{2}\}$. Hence gcd$(n_1+n_2, n_2)\mathbb{Z}\subset \mathcal{J}$.
Since gcd$(n_1+n_2, n_2)\mathbb{Z}={\rm gcd} (n_1,n_2)\mathbb{Z}$, we have ${\rm gcd} (n_1,n_2)\mathbb{Z}\subset \mathcal{J}$.
\end{proof}

\begin{proposition}\label{prop3}
Let $f$ be a $\mathbb C$-valued function defined on $\mathbb{Z}$ satisfying Eq.~(\ref{weight 0}). Suppose that $f(0)\neq 0$ and $k\ne 0$.
If $\tfrac{-k}{2}\in \mathbb{Z}$ and $\tfrac{-k}{2}\in \mathcal{J}$, then $\mathcal{J}=\{0,\tfrac{-k}{2}\}$, and in this case,
\begin{equation}f(m)=\delta_{m,0}f(0)+2
\delta_{m,\frac{-k}{2}}f(0),\;\;\forall m\in \mathbb{Z}.\end{equation}
\end{proposition}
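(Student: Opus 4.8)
The plan is to first determine the set $\mathcal{J}$ exactly, and then read off $f$ from the defining conditions of $\mathcal{I}$ and $\mathcal{J}$. Since $(0+k)f(0)-kf(0)=0$, the integer $0$ always lies in $\mathcal{J}$, and $\tfrac{-k}{2}\in\mathcal{J}$ by hypothesis, so the inclusion $\{0,\tfrac{-k}{2}\}\subset\mathcal{J}$ is immediate. The substantive content is the reverse inclusion $\mathcal{J}\subset\{0,\tfrac{-k}{2}\}$, which I would establish by contradiction.

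Suppose there exists $n\in\mathcal{J}\setminus\{0,\tfrac{-k}{2}\}$. Applying Corollary~\ref{cor2.11} with $n_1=\tfrac{-k}{2}\in\mathcal{J}$ and $n_2=n\in\mathcal{J}\setminus\{0,\tfrac{-k}{2}\}$ yields $d\mathbb{Z}\subset\mathcal{J}$, where $d=\gcd(\tfrac{-k}{2},n)$ is a positive integer (as $n\neq0$). The contradiction now comes from divisibility: since $d$ divides $\tfrac{-k}{2}$ it also divides $k=2\cdot\tfrac{k}{2}$, whence $k\in d\mathbb{Z}\subset\mathcal{J}$. But $k\in\mathcal{I}$ was established before Lemma~\ref{lemma2.5}, and $\mathcal{I}\cap\mathcal{J}=\varnothing$. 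This is impossible, so no such $n$ exists and $\mathcal{J}=\{0,\tfrac{-k}{2}\}$.

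With $\mathcal{J}=\{0,\tfrac{-k}{2}\}$ fixed, its complement is $\mathcal{I}=\mathbb{Z}\setminus\{0,\tfrac{-k}{2}\}$, so $f(m)=0$ for every $m\neq0,\tfrac{-k}{2}$. Evaluating the defining relation $(m+k)f(m)-kf(0)=0$ of $\mathcal{J}$ at $m=\tfrac{-k}{2}$ gives $\tfrac{k}{2}f(\tfrac{-k}{2})=kf(0)$, hence $f(\tfrac{-k}{2})=2f(0)$; together with the value $f(0)$ at $m=0$ this reproduces the claimed formula $f(m)=\delta_{m,0}f(0)+2\delta_{m,\frac{-k}{2}}f(0)$. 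I expect the reverse inclusion to be the only genuine obstacle, the rest being bookkeeping. The key recognition is that the gcd statement of Corollary~\ref{cor2.11}, combined with the fact that $\gcd(\tfrac{-k}{2},n)$ necessarily divides $k$ while $k\in\mathcal{I}$, is exactly what forces the contradiction.
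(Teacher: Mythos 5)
Your proof is correct, and its skeleton matches the paper's: both assume an extra element $n\in\mathcal{J}\setminus\{0,\tfrac{-k}{2}\}$ and feed the pair $(\tfrac{-k}{2},n)$ into Corollary~\ref{cor2.11} to get $d\mathbb{Z}\subset\mathcal{J}$ with $d=\gcd(\tfrac{-k}{2},n)$. Where you diverge is in how the contradiction is extracted, and your version is genuinely shorter. The paper first rules out $d=1$, then invokes Proposition~\ref{prop2.10} (nonzero elements of $\mathcal{J}\setminus\{\tfrac{-k}{2}\}$ cannot divide $k$) to force $d=\tfrac{-k}{2}$, writes $n=\tfrac{k}{2}m_0$, and finally runs an induction via Lemma~\ref{lemma2.5} to show $\tfrac{k}{2}m\in\mathcal{I}$ for all $m\neq0,-1$, contradicting $n\in\mathcal{J}$. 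You instead observe that $d\mid\tfrac{-k}{2}$ already gives $d\mid k$, hence $k\in d\mathbb{Z}\subset\mathcal{J}$, which collides immediately with the fact $k\in\mathcal{I}$ established just before Lemma~\ref{lemma2.5}. This one-line observation makes the appeal to Proposition~\ref{prop2.10} and the entire induction unnecessary; nothing is lost, since the only input you need beyond Corollary~\ref{cor2.11} is $k\in\mathcal{I}$ and $\mathcal{I}\cap\mathcal{J}=\varnothing$, both already in hand. The concluding computation of $f$ from the defining relation of $\mathcal{J}$ at $m=0$ and $m=\tfrac{-k}{2}$ is the same bookkeeping the paper leaves implicit, and it is correct.
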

\begin{proof}
It is obvious that $\{0, \tfrac{-k}{2}\}\subset \mathcal{J}$. Conversely, if there exists an $n_0\in \mathcal{J}$
such that $n_0\neq 0$, $\tfrac{-k}{2}$,  then by Corollary \ref{cor2.11}, we have
${\rm gcd} (n_0,\tfrac{-k}{2})\mathbb{Z}\subset \mathcal{J}$. Since $\mathcal{J}\neq \mathbb{Z}$, we have
${\rm gcd} (n_0,\tfrac{-k}{2})\neq 1$. Set $d={\rm gcd}(n_0,\tfrac{-k}{2})$. Then $d|\tfrac{-k}{2}$. Hence
$d|k$. By Proposition~\ref{prop2.10}, we show that $d=\tfrac{-k}{2}$. Thus $n_0=\tfrac{k}{2}m_0$
for some $m_0\neq 0$, $-1$. However, by Lemma~\ref{lemma2.5} and induction on $m$ (note that $\pm k\in \mathcal{I}$), one can show that
$\tfrac{k}{2}m\in \mathcal{I}$ for any $m\neq0$, $-1$. It is a contradiction.
Hence $\mathcal{J}=\{0,-\tfrac{k}{2}\}$.
\end{proof}

\begin{proposition}\label{prop4}
Let $f$ be a $\mathbb C$-valued function on $\mathbb{Z}$ satisfying Eq.~(\ref{weight 0}). Suppose that $f(0)\neq 0$ and $k\ne 0$.
If $\tfrac{-k}{2}\notin \mathcal{J}$, and $\{0\}\subsetneqq \mathcal{J}$, then
there exists a non-zero integer $l\in \mathcal{J}$, $l\nmid k$ such that $|l|$ is minimal.
In this case, we have
\begin{align*}
\mathcal{J}=l\mathbb{Z},
\end{align*}
and thus
 \begin{equation}
        f(m)=
        \tfrac{k}{m+k}\delta_{m,l\mathbb{Z}}f(0),
    \end{equation}
where
\[
\delta_{m,l\mathbb{Z}}:=
\sum_{n\in\mathbb{Z}}\delta_{m,ln} =
\begin{cases}
1 & m\in l\mathbb{Z};\\
0 & m\notin l\mathbb{Z}.
\end{cases}
\]
\end{proposition}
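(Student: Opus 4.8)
The plan is to first pin down a natural generator for $\mathcal{J}$ and then read off $f$ from the defining relations of $\mathcal{I}$ and $\mathcal{J}$. Since $\{0\}\subsetneqq\mathcal{J}$, the set of nonzero elements of $\mathcal{J}$ is nonempty, and being a set of integers, the associated set $\{|n| : n\in\mathcal{J},\, n\neq 0\}$ of positive integers has a least element; I would choose $l\in\mathcal{J}$ with $l\neq 0$ and $|l|$ minimal (replacing $l$ by $-l$ if necessary, we may take $l>0$). The standing hypothesis $\tfrac{-k}{2}\notin\mathcal{J}$ guarantees that $l\in\mathcal{J}\setminus\{\tfrac{-k}{2}\}$, so Proposition~\ref{prop2.10} applies to $l$ and yields $l\nmid k$; this establishes the first assertion.

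Next I would prove $\mathcal{J}=l\mathbb{Z}$. The inclusion $l\mathbb{Z}\subset\mathcal{J}$ is immediate from Proposition~\ref{prop2.10}(2) applied to $m=0\in\mathcal{J}$ with the distinguished element $l$, which gives $0+l\mathbb{Z}\subset\mathcal{J}$. For the reverse inclusion, suppose toward a contradiction that some $n_0\in\mathcal{J}$ satisfies $n_0\notin l\mathbb{Z}$. Then $n_0\neq 0$, and since $\tfrac{-k}{2}\notin\mathcal{J}$ we also have $n_0\neq\tfrac{-k}{2}$, so $n_0\in\mathcal{J}\setminus\{0,\tfrac{-k}{2}\}$ and Corollary~\ref{cor2.11} (with $n_1=l$, $n_2=n_0$) gives ${\rm gcd}(l,n_0)\mathbb{Z}\subset\mathcal{J}$. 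Setting $d={\rm gcd}(l,n_0)$, the assumption $l\nmid n_0$ forces $d\neq l$, so $d$ is a proper positive divisor of $l$ and $0<d<|l|$; but then $d$ is a nonzero element of $\mathcal{J}$ with $|d|<|l|$, contradicting the minimality of $|l|$. Hence $\mathcal{J}=l\mathbb{Z}$.

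Finally I would compute $f$ on each of the two complementary sets. On $\mathcal{I}=\mathbb{Z}\setminus\mathcal{J}=\mathbb{Z}\setminus l\mathbb{Z}$ we have $f(m)=0$ by the definition of $\mathcal{I}$. On $\mathcal{J}=l\mathbb{Z}$ the defining relation $(m+k)f(m)-kf(0)=0$ holds; here $m+k\neq 0$, for otherwise $-k\in l\mathbb{Z}$ would give $l\mid k$, contradicting $l\nmid k$, so I may solve to obtain $f(m)=\tfrac{k}{m+k}f(0)$. Combining the two cases yields $f(m)=\tfrac{k}{m+k}\delta_{m,l\mathbb{Z}}f(0)$ for all $m\in\mathbb{Z}$, with the single exceptional point $m=-k$ lying in $\mathcal{I}$ and contributing $0$ through the Kronecker factor. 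The only real content is the gcd/minimality argument of the second step; the main point to watch is that every invocation of Proposition~\ref{prop2.10} and Corollary~\ref{cor2.11} stays inside $\mathcal{J}\setminus\{\tfrac{-k}{2}\}$, which is precisely what the hypothesis $\tfrac{-k}{2}\notin\mathcal{J}$ supplies, so no separate case analysis around $\tfrac{-k}{2}$ is required.
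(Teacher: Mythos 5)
Your proof is correct and follows essentially the same route as the paper: choose $l\in\mathcal{J}$ nonzero with $|l|$ minimal, get $l\mathbb{Z}\subset\mathcal{J}$ from Proposition~\ref{prop2.10}, and rule out any $n_0\in\mathcal{J}\setminus l\mathbb{Z}$ by producing a smaller nonzero element of $\mathcal{J}$ (you do this via the gcd of Corollary~\ref{cor2.11}, the paper by citing Proposition~\ref{prop2.10} directly — the same mechanism). Your explicit derivation of the formula for $f$, including the observation that $m+k\neq0$ on $\mathcal{J}$ because $l\nmid k$, fills in a step the paper leaves to the reader.
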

\begin{proof}
Since $ \{0\}\subsetneqq \mathcal{J}$, there exists an integer $l\in \mathcal{J}$ such that $l\neq 0$, $|l|$ is minimal and $l\nmid k$.
By Proposition~\ref{prop2.10} and the minimality of $|l|$, we have $m\in \mathcal{I}$ for any $m\notin l\mathbb{Z}$.
On the other hand, since $0\in \mathcal{J}$ and by Proposition~\ref{prop2.10} again, we have $l\mathbb{Z}\subset \mathcal{J}$. Hence $\mathcal{J}=l\mathbb{Z}$ and thus the conclusion holds.
\end{proof}

In summary, we obtain the following classification.

\begin{theorem}\label{Thm2.14}
A homogeneous Rota-Baxter operator of weight $0$ on the Witt algebra $W$ must be one of the following operators.
\begin{enumerate}
	\item[(I)] $R_k^{\alpha}(L_m) = \alpha\delta_{m+2k,0}L_{m+k},\;\;\forall m\in\mathbb{Z}$,
	where $k\in\mathbb{Z}$ and $\alpha \in \mathbb C$.
	
	\item[(II)] $R_{2k}^{'\beta}(L_m) = (\beta\delta_{m+2k,0}+2\beta\delta_{m+3k,0})L_{m+2k},\;\;\forall m\in\mathbb{Z}$,
	where $k\in\mathbb{Z}^{\ast}$ and $\beta \in \mathbb{C}^{\ast}$.

	\item[(III)] $R_k^{l,\gamma}(L_m) = \tfrac{k}{m+2k}\gamma\delta_{m+k,l\mathbb{Z}} L_{m+k},\;\;\forall m\in\mathbb{Z}$,
	where $k,l\in\mathbb{Z}^{\ast}$, $l\nmid k$ and $\gamma \in \mathbb{C}^{\ast}$.
\end{enumerate}
Moreover,
\begin{enumerate}
	\item $\left\{R_0^{\alpha}
	\middle|\alpha\in\mathbb{C}
	\right\}$ are all the homogeneous Rota-Baxter
	operators of weight $0$ with degree $0$ on the Witt algebra $W$.
	\item If $k\neq 0$ and is odd, then
	$\left\{R_k^{\alpha}, R_k^{l,\beta}
	\middle|\alpha\in\mathbb{C}, \beta\in\mathbb{C}^{\ast}, l\in\mathbb{Z}^{\ast}, l\nmid k
	\right\}$
	are all the homogeneous Rota-Baxter
	operators of weight $0$ with degree $k$ on $W$.
	\item If $k\neq 0$ and is even, then
	$\left\{R_k^{\alpha}, R_k^{'\beta}, R_k^{l,\gamma}
	\middle|\alpha\in\mathbb{C}, \beta,\gamma\in\mathbb{C}^{\ast}, l\in\mathbb{Z}^{\ast}, l\nmid k
	\right\}$
	are all the homogeneous Rota-Baxter
	operators of weight $0$ with degree $k$ on $W$.
\end{enumerate}
\end{theorem}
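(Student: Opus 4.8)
The plan is to prove Theorem~\ref{Thm2.14} by assembling the preceding propositions through an exhaustive case analysis on the degree $k$ and on the value $f(0)$, and then converting each resulting function $f$ into an operator via the defining identity $R_k(L_m)=f(m+k)L_{m+k}$ from~\eqref{eq:HRB}. The whole statement is really a bookkeeping synthesis: Propositions~\ref{prop1}, \ref{prop2}, \ref{prop3}, and~\ref{prop4} already pin down $f$ in each situation, so the work is to match each $f$ with one of the three families (I)--(III) and then to sort the output by the degree $k$.

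First I would dispose of the degree-zero case: Proposition~\ref{prop1} gives $f(m)=\alpha\delta_{m,0}$, which after the substitution $R_0(L_m)=f(m)L_m$ is exactly family (I) with $k=0$; this is item (1). For $k\neq 0$ I would split on $f(0)$. If $f(0)=0$, Proposition~\ref{prop2} gives $f(m)=\alpha\delta_{m+k,0}$, hence $R_k(L_m)=\alpha\delta_{m+2k,0}L_{m+k}$, which is family (I). If $f(0)\neq 0$, I would invoke the structural analysis of the sets $\mathcal I$ and $\mathcal J$ carried out in Lemma~\ref{lemma2.5} through Corollary~\ref{cor2.11}, and then split according to the position of $-\tfrac{k}{2}$: when $-\tfrac{k}{2}\in\mathbb Z$ lies in $\mathcal J$, Proposition~\ref{prop3} forces $\mathcal J=\{0,-\tfrac{k}{2}\}$ and, after evaluating $f$ at $m+k$, yields family (II); when $-\tfrac{k}{2}\notin\mathcal J$ but $\{0\}\subsetneqq\mathcal J$, Proposition~\ref{prop4} forces $\mathcal J=l\mathbb Z$ and yields family (III).

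The parity bookkeeping then follows mechanically. Since $-\tfrac{k}{2}\in\mathbb Z$ precisely when $k$ is even, the configuration feeding Proposition~\ref{prop3} can occur only for even $k$; this is exactly why family (II) appears in item (3) but is absent from item (2). To finish I would record that, for nonzero parameters, the three families are pairwise distinct---they are supported, as linear maps, on one, two, and infinitely many basis vectors respectively, so no two listed operators coincide---and I would verify the converse direction, namely that each displayed $f$ does solve the weight-zero equation~\eqref{weight 0}. The latter is a direct substitution relying only on~\eqref{Witt}, and it is the single place in the argument where an honest computation is needed.

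The main obstacle I anticipate is not any one computation but establishing that the case split is genuinely exhaustive. Propositions~\ref{prop3} and~\ref{prop4} between them cover the hypotheses ``$-\tfrac{k}{2}\in\mathcal J$'' and ``$-\tfrac{k}{2}\notin\mathcal J$ together with $\{0\}\subsetneqq\mathcal J$''; for $f(0)\neq 0$ this leaves the borderline configuration $\mathcal J=\{0\}$ with $-\tfrac{k}{2}\notin\mathcal J$, which is not directly addressed by either proposition. The delicate closing step is therefore to show that this degenerate case either cannot arise under~\eqref{weight 0} or else produces an operator already present in the list, so that no weight-zero Rota-Baxter operator slips between the hypotheses of the four propositions. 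This is the part I would write out in full detail.
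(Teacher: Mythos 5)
Your strategy is the same as the paper's: the published proof simply cites Propositions~\ref{prop1}, \ref{prop2}, \ref{prop3} and \ref{prop4} for the forward direction and leaves the converse to direct verification. What you add is the observation that these propositions do not exhaust the case $f(0)\neq 0$, $k\neq 0$: Proposition~\ref{prop3} assumes $-\tfrac{k}{2}\in\mathcal{J}$ and Proposition~\ref{prop4} assumes $\{0\}\subsetneqq\mathcal{J}$, so the configuration $\mathcal{J}=\{0\}$ (which satisfies neither hypothesis, since $0\in\mathcal{J}$ automatically) is left untreated. You have put your finger on exactly the weak point, and it is a genuine gap.

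However, the closing step you propose --- showing that $\mathcal{J}=\{0\}$ either cannot occur or reproduces a listed operator --- cannot be carried out, because neither alternative holds. If $\mathcal{J}=\{0\}$ then $f(m)=f(0)\delta_{m,0}$, and this function satisfies Eq.~(\ref{weight 0}) for every $k$: the left-hand side can be nonzero only when $m=n=0$, where the factor $m-n$ kills it, and the right-hand side can be nonzero only when $n=-m$, where $f(m)(2m+k)+f(-m)(2m-k)$ vanishes (both terms are zero for $m\neq 0$, and for $m=0$ it equals $f(0)k-f(0)k=0$). The corresponding operator is $R_k(L_{-k})=f(0)L_0$ and $R_k(L_m)=0$ for $m\neq -k$; one checks directly that it satisfies Eq.~(\ref{RB-lambda}) with $\lambda=0$, since $[L_0,L_0]=0$ kills the left-hand side and $R_k\bigl([R_k(L_{-k}),L_n]+[L_{-k},R_k(L_n)]\bigr)=-f(0)^2(n+k)\delta_{n,-k}L_0=0$. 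For $k\neq 0$ and $f(0)\neq 0$ this is a rank-one homogeneous Rota--Baxter operator of weight $0$ with degree $k$ supported at $L_{-k}$ rather than at $L_{-2k}$, so it is not of type (I); it is not of type (II) (which has rank two) or type (III) (which has infinite rank). Thus the degenerate case both arises and produces an operator absent from the list, so your proof cannot be completed as planned: an additional proposition covering $\mathcal{J}=\{0\}$, and correspondingly an additional family in the statement, would be needed.
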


\begin{proof}
The first part follows from Propositions~\ref{prop1},
\ref{prop2}, \ref{prop3} and \ref{prop4}. The second part
can be directly verified. 
\end{proof}

\begin{remark}\label{notation1}
	Note that
	\[
	R_k^{\alpha} = \alpha R_k^1, \;\;
	R_{2k}^{'\beta} = \beta R_{2k}^{'1}, \;\;
	R_k^{l,\gamma} = \gamma R_k^{l,1},
	\]
	for any $\alpha\in\mathbb{C}$ and $\beta,\gamma\in\mathbb{C}^{\ast}$.
	This partly explains our notation in the theorem.
\end{remark}

\begin{remark}\label{rek:RB0}
It is known that $R$ is a Rota-Baxter operator of weight $0$ on a Lie algebra $\frak g$ if and only if $\alpha R$ is a Rota-Baxter operator of weight $0$ on $\frak g$ for $0\ne \alpha\in \mathbb C$. So the set of Rota-Baxter operators of weight 0 on any Lie algebra carries an action of $\mathbb C^{\ast}$ by scalar multiplication. In this sense, the above theorem can be rewritten as follows.

A complete set of representatives of all
homogeneous Rota-Baxter operators of weight $0$ with degree $k$ on the Witt algebra $W$ under the action of $\mathbb{C}^{\ast}$ by scalar multiplication is
\begin{itemize}
	\item $\mathcal{R}^W_0=
	\left\{R_0^0=0, R_0^1\right\}$,
	if $k=0$;
	\item $\mathcal{R}^W_k=
	\left\{R_k^0=0, R_k^1, R_k^{l,1}\middle|l\in\mathbb{Z}^{\ast}, l\nmid k\right\}$,
	if $k\neq0$ and is odd;
	\item $\mathcal{R}^W_k=
	\left\{R_k^0=0, R_k^1, R_k^{'1}, R_k^{l,1}\middle|l\in\mathbb{Z}^{\ast}, l\nmid k\right\}$,
	if $k\neq0$ and is even.
\end{itemize}

\end{remark}

\subsection{Homogeneous Rota-Baxter operators of weight $1$ on the Witt algebra}

It is straightforward to show by definition that there does not exist any homogeneous Rota-Baxter operator  of weight $1$ with a nonzero
degree $k$ on the Witt algebra $W$.

Let $R_0$ be a homogeneous Rota-Baxter operator of weight $1$ with degree $0$ on the Witt algebra
$W$ satisfying Eq.~(\ref{eq:HRB}), that is,
\begin{equation}
R_0(L_m)=f(m)L_{m},\;\forall m\in \mathbb{Z}.
\end{equation}
Then by Eqs.~(\ref{RB-lambda}) and (\ref{Witt}), we show that the function $f$ satisfies the following equation:
\begin{equation}\label{Eq2.7}
    f(m)f(n)(m-n)=f(m+n)(f(m)+f(n)+1)(m-n),\;\forall m,n\in\mathbb{Z}.
  \end{equation}

Let $n=0$ in Eq. (\ref{Eq2.7}), then we have
\begin{align*}
 mf(m)(f(m)+1)=0,\;\;\forall m\in \mathbb{Z}.
\end{align*}
Set
\[
\mathcal{I}_1=\{m\in \mathbb{Z}| f(m)=0\},\;\;\mathcal{I}_2=\{m\in \mathbb{Z}|f(m)=-1\}.
\]
\begin{lemma}\label{lemma2.15}
Let $f$ be a $\mathbb C$-valued function defined on $\mathbb{Z}$ satisfying Eq.~(\ref{Eq2.7}). If $m,n\in \mathbb{Z}$ such that $m\ne n$ and $m, n\in \mathcal{I}_i$, then $m+n\in \mathcal{I}_i$ ($i=1,2$).
\end{lemma}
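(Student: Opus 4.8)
The plan is to reduce the defining relation Eq.~(\ref{Eq2.7}) to a simpler multiplicative identity and then substitute the two possible values of $f$. First I would observe that since $m\neq n$, the factor $m-n$ is nonzero, so dividing both sides of Eq.~(\ref{Eq2.7}) by $m-n$ yields
\begin{equation*}
f(m)f(n)=f(m+n)\bigl(f(m)+f(n)+1\bigr).
\end{equation*}
This is the key identity, and from here the two cases follow by direct substitution.

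For the case $i=1$, the hypothesis $m,n\in\mathcal{I}_1$ means $f(m)=f(n)=0$. The left-hand side of the reduced identity then vanishes, while the right-hand side becomes $f(m+n)(0+0+1)=f(m+n)$. Hence $f(m+n)=0$, i.e.\ $m+n\in\mathcal{I}_1$.

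For the case $i=2$, the hypothesis $m,n\in\mathcal{I}_2$ means $f(m)=f(n)=-1$. The left-hand side equals $(-1)(-1)=1$, while the right-hand side becomes $f(m+n)(-1-1+1)=-f(m+n)$. Hence $-f(m+n)=1$, so $f(m+n)=-1$, i.e.\ $m+n\in\mathcal{I}_2$.

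I do not expect any genuine obstacle here: the only point requiring care is the division by $m-n$, which is legitimate precisely because of the hypothesis $m\neq n$, and which also explains why the statement is phrased for distinct $m,n$ rather than arbitrary pairs. I therefore anticipate the whole argument being a short direct computation rather than an inductive or structural one.
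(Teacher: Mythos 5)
Your proof is correct and follows exactly the same route as the paper's (which simply states that Eq.~(\ref{Eq2.7}) implies the claim in each case): cancel the nonzero factor $m-n$ and substitute the two possible values of $f$. The paper leaves the computation implicit; your version just spells it out.
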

\begin{proof}
If $m\neq n$ and $m,n \in \mathcal{I}_1$, then Eq.~(\ref{Eq2.7}) implies $m+n\in \mathcal{I}_1$.
Similarly, if $m\neq n$ and $m,n \in \mathcal{I}_2$, then $m+n\in \mathcal{I}_2$.
\end{proof}

\begin{proposition}\label{prop2.16}
With the notations as above, if there exists a nonzero integer $m_0$ such that $m_0,-m_0\in \mathcal{I}_1$, then $\mathcal{I}_1,\mathcal{I}_2$  must be one of the following cases:
  \begin{enumerate}
    \item $\mathcal{I}_1=\{m\mid m\leqslant1\}, \mathcal{I}_2=\{m\mid m\geqslant2\}$;
    \item $\mathcal{I}_1=\{m\mid m\geqslant-1\}, \mathcal{I}_2=\{m\mid m\leqslant-2\}$;
    \item $\mathcal{I}_1=\mathbb{Z}, \mathcal{I}_2=\varnothing$.
  \end{enumerate}
\end{proposition}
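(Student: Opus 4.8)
The plan is to translate the statement into a fact about the two-colouring of $\mathbb{Z}$ recorded by the partition into $\mathcal{I}_1$ and $\mathcal{I}_2$. First I would pin down two structural facts. Putting $m=m_0$, $n=-m_0$ in Eq.~(\ref{Eq2.7}) gives $2m_0\,f(0)=0$, hence $f(0)=0$ and $0\in\mathcal{I}_1$; combined with the relation $mf(m)(f(m)+1)=0$ this shows that $\mathcal{I}_1$ and $\mathcal{I}_2$ are disjoint and cover $\mathbb{Z}$. Second, $\mathcal{I}_2$ contains no symmetric pair: if $p,-p\in\mathcal{I}_2$ with $p\neq0$, then Lemma~\ref{lemma2.15} forces $0=p+(-p)\in\mathcal{I}_2$, contradicting $0\in\mathcal{I}_1$. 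In particular, for every $p\neq0$ at least one of $p,-p$ lies in $\mathcal{I}_1$.

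The key reduction is to show $1,-1\in\mathcal{I}_1$. Assume $m_0>0$ (otherwise replace $m_0$ by $-m_0$, which does not change the hypothesis). By the previous remark at least one of $\pm1$ lies in $\mathcal{I}_1$. If $-1\in\mathcal{I}_1$, then starting from $m_0\in\mathcal{I}_1$ and repeatedly adding the distinct element $-1$, Lemma~\ref{lemma2.15} yields $m_0-1,m_0-2,\dots,1\in\mathcal{I}_1$, so $1\in\mathcal{I}_1$; the case $1\in\mathcal{I}_1$ is symmetric, descending from $-m_0$ by adding $1$ until one reaches $-1$. Either way $0,\pm1\in\mathcal{I}_1$.

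With $0,\pm1\in\mathcal{I}_1$ fixed, I would split on the membership of $2$. If $2\in\mathcal{I}_1$, then from $1,2\in\mathcal{I}_1$ an induction (adding $1$ at each step, the summands always being distinct) puts every positive integer in $\mathcal{I}_1$; one then examines $-2$. If moreover $-2\in\mathcal{I}_1$, the same induction on the negative side gives $\mathcal{I}_1=\mathbb{Z}$, which is case (3). If instead $-2\in\mathcal{I}_2$, I claim $\mathcal{I}_2=\{m\mid m\leqslant-2\}$: for any $j\geqslant3$, were $-j\in\mathcal{I}_1$, then since $j-2\geqslant1$ is a positive integer (hence in $\mathcal{I}_1$) and $-j\neq j-2$, Lemma~\ref{lemma2.15} would force $-2=(-j)+(j-2)\in\mathcal{I}_1$, a contradiction; this gives case (2). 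The remaining possibility $2\in\mathcal{I}_2$ forces $-2\in\mathcal{I}_1$ by the no-symmetric-pair fact, and the mirror-image argument (all negatives in $\mathcal{I}_1$, and $j\in\mathcal{I}_2$ for every $j\geqslant2$ since otherwise $2=j+(2-j)\in\mathcal{I}_1$) gives case (1).

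The routine inputs are the repeated applications of Lemma~\ref{lemma2.15}; the one point demanding care is the distinctness hypothesis in that lemma, which must be verified at every step — this is exactly what prevents deducing $2\in\mathcal{I}_1$ from $1\in\mathcal{I}_1$ alone and forces the cutoff to sit at $\pm2$. I expect the main conceptual obstacle to be the ``filling'' argument showing that once one ray begins in $\mathcal{I}_2$ it remains in $\mathcal{I}_2$: the device of writing $-2=(-j)+(j-2)$ (resp.\ $2=j+(2-j)$) as a sum of one far element and one opposite-sign element of $\mathcal{I}_1$ is what pins $\mathcal{I}_2$ down to a full half-line rather than a sparse subset.
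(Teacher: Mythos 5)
Your proof is correct and follows essentially the same route as the paper's: establish $0\in\mathcal{I}_1$ and that $\mathcal{I}_2$ contains no pair $\{p,-p\}$, deduce $\pm1\in\mathcal{I}_1$, then split on the membership of $\pm2$ and use the decomposition $2=m+(2-m)$ (resp.\ $-2=(-j)+(j-2)$) to force the whole half-line into $\mathcal{I}_2$. The only cosmetic difference is that you reach $\pm1\in\mathcal{I}_1$ by descending from $m_0$ in steps of $\mp1$, whereas the paper argues via the minimal positive element of $\mathcal{I}_1$; both hinge on the same distinctness check in Lemma~\ref{lemma2.15}.
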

\begin{proof}
By Lemma~\ref{lemma2.15}, $0=m_0+(-m_0)\in \mathcal{I}_1$.  Hence
Eq.~(\ref{Eq2.7}) with $n=-m\ne 0$ implies
$$f(m)f(-m)=0,\;\;\forall m\neq 0.$$
Therefore if $m\ne 0$ and $m \in \mathcal{I}_2$, then $-m\in \mathcal{I}_1$.

Let $l$ be the minimal positive integer such that $l\in \mathcal{I}_1$. Then $l=1$. Otherwise, $l\geqslant2$. So $1\in \mathcal{I}_2$. Hence $-1\in \mathcal{I}_1$. By Lemma~\ref{lemma2.15}, we show that $l-1=-1+l\in \mathcal{I}_1$ which contradicts with the minimality of $l$. Similarly, $-1\in \mathcal{I}_1$.

 \begin{enumerate}
    \item If $2\in \mathcal{I}_2$, then $-2\in \mathcal{I}_1$. Since $-1,-2\in \mathcal{I}_1$, by Lemma~\ref{lemma2.15} and induction, $\mathcal{I}_1$ contains all negative integers.
   Thus for any $m\geqslant2$, $2-m\in \mathcal{I}_1$. It implies $m\in \mathcal{I}_2$. Otherwise, by Lemma~\ref{lemma2.15}, $2=(2-m)+m\in \mathcal{I}_1$ which contradicts with the assumption that $2\in \mathcal{I}_2$. In this case, $\mathcal{I}_1=\{m\mid m\leqslant1\}, \mathcal{I}_2=\{m\mid m\geqslant2\}$.
    \item Similarly, if $-2\in \mathcal{I}_2$, then $\mathcal{I}_1=\{m\mid m\geqslant-1\}, \mathcal{I}_2=\{m\mid m\leqslant-2\}$.
    \item If $2,-2\in \mathcal{I}_1$, then $\mathcal{I}_1=\mathbb{Z}, \mathcal{I}_2=\varnothing$.
  \end{enumerate}
Hence the conclusion holds.
\end{proof}

Similarly, we have the following conclusion.

\begin{proposition}\label{prop2.17}
With the notations as above, if there exists a nonzero integer $m_0$ such that $m_0,-m_0\in \mathcal{I}_2$, then $\mathcal{I}_1,\mathcal{I}_2$ are
one of the following cases:
  \begin{enumerate}
    \item $\mathcal{I}_1=\{m\mid m\geqslant 2\}, \mathcal{I}_2=\{m\mid m\leqslant 1\}$;
    \item $\mathcal{I}_1=\{m\mid m\leqslant -2\}, \mathcal{I}_2=\{m\mid m\geqslant-1\}$;
    \item $\mathcal{I}_1=\varnothing, \mathcal{I}_2=\mathbb{Z}$.
  \end{enumerate}
\end{proposition}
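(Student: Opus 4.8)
The plan is to mirror the proof of Proposition~\ref{prop2.16}, exploiting the near-symmetry between $\mathcal{I}_1$ and $\mathcal{I}_2$ that is present both in Eq.~(\ref{Eq2.7}) and in Lemma~\ref{lemma2.15}. First I would record that, since $m_0\neq 0$ forces $m_0\neq -m_0$, applying Lemma~\ref{lemma2.15} to $m_0,-m_0\in\mathcal{I}_2$ yields $0=m_0+(-m_0)\in\mathcal{I}_2$, so that $f(0)=-1$.

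The one genuinely new computation—and the step I expect to be the crux—is the analogue of the identity $f(m)f(-m)=0$ used in Proposition~\ref{prop2.16}. Setting $n=-m$ in Eq.~(\ref{Eq2.7}) and using $f(0)=-1$, the right-hand side no longer vanishes; instead a short rearrangement gives $(f(m)+1)(f(-m)+1)=0$ for every $m\neq 0$. This is precisely the dual of the relation obtained in the $\mathcal{I}_1$ case, and it shows that whenever $m\in\mathcal{I}_1$ one has $-m\in\mathcal{I}_2$.

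With this dual relation in hand, the remainder of the argument transcribes the proof of Proposition~\ref{prop2.16} with the roles of $\mathcal{I}_1$ and $\mathcal{I}_2$ interchanged. I would let $l$ be the least positive integer lying in $\mathcal{I}_2$ and show $l=1$: if $l\geq 2$ then $1\in\mathcal{I}_1$, so $-1\in\mathcal{I}_2$ by the dual relation, and Lemma~\ref{lemma2.15} forces $l-1=-1+l\in\mathcal{I}_2$, contradicting minimality; the same reasoning gives $-1\in\mathcal{I}_2$. Finally I would split into three cases according to whether $2$ and $-2$ lie in $\mathcal{I}_1$. If $2\in\mathcal{I}_1$ then $-2\in\mathcal{I}_2$, and an induction via Lemma~\ref{lemma2.15} (using $-1,-2\in\mathcal{I}_2$ together with $0,1\in\mathcal{I}_2$) fills out $\{m\mid m\leq 1\}\subset\mathcal{I}_2$; then for $m\geq 2$ one has $2-m\in\mathcal{I}_2$, which forces $m\in\mathcal{I}_1$ lest $2=(2-m)+m\in\mathcal{I}_2$, yielding case (1). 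The case $-2\in\mathcal{I}_1$ gives case (2) by the symmetric argument, and $2,-2\in\mathcal{I}_2$ gives case (3).

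Since every step after the derivation of $(f(m)+1)(f(-m)+1)=0$ is a direct interchange of $\mathcal{I}_1$ and $\mathcal{I}_2$ in an already-proven argument, the only real obstacle is verifying that the $n=-m$ substitution produces this shifted product rather than $f(m)f(-m)=0$; I expect this to be a one-line algebraic check, after which the result follows by transcription.
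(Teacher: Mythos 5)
Your proposal is correct and follows exactly the route the paper intends: the paper offers no written proof for this proposition beyond the remark ``Similarly,'' and your transcription of the Proposition~\ref{prop2.16} argument with $\mathcal{I}_1$ and $\mathcal{I}_2$ interchanged is the intended one. You correctly identify and verify the one nontrivial new step, namely that $n=-m$ in Eq.~(\ref{Eq2.7}) together with $f(0)=-1$ yields $(f(m)+1)(f(-m)+1)=0$, which is the right dual of the relation $f(m)f(-m)=0$ used in the $\mathcal{I}_1$ case.
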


\begin{proposition}\label{prop2.18}
With the notations as above, if there does not exist a nonzero integer $m$ such that $m,-m\in \mathcal{I}_i$, $i=1,2$, then either
    \begin{equation*}
      \mathbb{Z}_+\subset \mathcal{I}_1, \mathbb{Z}_-\subset \mathcal{I}_2,
    \end{equation*}
    or
    \begin{equation*}
      \mathbb{Z}_-\subset \mathcal{I}_1, \mathbb{Z}_+\subset \mathcal{I}_2,
    \end{equation*}
where $\mathbb{Z}_+$ (resp. $\mathbb{Z}_-$) denotes the set of positive (resp. negative) integers.
Moreover, $f(0)\in \mathbb C$ is arbitrary.
\end{proposition}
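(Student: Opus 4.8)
The plan is to first convert the hypothesis into a clean symmetry statement, and then to propagate membership in $\mathcal{I}_1$ (or $\mathcal{I}_2$) outward from $\pm 1$ using Lemma~\ref{lemma2.15}. Recall that setting $n=0$ in Eq.~(\ref{Eq2.7}) already forces $f(m)(f(m)+1)=0$ for every $m\neq 0$, so each nonzero integer lies in exactly one of the disjoint sets $\mathcal{I}_1$ and $\mathcal{I}_2$. The hypothesis says that for no nonzero $m$ do both $m$ and $-m$ sit in the same $\mathcal{I}_i$; combined with the previous sentence, this forces that for every $m\neq 0$ exactly one of $m,-m$ lies in $\mathcal{I}_1$ and the other in $\mathcal{I}_2$. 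In other words, $m\in\mathcal{I}_1$ if and only if $-m\in\mathcal{I}_2$, the \emph{antisymmetry} property I will use repeatedly.

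Next I would distinguish the two possibilities for the single element $1$. Suppose first that $1\in\mathcal{I}_1$, so $-1\in\mathcal{I}_2$ by antisymmetry. I claim $\mathbb{Z}_+\subset\mathcal{I}_1$, which I prove by induction on the positive integers. The one delicate point is the base step $2\in\mathcal{I}_1$: Lemma~\ref{lemma2.15} cannot be applied to $1+1$ because it requires distinct summands, so I argue by contradiction instead. If $2\in\mathcal{I}_2$, then $-2\in\mathcal{I}_1$ by antisymmetry; now $1,-2\in\mathcal{I}_1$ are distinct, so Lemma~\ref{lemma2.15} gives $-1=1+(-2)\in\mathcal{I}_1$, contradicting $-1\in\mathcal{I}_2$. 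Hence $2\in\mathcal{I}_1$. For the inductive step, once $n\in\mathcal{I}_1$ with $n\geq 2$ is known, the distinct pair $1,n\in\mathcal{I}_1$ yields $n+1\in\mathcal{I}_1$ by Lemma~\ref{lemma2.15}, so $\mathbb{Z}_+\subset\mathcal{I}_1$. Applying antisymmetry to each positive integer then gives $\mathbb{Z}_-\subset\mathcal{I}_2$, which is the first alternative in the statement. The case $1\in\mathcal{I}_2$ is handled by the identical argument with the roles of $\mathcal{I}_1$ and $\mathcal{I}_2$ exchanged, producing $\mathbb{Z}_+\subset\mathcal{I}_2$ and $\mathbb{Z}_-\subset\mathcal{I}_1$, the second alternative. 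Since $1$ lies in exactly one of the two sets, these cases are exhaustive.

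Finally, to see that $f(0)$ is arbitrary, I would verify that $f(0)$ never enters Eq.~(\ref{Eq2.7}) as a genuine constraint. The value $f(0)$ can only appear when $m+n=0$ or when one of $m,n$ equals $0$. For $n=0$ the identity reduces (after cancelling $m\neq 0$) to $f(m)(f(m)+1)=0$, in which $f(0)$ drops out; and for $n=-m\neq 0$, the antisymmetry makes $f(m)f(-m)=0$ and $f(m)+f(-m)+1=0$ hold simultaneously, so both sides of Eq.~(\ref{Eq2.7}) vanish independently of $f(0)$. Thus no instance of the defining equation restricts $f(0)$, and it may be taken to be any complex number.

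I expect the main obstacle to be precisely the base step $2\in\mathcal{I}_i$ of the induction: the distinctness hypothesis in Lemma~\ref{lemma2.15} blocks the naive doubling $1+1$, and the remedy is exactly the antisymmetry extracted from the hypothesis in the first paragraph, which lets one import $-2$ and force a contradiction. Everything after that base step is a routine induction plus bookkeeping.
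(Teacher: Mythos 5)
Your proof is correct and follows the same basic route as the paper: both extract from the hypothesis the antisymmetry that $m\in\mathcal{I}_1$ if and only if $-m\in\mathcal{I}_2$ for $m\neq 0$, and both observe that the instances of Eq.~(\ref{Eq2.7}) with $n=0$ or $m+n=0$ impose no condition on $f(0)$. The one difference is that the paper's proof passes directly from antisymmetry to the stated dichotomy, whereas you supply the missing propagation step --- the contradiction argument for $2\in\mathcal{I}_i$ via $1+(-2)=-1$ and the subsequent induction using Lemma~\ref{lemma2.15} --- which is genuinely needed (antisymmetry alone would permit different pairs $\{m,-m\}$ to split in different directions), so your write-up is actually the more complete of the two.
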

\begin{proof}
 In this case, $f(m)\neq f(-m)$ for every $m\neq0$. Thus $m\in \mathcal{I}_1$ if and only if $-m\in \mathcal{I}_2$. So the conclusion about $\mathcal{I}_1, \mathcal{I}_2$ holds.
Moreover, Eq.~(\ref{Eq2.7}) holds automatically when we set $n=0$ or $m+n=0$, that is, $f(0)\in \mathbb C$ is arbitrary.
\end{proof}

Summarizing Propositions~\ref{prop2.16}, \ref{prop2.17} and
\ref{prop2.18} and with a similar discussion as that of
Theorem~\ref{Thm2.14}, we obtain the following classification.

\begin{theorem}\label{Thm2.21}
	A homogeneous Rota-Baxter operator of weight $1$ with degree $0$ on the Witt algebra $W$ must be one of the following types.
	\begin{enumerate}
		\item
		$
		R_0^{\leqslant1}(L_m)=
		\begin{cases}
		-L_m & m\geqslant2;\\
		0 & m\leqslant1.
		\end{cases}
		$
		\item
		$
		R_0^{\geqslant-1}(L_m)=
		\begin{cases}
		-L_m & m\leqslant-2;\\
		0 & m\geqslant-1.
		\end{cases}
		$
		\item
		$
		R_0^{0}(L_m)=0
		,\;\;\forall m\in \mathbb{Z}$.
		\item
		$
		R_0^{>1}(L_m)=
		\begin{cases}
		-L_m & m\leqslant1;\\
		0 & m\geqslant2.
		\end{cases}
		$
		\item
		$
		R_0^{<-1}(L_m)=
		\begin{cases}
		-L_m & m\geqslant-1;\\
		0 & m\leqslant-2.
		\end{cases}
		$
		\item
		$
		R_0^{\varnothing}(L_m)=-L_m
		,\;\;\forall m\in \mathbb{Z}$.
		\item
		$
		R_0^{+,\alpha}(L_m)=
		\begin{cases}
		-L_m & m<0; \\
		\alpha L_0 & m=0 ;\\
		0 & m>0,
		\end{cases}
		$\\
		where $\alpha\in \mathbb{C}$.
		\item
		$
		R_0^{-,\alpha}(L_m)=
		\begin{cases}
		-L_m & m>0; \\
		\alpha L_0 & m=0;\\
		0 & m<0,
		\end{cases}\\
		$\\ where $\alpha\in\mathbb{C}$.
	\end{enumerate}
	Conversely, the above operators are all the homogeneous
	Rota-Baxter operators of weight $1$ with degree $0$
	on the Witt algebra $W$.
\end{theorem}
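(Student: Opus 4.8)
\emph{Proof proposal.} The plan is to combine the three structural Propositions~\ref{prop2.16}, \ref{prop2.17} and \ref{prop2.18} via an exhaustive trichotomy, pin down the remaining value $f(0)$ in each branch, and then read off the eight operators. Recall that setting $n=0$ in Eq.~(\ref{Eq2.7}) already forces $f(m)\in\{0,-1\}$ for every $m\neq0$, so that $\mathbb{Z}\setminus\{0\}=\mathcal{I}_1\sqcup\mathcal{I}_2$. I would first observe that the hypotheses of the three propositions are mutually exclusive and jointly exhaustive: either some nonzero symmetric pair $\{m_0,-m_0\}$ lies entirely in $\mathcal{I}_1$ (the hypothesis of Proposition~\ref{prop2.16}), or entirely in $\mathcal{I}_2$ (Proposition~\ref{prop2.17}), or neither occurs (Proposition~\ref{prop2.18}); mutual exclusivity of the first two holds because in the conclusions of Proposition~\ref{prop2.16} the set $\mathcal{I}_2$ is a half-line or empty and hence contains no nonzero symmetric pair. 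Thus every $f$ satisfying Eq.~(\ref{Eq2.7}) falls under exactly one of the three propositions.

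Next I would determine $f(0)$ in each branch. In the setting of Proposition~\ref{prop2.16} its proof shows $0=m_0+(-m_0)\in\mathcal{I}_1$, so $f(0)=0$, and the three listed configurations of $(\mathcal{I}_1,\mathcal{I}_2)$ match precisely the operators $R_0^{\leqslant1}$, $R_0^{\geqslant-1}$ and $R_0^{0}$ of items (1), (2) and (3). Symmetrically, under Proposition~\ref{prop2.17} one gets $0\in\mathcal{I}_2$, so $f(0)=-1$, yielding $R_0^{>1}$, $R_0^{<-1}$ and $R_0^{\varnothing}$ of items (4), (5) and (6). Finally, under Proposition~\ref{prop2.18} the two possible orientations $\mathbb{Z}_+\subset\mathcal{I}_1,\ \mathbb{Z}_-\subset\mathcal{I}_2$ and $\mathbb{Z}_-\subset\mathcal{I}_1,\ \mathbb{Z}_+\subset\mathcal{I}_2$ give items (7) and (8); here the same proposition records that $f(0)$ is unconstrained, which is exactly the reason these two branches come in one-parameter families $R_0^{+,\alpha}$ and $R_0^{-,\alpha}$ with $\alpha\in\mathbb{C}$. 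This establishes that every homogeneous Rota-Baxter operator of weight $1$ and degree $0$ is one of the eight types.

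For the converse, I would verify directly that each of the eight functions solves Eq.~(\ref{Eq2.7}). Since the factor $(m-n)$ appears on both sides, it suffices to check, for every pair with $m\neq n$, the reduced identity $f(m)f(n)=f(m+n)\bigl(f(m)+f(n)+1\bigr)$. As each $f$ takes values in $\{0,-1\}$ away from $0$ (and $f(0)$ enters only through $m+n=0$, where the reduced identity reads $f(m)f(-m)=f(0)(f(m)+f(-m)+1)$ and the right side vanishes because $f(m)+f(-m)+1=0$ on every admissible pair), this is a short finite case analysis on the signs/ranges of $m,n,m+n$.

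The genuinely delicate point, and the place I expect to spend the most care, is this converse check at the \emph{boundary} between $\mathcal{I}_1$ and $\mathcal{I}_2$: one must confirm the reduced identity precisely when $m+n$ crosses the threshold separating the two regions (for instance $m\geqslant2,\ n\leqslant1$ with $m+n$ on either side of $2$ for $R_0^{\leqslant1}$), and must check that no forbidden crossing can occur for same-region pairs (e.g. two integers $\leqslant1$ cannot sum to $\geqslant2$ unless they coincide, which $m\neq n$ excludes). Once these boundary cases are dispatched — together with the $m+n=0$ cases that leave $f(0)$ free in items (7) and (8) — the classification and its converse are complete, and the matching with the eight displayed operators follows by the same bookkeeping as in the proof of Theorem~\ref{Thm2.14}.
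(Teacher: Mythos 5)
Your proposal is correct and follows essentially the same route as the paper, whose proof simply summarizes Propositions~\ref{prop2.16}, \ref{prop2.17} and \ref{prop2.18} and verifies the converse directly; your exhaustive trichotomy, the determination of $f(0)$ in each branch, and the boundary-case check for the converse fill in exactly the details the paper leaves implicit. (One tiny caveat: the parenthetical claim that $f(m)+f(-m)+1=0$ for every admissible pair is literally true only in cases (7)--(8), which is where it is needed to free $f(0)$; in cases (1)--(6) the $m+n=0$ verification instead uses the already determined value of $f(0)$, and goes through just as easily.)
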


\begin{remark}
	In the above notation, the first part of the superscript represents the zero set of the corresponding $\mathbb{C}$-valued function $f$ and the second part, if it exists, is the value $f(0)$. We use $R_0^0$ instead of $R_0^{\mathbb{Z}}$ since it is the same operator as $R_0^0$ in \autoref{Thm2.14}.
\end{remark}

\begin{remark}\label{rek:RB1}
It is known that $R$ is a Rota-Baxter operator of weight $1$ on a Lie algebra $\frak g$ if and only if so is $-R-{\rm Id}$ on $\frak g$, where
${\rm Id}$ is the identity map on $\frak g$. Using this, we have the following correspondences for the Rota-Baxter operators listed in Theorem~\ref{Thm2.21}:
\[
R_0^{\leqslant1} \;\Longleftrightarrow\;
R_0^{>1},\;\;\;
R_0^{\geqslant-1} \;\Longleftrightarrow\;
R_0^{<-1},\;\;\;
R_0^{0} = 0 \;\Longleftrightarrow\;
R_0^{\varnothing} = -{\rm Id}_W,\;\;\;
R_0^{+,\alpha} \;\Longleftrightarrow\;
R_0^{-,-\alpha-1}.
\]

\end{remark}

\section{Homogeneous Rota-Baxter operators on the Virasoro algebra}

\begin{definition}
The {\it Virasoro algebra} $V$ is a Lie algebra with the basis
$\{L_m,C|m\in\mathbb{Z}\}$ satisfying the following relations:
\begin{equation}\label{Virasoro_1}
    [L_m,L_n]=(m-n)L_{m+n}+\tfrac{m^3-m}{12}\delta_{m+n,0}C,  \forall m,n \in \mathbb{Z}.
\end{equation}
\end{definition}
\begin{equation}\label{Virasoro_2}
[L_m,C]=0,\;\;\forall m\in\mathbb{Z}.
\end{equation}

The Virasoro algebra $V$ is a central extension of the Witt algebra $W$,
and has a natural $\mathbb{Z}$-grading as well:
\begin{align*}
V=\bigoplus_{n\in \mathbb{Z}} V_n,
\end{align*}
where $V_n=\mathbb{C}L_n$ for $n\in \mathbb{Z}^{\ast}$ and $V _0=\mathbb{C}L_0\oplus \mathbb{C}C$.

\begin{remark}\label{W and V} 
	The Witt algebra $W$ is a graded quotient of the Virasoro algebra $V$.
		Any linear graded operator on $W$ can be lifted to that of $V$ by mapping $\mathbb CC$ to $0$.
	Conversely, any linear graded operator $F$ on $V$ can be restricted to a linear operator on $W$ by forgetting the image of $\mathbb CC$.
	If the kernel of $F$ contains the center, then the two can be identified.
\end{remark}

\begin{definition}\label{homogeneous R-B on Vir}
	Let $k$ be an integer. A {\it homogeneous operator $F$ with degree $k$} on the Virasoro algebra $V$ is a linear operator on $V$ satisfying
	\[
		F(V_m)\subset V_{m+k},\;\;
		\forall m\in\mathbb{Z}.
	\]
\end{definition}

Hence {\it homogeneous Rota-Baxter operator $R_k$ with degree $k$} on the Virasoro algebra $V$ is a Rota-Baxter operator on $V$ with the following form:
\begin{equation}\label{eq:RBV1}
R_k(L_m)=f(m+k)L_{m+k}+\theta\delta_{m+k,0}C,\;\;\forall m\in\mathbb{Z};
\end{equation}
\begin{equation}\label{eq:RBV2}
R_k(C)=\mu L_k+\nu\delta_{k,0}C,
\end{equation}
where $f$ is a $\mathbb C$-valued function defined on $\mathbb{Z}$ and $\theta, \mu, \nu\in \mathbb C$.

\subsection{Homogeneous Rota-Baxter operators of weight 0  on the Virasoro algebra}

We begin with the following general result.

\begin{theorem}\label{Thm3.3}
A homogeneous Rota-Baxter operator $R_0$ of weight 0 with degree $0$ on the Virasoro algebra $V$ must be of the form
\begin{align*}
    R_0^{\alpha,\theta,\mu,\nu}(L_m)&=\delta_{m,0}(\alpha L_m+\theta C),\;\;\forall m\in \mathbb{Z},\\
    R_0^{\alpha,\theta,\mu,\nu}(C)&=\mu L_0+\nu C,
  \end{align*}
where $\alpha, \theta, \mu, \nu \in \mathbb C$ are arbitrary. Conversely, the above operators are all the homogeneous
 Rota-Baxter operators of weight $0$ with degree $0$ on the Virasoro algebra $V$.
\end{theorem}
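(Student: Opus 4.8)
The plan is to substitute the general degree-$0$ homogeneous form of the operator, namely
\[
R_0(L_m)=f(m)L_m+\theta\delta_{m,0}C,\qquad R_0(C)=\mu L_0+\nu C,
\]
coming from Eqs.~(\ref{eq:RBV1})--(\ref{eq:RBV2}) with $k=0$, into the weight-$0$ Rota-Baxter identity~(\ref{RB-lambda}) and to evaluate it on all pairs of basis vectors using the brackets~(\ref{Virasoro_1})--(\ref{Virasoro_2}). The guiding observation is that $C$ is central, so every bracket of the form $[\,\cdot\,,C]$ vanishes; consequently the only pairs producing nontrivial relations are the $(L_m,L_n)$, and for such a pair the identity splits into an $L_{m+n}$-component and a $C$-component (arising from the $\tfrac{m^3-m}{12}\delta_{m+n,0}$ terms) that can be treated separately.

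First I would read off the $L_{m+n}$-component. Since the central summands in $R_0(L_m)$ and $R_0(L_n)$ drop out of every bracket, this component is literally the weight-$0$ functional equation for $f$ in the degree-$0$ case, i.e.\ the equation underlying the Witt computation. Hence Proposition~\ref{prop1} applies verbatim and forces $f(m)=\alpha\delta_{m,0}$ for $\alpha=f(0)\in\mathbb{C}$; equivalently $R_0(L_m)=0$ for all $m\neq 0$. (One may also see this directly by putting $n=0$, which yields $mf(m)^2=0$.) With $f$ pinned down, the second step is to check that the remaining components of~(\ref{RB-lambda}) impose no further conditions, so that $\theta,\mu,\nu$ stay free. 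For the pair $(L_m,L_{-m})$ with $m\neq0$ both sides vanish because $R_0(L_{\pm m})=0$; for the mixed pairs $(C,L_n)$ and $(C,L_0)$ the identity collapses, via centrality of $C$ together with $R_0(L_n)=0$ for $n\neq0$ and $[L_0,L_0]=0$, to a trivial $0=0$. In particular all cross terms such as $\mu\theta$, $\nu\alpha$, $\nu\theta$ vanish since each carries a bracket with $C$, which confirms that $\alpha,\theta,\mu,\nu$ are arbitrary.

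Finally, the converse is this same computation read in reverse: substituting the claimed operators back into~(\ref{RB-lambda}) returns $0=0$ on every pair of generators. I do not expect any conceptual obstacle here; the one point requiring care is the bookkeeping of the central terms, and the reason the argument closes so cleanly---the sharp contrast with the Witt algebra, where no such freedom exists---is precisely that the centrality of $C$ \emph{decouples} the parameters $\theta,\mu,\nu$ from the functional equation governing $f$. The main thing to get right is therefore to verify that this central structure introduces no hidden relation linking $\theta$, $\mu$, and $\nu$ to $\alpha$ or to each other.
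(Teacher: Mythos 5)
Your proposal is correct and follows essentially the same route as the paper: substitute the general degree-$0$ form into the Rota--Baxter identity, observe that the $L_{m+n}$-component for $m+n\neq 0$ reproduces the Witt functional equation so that $n=0$ forces $mf(m)^2=0$ and hence $f(m)=\alpha\delta_{m,0}$, and then check that the remaining components (the $C$-component at $m+n=0$ and the pairs involving $C$) are automatically satisfied, leaving $\theta,\mu,\nu$ free. The only stylistic difference is that the paper records the three component equations explicitly before specializing, whereas you argue pair by pair; the content is identical.
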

\begin{proof}
    Let $R_0$ be a homogeneous Rota-Baxter operator of weight 0 with degree $0$ on $V$ satisfying Eqs.~(\ref{eq:RBV1}) and (\ref{eq:RBV2}).
    By Eqs. (\ref{RB-lambda}), \eqref{Virasoro_1} and \eqref{Virasoro_2}, we have the following equations:
\begin{equation}\label{V-0-1}
      f(m)f(n)(m-n) = (f(m)+f(n))(m-n)f(m+n)
    \end{equation}
    for any $m+n\neq0$,
    \begin{equation}
      2mf(m)f(-m) = (f(m)+f(-m))\left(2mf(0)+\tfrac{m^3-m}{12}\mu\right),\;\;\forall m\in \mathbb{Z},
    \end{equation}
    and
    \begin{equation}
      f(m)f(-m)\tfrac{m^3-m}{12} = (f(m)+f(-m))(2m\theta + \tfrac{m^3-m}{12} \nu),\;\;\forall m\in \mathbb{Z}.
    \end{equation}
    Let $n=0$ in Eq.~\eqref{V-0-1}. Then $f(m)=0$ for any $m\neq0$. Write $f(0)$ as $\alpha$. Then, all the above equations hold automatically for  arbitrary $\alpha, \theta, \mu, \nu \in\mathbb C$.

Conversely, it is straightforward to check that these
operators are homogeneous Rota-Baxter operator of weight $0$ with degree $0$ on the Virasoro algebra $V$.
  \end{proof}

\begin{remark}\label{rek:RB01 on V}
In view of Remark~\ref{rek:RB0}, a complete set of representatives of the
homogeneous Rota-Baxter operators of weight 0 with degree $0$ on $V$ under the action of $\mathbb{C}^{\ast}$ by scalar multiplication consists of the following operators:
\begin{enumerate}
      \item
      $ R_0^{0,\theta,\mu,\nu}(L_m)=\theta\delta_{m,0} C,\;\;\forall m\in \mathbb{Z}$, and
      $ R_0^{0,\theta,\mu,\nu}(C)=\mu L_0+\nu C$, where $\theta, \mu, \nu \in \mathbb C$ are arbitrary;
      \item
      $R_0^{1,\theta,\mu,\nu}(L_m)=\delta_{m,0}(L_m+\theta C),\;\;\forall m\in \mathbb{Z}$, and
      $R_0^{1,\theta,\mu,\nu}(C)=\mu L_0+\nu C$,
      where $\theta, \mu, \nu \in \mathbb C$ are arbitrary.
\end{enumerate}
\end{remark}

\begin{remark}\label{EndV0}
	Let $E_{0,0}, E_{c,0}, E_{0,c}, E_{c,c}$ be the standard basis of ${\rm End}(V_0)$ given by
	\begin{align*}
	&
	E_{0,0}(L_0) = L_0,\;\;E_{c,0}(L_0) = C,\;\;
	E_{0,c}(L_0) = 0,\;\;E_{c,c}(L_0) = 0,\\
	&
	E_{0,0}(C) = 0,\;\;	E_{c,0}(C) = 0,\;\;
	E_{0,c}(C) = L_0,\;\; E_{c,c}(C) = C.
	\end{align*}
	Extending these linear operators of $V_0$ to those of $V$ by mapping $V_n (n\neq 0)$ to $0$, we see that
	\[
	R_0^{1,0,0,0} = E_{0,0},\;\;
	R_0^{0,1,0,0} = E_{c,0},\;\;
	R_0^{0,0,1,0} = E_{0,c},\;\;
	R_0^{0,0,0,1} = E_{c,c}.
	\]
	Then it is clear that
	\[
	R_0^{\alpha,\theta,\mu,\nu} = \alpha E_{0,0} + \theta E_{c,0} + \mu E_{0,c} + \nu E_{c,c},\;\;\forall \alpha, \theta, \mu, \nu \in \mathbb C.
	\]
	This partly explains our notation in \autoref{Thm3.3}. In this way
	the homogeneous Rota-Baxter operators of weight 0 with degree $0$ on the Virasoro algebra $V$ can be identified
with the linear operators on $V_0$.
\end{remark}

Let $R_k$ be a homogeneous Rota-Baxter operator of weight 0 with a nonzero degree $k$ on $V$
satisfying Eqs.~(\ref{eq:RBV1}) and (\ref{eq:RBV2}). In this case, it is obvious that $\nu=0$, that is,
\begin{align*}
    R_k(L_n)&=f(n+k)L_{n+k}+\theta\delta_{n+k,0}C,\;\;\forall n\in \mathbb{Z};\\
    R_k(C)&=\mu L_k.
  \end{align*}
By Eqs. (\ref{RB-lambda}), \eqref{Virasoro_1} and \eqref{Virasoro_2}, we have the following equations:
  \begin{align}
            &f(m+k)\mu\left(mL_{m+2k}+\tfrac{(m+k)^3-(m+k)}{12}\delta_{m+2k,0}C\right) \label{V-o1}\\
      =~&\mu(m-k)\left(f(m+2k)L_{m+2k}+\theta\delta_{m+2k,0}C\right) +\mu^2\tfrac{m^3-m}{12}\delta_{m+k,0}L_k,\;\;\forall m\in\mathbb{Z}; \nonumber\\
            &f(m)f(n)\left((m-n)L_{m+n}+\tfrac{m^3-m}{12}\delta_{m+n,0}C\right) \label{V-o2}\\
      =~&(f(m)(m-n+k)+f(n)(m-n-k))(f(m+n)L_{m+n}+\theta\delta_{m+n,0}C) \nonumber\\
      +~&\left(\tfrac{m^3-m}{12}f(m)-\tfrac{n^3-n}{12}f(n)\right)\delta_{m+n,k}\mu L_k\nonumber,\;\;\forall m,n\in\mathbb{Z}.
    \end{align}
\begin{proposition}\label{prop:VV1}
With the notations as above, if $\mu=0$, then $f$ and $\theta$ belong to one of the following cases:
    \begin{enumerate}
      \item $f(m)=0$ and $\theta\in\mathbb{C}$;
      \item $f(m)=\alpha\delta_{m+k,0}$,  where $\alpha\in\mathbb{C}$, and $\theta=0$;
      \item $f(m)=\alpha\delta_{m,0}+2\alpha\delta_{m,-\frac{k}{2}}$,  where $\alpha\in\mathbb{C}$, $k$ is even and $\theta\in\mathbb{C}$.
    \end{enumerate}
\end{proposition}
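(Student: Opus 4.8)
The plan is to specialize Eqs.~\eqref{V-o1} and \eqref{V-o2} to the hypothesis $\mu=0$ and then reduce everything to the Witt-algebra analysis already completed. First I observe that every term of Eq.~\eqref{V-o1} and the last summand of Eq.~\eqref{V-o2} carries a factor $\mu$, so setting $\mu=0$ makes Eq.~\eqref{V-o1} vacuous and removes the $\mu L_k$ term from Eq.~\eqref{V-o2}. What survives is an identity in $V$ whose $L_{m+n}$-part and $C$-part may be compared separately.

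The crucial point is that the $L$-part of Eq.~\eqref{V-o2} is exactly the Witt functional equation~\eqref{weight 0}: for $m+n\neq0$ the coefficient of $L_{m+n}$ reproduces \eqref{weight 0} verbatim, and for $n=-m$ the coefficient of $L_0$ is \eqref{weight 0} evaluated at $(m,-m)$. Hence $f$ satisfies \eqref{weight 0} for all $m,n$, and Theorem~\ref{Thm2.14} (via Propositions~\ref{prop2},~\ref{prop3},~\ref{prop4}) forces $f$ to be, apart from the trivial solution $f\equiv0$, one of $f(m)=\alpha\delta_{m+k,0}$, or $f(m)=\alpha(\delta_{m,0}+2\delta_{m,-k/2})$ with $k$ even, or $f(m)=\tfrac{k}{m+k}\delta_{m,l\mathbb{Z}}f(0)$ with $l\nmid k$.

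It then remains to feed each candidate into the $C$-part of Eq.~\eqref{V-o2}, which is nontrivial only at $n=-m$ and reads
\begin{equation*}
f(m)f(-m)\tfrac{m^3-m}{12}=\theta\bigl(f(m)(2m+k)+f(-m)(2m-k)\bigr),\;\;\forall m\in\mathbb{Z}.
\end{equation*}
For $f\equiv0$ both sides vanish and $\theta$ is unconstrained, giving case~(1). For $f(m)=\alpha\delta_{m+k,0}$ with $\alpha\neq0$ the product $f(m)f(-m)$ vanishes identically (as $k\neq0$), so the left-hand side is $0$; evaluating the right-hand side at $m=-k$ gives $-\alpha k\theta=0$, whence $\theta=0$, which is case~(2). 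For $f(m)=\alpha(\delta_{m,0}+2\delta_{m,-k/2})$ I would check that both sides vanish for every $m$---the left-hand side because $f(m)f(-m)$ is supported only at $m=0$, where $\tfrac{m^3-m}{12}=0$, and the right-hand side because the bracket vanishes at each of $m=0,\pm k/2$---so $\theta$ stays free, which is case~(3).

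The main obstacle is showing that the third Witt family must be \emph{excluded}. For $f(m)=\tfrac{kf(0)}{m+k}$ on $l\mathbb{Z}$, I would substitute $f(m)$ together with $f(-m)=\tfrac{kf(0)}{k-m}$ into the $C$-equation for $m\in l\mathbb{Z}$ and simplify; after combining fractions the bracket on the right collapses to $\tfrac{2mk}{k^2-m^2}$, and dividing through by the common factor $\tfrac{k^2f(0)}{k^2-m^2}$ reduces the identity to $\tfrac{f(0)(m^3-m)}{12}=2m\theta$, i.e. $\tfrac{f(0)(m^2-1)}{12}=2\theta$ for every nonzero $m\in l\mathbb{Z}$. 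Since $f(0)\neq0$, comparing the values at $m=l$ and $m=2l$ forces $3l^2=0$, contradicting $l\neq0$. Thus this family cannot occur, leaving precisely the three cases in the statement. The only delicate computation is this last simplification of the bracket, which must be carried out carefully to reach the clean contradiction.
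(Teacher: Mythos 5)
Your proof is correct and follows essentially the same route as the paper: reduce Eq.~\eqref{V-o2} with $\mu=0$ to the Witt equation~\eqref{weight 0} plus the central constraint~\eqref{V-1-2}, invoke the weight-$0$ classification on $W$, and then test each family against~\eqref{V-1-2}. Your explicit computation ruling out the third family (the paper only asserts that it fails~\eqref{V-1-2}) checks out, including the collapse of the bracket to $\tfrac{2mk}{k^2-m^2}$ and the comparison at $m=l$ and $m=2l$.
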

\begin{proof}
If $\mu=0$, then Eq.~\eqref{V-o1} holds automatically and Eq.~\eqref{V-o2} becomes
 \begin{equation}\label{V-1-1}
    f(m)f(n)(m-n)=f(m+n)(f(m)(m-n+k)+f(n)(m-n-k)),\;\;\forall m,n\in\mathbb{Z},
  \end{equation}
  and
  \begin{equation}\label{V-1-2}
    \tfrac{m^3-m}{12}f(m)f(-m)=\theta(f(m)(2m+k)+f(-m)(2m-k)),\;\;\forall m\in \mathbb{Z}.
  \end{equation}

  Note that Eq.~\eqref{V-1-1} is exactly Eq.~\eqref{weight 0}. By the discussion in the previous section, $f$ satisfies one of the following equations:
\begin{enumerate}
\item[(i)]
$
        f(m)=\delta_{m+k,0}f(-k),\;\forall m\in \mathbb{Z};
$
\item[(ii)]
$
      f(m)=(\delta_{m,0}+2\delta_{m,-\frac{k}{2}})f(0),\;\forall m\in\mathbb{Z},
$
 if $k$ is even;
\item[(iii)]
$
       f(m)=
      \sum\limits_{n\in\mathbb{Z}}
      \tfrac{k}{m+k}\delta_{m,ln}f(0),
$
where $l\in\mathbb{Z}^{\ast}$ and $l\nmid k$.
\end{enumerate}

For (i),  Eq.~\eqref{V-1-2} implies that either $\theta=0$ or $f(-k)=0$, which corresponds to cases (2) and (1) respectively.

For (ii),  Eq.~\eqref{V-1-2} holds automatically. Thus $\theta\in \mathbb C$ is  arbitrary. It corresponds to case (3).

For (iii), it does not satisfy Eq.~\eqref{V-1-2}.
\end{proof}

\begin{proposition} \label{prop:VV2}
With the notations as above, if $\mu\neq0$, then
\begin{equation}
  f(m)= -\tfrac{k^2-1}{24}\mu\delta_{m,k},\;\;\forall m\in \mathbb{Z},\end{equation} and $\theta=0$.
\end{proposition}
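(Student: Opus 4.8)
The plan is to exploit the hypothesis $\mu\neq0$ to read off scalar relations from the operator identity (\ref{V-o1}) and to combine them with the ``almost Witt'' equation hidden inside (\ref{V-o2}). Since $\mu\neq0$, I would first cancel $\mu$ from (\ref{V-o1}) and compare the coefficients of the basis vectors $L_{m+2k}$, $C$ and $L_k$. For generic $m$ this only yields the recurrence $mf(m+k)=(m-k)f(m+2k)$, which I will not actually need; the useful data sit at the two exceptional values $m=-2k$ and $m=-k$, where the Kronecker deltas together with the coincidences $m+2k=0$ and $m+2k=k$ switch on extra terms. At $m=-2k$ the $L_0$-coefficient gives $2f(-k)=3f(0)$ and the $C$-coefficient gives $\tfrac{k^3-k}{12}f(-k)=3k\theta$; at $m=-k$ the $L_k$-coefficient gives $2kf(k)=kf(0)-\tfrac{k^3-k}{12}\mu$.

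Next I would turn to (\ref{V-o2}). Comparing the coefficients of $L_{m+n}$ shows that $f$ satisfies the weight-$0$ Witt equation (\ref{weight 0}) for every pair $(m,n)$ with $m+n\neq k$: the only obstruction is the cross term $\mu L_k$, which lives in degree $k$ and is therefore invisible to the $L_{m+n}$-coefficient unless $m+n=k$. This is exactly why one cannot simply quote Theorem~\ref{Thm2.14}, since (\ref{weight 0}) is now available only off the locus $m+n=k$. Specializing to $n=0$, which is legitimate precisely when $m\neq k$, reproduces the factorized relation $f(m)\bigl((m+k)f(m)-kf(0)\bigr)=0$ of (\ref{Eq2.6}) for all $m\neq k$.

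With these relations in hand the remainder is bookkeeping. Since $k\neq0$ we have $-k\neq k$, so the factorized relation applies at $m=-k$ and gives $f(-k)f(0)=0$; feeding in $f(-k)=\tfrac{3}{2}f(0)$ forces $f(0)^2=0$, whence $f(0)=0$ and then $f(-k)=0$. Once $f(0)=0$, the factorized relation collapses to $(m+k)f(m)^2=0$ for all $m\neq k$, so $f(m)=0$ for every $m\notin\{k,-k\}$; combined with $f(-k)=0$ this yields $f(m)=0$ for all $m\neq k$. Finally, substituting $f(0)=0$ into the $m=-k$ relation from (\ref{V-o1}) gives $f(k)=-\tfrac{k^2-1}{24}\mu$ (using $k^3-k=k(k^2-1)$), while substituting $f(-k)=0$ into the $C$-coefficient relation gives $\theta=0$, as claimed.

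The main obstacle I anticipate is conceptual rather than computational: one must resist invoking Theorem~\ref{Thm2.14}, because the central term makes $f$ solve (\ref{weight 0}) only away from $m+n=k$. The argument therefore has to be arranged so that every specialization employed (in particular $n=0$) stays off that locus, with the information about the degree-$k$ and degree-$(-k)$ components supplied instead by the exceptional cases of (\ref{V-o1}).
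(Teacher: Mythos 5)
Your proposal is correct and follows essentially the same route as the paper: the paper likewise extracts the relations $2f(-k)=3f(0)$, $\tfrac12 f(0)=f(k)+\tfrac{k^2-1}{24}\mu$ and $\tfrac{k^2-1}{12}f(-k)=3\theta$ from Eq.~\eqref{V-o1}, derives the weight-$0$ Witt equation from Eq.~\eqref{V-o2} only for $m+n\neq k$, specializes to $n=0$ (with $m\neq k$) to get $f(m)\bigl((m+k)f(m)-kf(0)\bigr)=0$ and $f(-k)f(0)=0$, and concludes $f(0)=0$, $f(m)=0$ for $m\neq k$, $f(k)=-\tfrac{k^2-1}{24}\mu$ and $\theta=0$. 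Your explicit caution about not invoking Theorem~\ref{Thm2.14} on the locus $m+n=k$ matches the paper's implicit restriction in its Eq.~\eqref{V-1}.
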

\begin{proof}
In this case, Eq.~\eqref{V-o1} implies the following equations:
  \begin{equation}\label{V-2-1}
    (m-k)f(m)=(m-2k)f(m+k),\;\forall m\neq0,
  \end{equation}
  \begin{equation}\label{V-2-2}
    \tfrac{1}{2}f(0)=f(k)+\tfrac{k^2-1}{24}\mu
  \end{equation}
  and
  \begin{equation}\label{V-2-3}
    \tfrac{k^2-1}{12}f(-k) = 3\theta.
  \end{equation}
Eq.~\eqref{V-o2} implies the following equations:
  \begin{equation}\label{V-1}
    f(m)f(n)(m-n)=f(m+n)(f(m)(m-n+k)+f(n)(m-n-k))
  \end{equation}
  for $m+n\neq k$,
  \begin{equation}\label{V-2}
    f(m)f(n)(m-n)=~f(k)(2mf(m)-2nf(n))+\mu\left(f(m)\tfrac{m^3-m}{12}-f(n)\tfrac{n^3-n}{12}\right)
  \end{equation}
  for $m+n=k$ and
  \begin{equation}\label{V-3}
    \tfrac{m^3-m}{12}f(m)f(-m)=\theta(f(m)(2m+k)+f(-m)(2m-k)),\;\;\forall m\in \mathbb{Z}.
  \end{equation}

  Let $n=0$ in Eq.~\eqref{V-1}. Then we have
  \begin{equation*}
    f(m)((m+k)f(m)-kf(0))=0,\;\;\forall m\neq k.
  \end{equation*}
 Hence for any  $m\neq k,-k$,  either $f(m)=0$ or
  \begin{equation*}
    f(m)=\tfrac{k}{m+k}f(0).
  \end{equation*}
In addition, $f(-k)f(0)=0$.

  On the other hand, let $m=-k$ in Eq.~\eqref{V-2-1}. Then we have
  \begin{equation*}
    2f(-k)=3f(0).
  \end{equation*}
Hence $f(0)=0$. Therefore $f(m)=0$ for any $m\neq k$. Thus Eqs. \eqref{V-2-1}, \eqref{V-2-2}, \eqref{V-2-3}, \eqref{V-1}, \eqref{V-2} and \eqref{V-3} imply that
$$   f(k) = -\tfrac{k^2-1}{24}\mu,\;\; \theta  = 0.$$
Therefore the conclusion holds.
\end{proof}

Summarizing Propositions~\ref{prop:VV1} and \ref{prop:VV2}, and with a similar discussion as that of
Theorem~\ref{Thm2.14}, we obtain the following classification.

\begin{theorem}\label{Thm3.6}
	A homogeneous Rota-Baxter operator of weight $0$ with a
	nonzero degree on the Virasoro algebra $V$ must be one of the following forms.
	\begin{enumerate}
		\item[(I)]
		$
		R_k^{\theta c}(L_m)=
		\theta\delta_{m+k,0}C
		,\;\;\forall m\in \mathbb{Z}$ and
		$R_k^{\theta c}(C)=0$,
		where $k\in \mathbb{Z}^{\ast}$ and $\theta \in\mathbb{C}$.
		\item[(II)]
		$
		R_k^{\alpha}(L_m)=\alpha\delta_{m+2k,0}L_{m+k},\;\;\forall m\in \mathbb{Z}$ and
		$R_k^{\alpha}(C)=0$, where $k\in \mathbb{Z}^{\ast}$ and $\alpha \in\mathbb{C}^{\ast}$
		\item[(III)]
		$
		R_{2k}^{'\beta+\vartheta c}(L_m)=
		(\beta\delta_{m+2k,0}+2\beta\delta_{m+3k,0})L_{m+2k}+\vartheta\delta_{m+2k,0}C,\;\;\forall m\in\mathbb{Z}$ and
		$R_{2k}^{'\beta+\vartheta c}(C)=0$, where
		$k\in \mathbb{Z}^{\ast}$, $\beta\in\mathbb{C}^{\ast}$ and $\vartheta\in\mathbb{C}$.
		\item[(IV)]
		$
		R_k^{\to\mu}(L_m)=
		-\tfrac{k^2-1}{24}\mu\delta_{m,0} L_{m+k}
		,\;\;\forall m\in\mathbb{Z}$, and $R_k^{\to\mu}(C)=\mu L_k$,
		where $\mu\in\mathbb{C}^{\ast}$.
	\end{enumerate}
	Moreover,
	\begin{enumerate}
		\item If $k\neq 0$ and is odd, then
		$\left\{R_k^{\theta c}, R_k^{\alpha}, R_k^{\to\mu}
		\middle|\alpha,\mu\in\mathbb{C}^{\ast}, \theta\in\mathbb{C}
		\right\}$
		are all the homogeneous Rota-Baxter
		operators of weight $0$ with degree $k$ on the Virasoro algebra $V$.
		\item If $k\neq 0$ and is even, then
		$\left\{R_k^{\theta c}, R_k^{\alpha}, R_{k}^{'\beta+\vartheta c}, R_k^{\to\mu}
		\middle|\alpha,\beta,\mu\in\mathbb{C}^{\ast}, \theta,\vartheta\in\mathbb{C}
		\right\}$
		are all the homogeneous Rota-Baxter
		operators of weight $0$ with degree $k$ on the Virasoro algebra $V$.
	\end{enumerate}
\end{theorem}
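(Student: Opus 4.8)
The plan is to read off the classification directly from the two preceding propositions, which between them cover the two mutually exclusive possibilities $\mu=0$ and $\mu\neq0$. All that remains is bookkeeping: each description there is phrased in terms of the pair $(f,\theta)$ (and $\mu$), and must be converted into an explicit operator through the defining relations $R_k(L_m)=f(m+k)L_{m+k}+\theta\delta_{m+k,0}C$ and $R_k(C)=\mu L_k$. In particular every assertion about $f$ is to be reindexed by the substitution $m\mapsto m+k$.

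First I would treat $\mu=0$ via Proposition~\ref{prop:VV1}. Its case (1), $f\equiv0$ with $\theta$ free, gives $R_k(L_m)=\theta\delta_{m+k,0}C$ and $R_k(C)=0$, i.e. form (I). Its case (2), $f(m)=\alpha\delta_{m+k,0}$ with $\theta=0$, gives $f(m+k)=\alpha\delta_{m+2k,0}$ and hence form (II). Its case (3), which requires $k$ even, gives $f(m)=\alpha\delta_{m,0}+2\alpha\delta_{m,-k/2}$ with $\theta$ free; after the shift $m\mapsto m+k$ the two deltas become $\delta_{m+k,0}$ and $\delta_{m+3k/2,0}$, which upon setting $k=2k'$ and relabelling the degree produce form (III). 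I would emphasise here that the Witt family $R_k^{l,\gamma}$ of Theorem~\ref{Thm2.14} has no Virasoro counterpart, since the corresponding function is excluded by Proposition~\ref{prop:VV1} as it fails the central constraint; this is precisely the point at which the nonzero-degree Virasoro theory departs from the Witt theory.

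Next I would dispose of $\mu\neq0$ by quoting Proposition~\ref{prop:VV2}: there $f(m)=-\tfrac{k^2-1}{24}\mu\delta_{m,k}$ and $\theta=0$, so $f(m+k)=-\tfrac{k^2-1}{24}\mu\delta_{m,0}$, and together with $R_k(C)=\mu L_k$ this is form (IV), valid for every nonzero $k$. To assemble the two complete lists I would split on the parity of $k$: form (III) occurs only for even $k$ because case (3) of Proposition~\ref{prop:VV1} demands $-k/2\in\mathbb{Z}$, so for odd $k$ only (I), (II), (IV) remain, whereas for even $k$ all four appear. The zero operator is the $\theta=0$ member of family (I), and the degenerate choices $\alpha=0$ or $\mu=0$ collapse (II) and (IV) back into (I); this is why the representative lists may restrict $\alpha,\beta,\mu$ to $\mathbb{C}^{\ast}$ while leaving the central parameters $\theta,\vartheta$ free in $\mathbb{C}$.

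Finally, the converse---that each listed operator indeed satisfies Eq.~\eqref{RB-lambda}---I would settle by direct substitution into Eqs.~\eqref{V-o1} and \eqref{V-o2}, exactly as in Theorem~\ref{Thm2.14}; this is routine. The one step I expect to be the genuine obstacle is the reindexing of the Kronecker deltas in case (3): after the shift $m\mapsto m+k$ the term $2\alpha\delta_{m,-k/2}$ becomes $2\alpha\delta_{m+3k/2,0}$, and one must set $k=2k'$ to recognise it as the coefficient $2\beta\delta_{m+3k',0}$ of form (III). It is easy to misalign the supports here, and hence to mis-state form (III) or to double-count it against form (II), so this translation should be carried out with care.
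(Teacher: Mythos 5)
Your proposal is correct and follows essentially the same route as the paper, which likewise obtains the list by combining Propositions~\ref{prop:VV1} and \ref{prop:VV2} (the $\mu=0$ and $\mu\neq0$ cases), reindexing $f(m)\mapsto f(m+k)$ into the operator form, splitting on the parity of $k$ for form (III), and verifying the converse by direct substitution as in Theorem~\ref{Thm2.14}. Your bookkeeping of the delta supports in case (3) and of the degenerate parameter values matches the paper's intended (but unwritten) details.
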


\begin{remark} We remark that the symbol $c$ in case (I) is merely formal, it is treated as a vector over $\mathbb{C}$. For this reason, we write $R_k^{0}, R_k^{c}, R_{2k}^{'\beta}$ and $R_{2k}^{'\beta+c}$ instead of $R_k^{0c}, R_k^{1c}, R_{2k}^{'\beta+0c}$ and $R_{2k}^{'\beta+1c}$ respectively. This creates no confusion since if one allows $\alpha=0$ in case (II), then the operator $R_k^0$ agrees with $R_k^{0c}$. It is straightforward to verify that
	\[
	R_k^{\theta c} = \theta R_k^{c},\;\;
	R_k^{\alpha} = \alpha R_k^{1},\;\;
	R_{2k}^{'\beta+\vartheta c} = \beta R_{2k}^{'1} + \vartheta R_{2k}^{c},\;\;
	R_k^{\to\mu} = \mu R_k^{\to1},
	\]
	for any $\alpha,\beta,\mu\in\mathbb{C}^{\ast}$ and $\theta,\vartheta\in\mathbb{C}$.
	
Also note that although
$R_k^{\alpha}$ and $R_{2k}^{'\beta}$ are different from the operators in Theorem~\ref{Thm2.14}
, the operators $R_k^{\alpha}, R_{2k}^{'\beta}$ can be obtained from those in Theorem~\ref{Thm2.14} by obvious extension. 
Therefore, the operators $R_k^{\alpha}, R_{2k}^{'\beta}$ in the two theorems coincide respectively in the sense of Remark~\ref{W and V}.
\end{remark}

\begin{remark}\label{Comparing}
	Comparing Theorem~\ref{Thm2.14} with
	Theorem~\ref{Thm3.6}, we notice that
	the homogeneous Rota-Baxter operators of
	weight $0$ with a nonzero degree on the Witt algebra $W$ and
	Virasoro algebra $V$ are quite different, despite $V$ is a
	central extension of $W$.
	
	Explicitly, although the operators $R_k^{\alpha}, R_{2k}^{'\beta}$ in \autoref{Thm2.14} coincide with those in \ref{Thm3.3} respectively,
	not all Rota-Baxter operators on both algebras can be related in this way.
	On one hand, the operators $R_k^{l,\gamma}$ in Theorem~\ref{Thm2.14} are inconsistent with the structure of the Virasoro algebra $V$.
	On the other hand, there are more operators ($R_k^{\theta c},R_k^{\to\mu}$ and $R_{2k}^{'\beta+\vartheta c}$ with $\vartheta\neq0$) in Theorem~\ref{Thm3.6} that cannot be restricted to the operators in \autoref{Thm2.14}.
	In summary, all homogeneous
	Rota-Baxter operators of weight 0 with a nonzero degree on
	the Virasoro algebra $V$ cannot be obtained directly from those on
	the Witt algebra $W$ nor the latter can be extended to the former, although there are overlaps in certain
	special cases.
	
	The main reason for this phenomenon is the (non-zero) central element $C$ in $V$.
	In general, let $\mathfrak{e}$ be a central extension of the Lie algebra $\mathfrak{g}$
by one-dimensional center $\mathbb CC$.
	Then the brackets $[\cdot,\cdot]_{\mathfrak{g}}$ and $[\cdot,\cdot]_{\mathfrak{e}}$ are related by
	\[
	[x,y]_{\mathfrak{e}} = [x,y]_{\mathfrak{g}} + \epsilon(x,y)C,\;\;\forall x,y\in \mathfrak{g},
	\]
	where $\epsilon(-,-)$ is a $2$-cocycle on $\mathfrak{g}$.
	Therefore, by Eq.~\eqref{RB-lambda}, a Rota-Baxter operator $R$ on $\mathfrak{g}$ can be lifted to $\mathfrak{e}$ 
	if and only if
	\[
	\epsilon(R(x),R(y))=0,\;\;\forall x,y\in\mathfrak{g}.
	\]
	In our case, for a homogeneous Rota-Baxter operator on $\mathfrak{g}$ defined through the $\mathbb{C}$-valued function $f$, the above condition reads
	\begin{equation}\label{lifting}
		\tfrac{m^3-m}{12}f(m)f(n)\delta_{m+n,0} = 0\,\;\;\forall m,n\in\mathbb{Z}.
	\end{equation}
\end{remark}

\begin{remark}\label{rek:RB0 on V}
Note that, 
in view of Remark~\ref{rek:RB0}, a complete set of representatives of all homogeneous Rota-Baxter operators of weight 0 with a nonzero degree $k$ on $V$ under the action of $\mathbb{C}^{\ast}$ by scalar multiplication
is given by
\begin{itemize}
	\item
	$\mathcal{R}^V_k=
	\left\{R_k^{0}=0, R_k^{c}, R_k^{1}, R_k^{\to1}
	\right\}
	$, if $k$ is odd;
	\item
	$\mathcal{R}^V_k=
	\left\{R_k^{0}=0, R_k^{c}, R_k^{1}, R_{k}^{'1}+\vartheta R_k^{c}, R_k^{\to1}
	\middle|\vartheta\in\mathbb{C}
	\right\}
	$, if $k$ is even.
\end{itemize}
\end{remark}

\subsection{Homogeneous Rota-Baxter operators of weight 1 on the Virasoro algebra}

Let $R_k$ be a homogeneous Rota-Baxter operator of weight 1 with degree $k$ on $V$
satisfying Eqs.~(\ref{eq:RBV1}) and (\ref{eq:RBV2}). By Eqs. (\ref{RB-lambda}), \eqref{Virasoro_1} and \eqref{Virasoro_2}, we have the following equations:
 \begin{align}\label{eq:main1}
            &f(m)f(n)\left((m-n)L_{m+n}+\tfrac{m^3-m}{12}\delta_{m+n,0}C\right)\\
      =~&f(m)(m-n+k)(f(m+n)L_{m+n}+\theta\delta_{m+n,0}C) \nonumber \\
      +~&f(m)\tfrac{m^3-m}{12}\delta_{m+n,k}(\mu L_k+\nu\delta_{k,0}C) \nonumber \\
      +~&f(n)(m-n-k)(f(m+n)L_{m+n}+\theta\delta_{m+n,0}C) \nonumber \\
      -~&f(n)\tfrac{n^3-n}{12}\delta_{m+n,k}(\mu L_k+\nu\delta_{k,0}C) \nonumber \\
      +~&(m-n)(f(m+n-k)L_{m+n-k}+\theta\delta_{m+n,k}C) \nonumber \\
      +~&\tfrac{(m-k)^3-(m-k)}{12}\delta_{m+n,2k}(\mu L_k+\nu\delta_{k,0}C),\;\;\forall m,n\in \mathbb{Z}. \nonumber
   \end{align}
If $k\neq 0$, then by Eq.~(\ref{eq:main1}), we have
\begin{equation*}
      (m-n)f(m+n-k)=0,\;\;\forall m,n\in \mathbb{Z}.
 \end{equation*}
Thus $f(m)=0$ for any $m\in\mathbb{Z}$. In this case, by  Eq.~(\ref{eq:main1}) again, we show that
$\theta=0$, $\nu=0$ and $\mu=0$. Hence, any homogeneous Rota-Baxter operator of weight 1 with a nonzero degree $k$ on $V$
is zero.

Next let $R_0$ be a homogeneous Rota-Baxter operator of weight 1 with degree $0$ on $V$.
Then  Eq.~(\ref{eq:main1}) becomes
\begin{align}
            &f(m)f(n)\left((m-n)L_{m+n}+\tfrac{m^3-m}{12}\delta_{m+n,0}C\right)\\
      =~&(m-n)(f(m+n)L_{m+n}+\theta\delta_{m+n,0}C)(f(m)+f(n)+1)\nonumber \\
      +~&\tfrac{m^3-m}{12}\delta_{m+n,0}(\mu L_0+\nu C)(f(m)+f(n)+1),\;\;\forall m,n\in\mathbb{Z}.\nonumber
    \end{align}
By this equation, we have the following equations:
    \begin{equation}\label{Eq_3.16}
      f(m)f(n)=(f(m)+f(n)+1)f(m+n), \;\forall m\neq n, m+n\neq0;
    \end{equation}
      \begin{equation}\label{Eq_3.17}
      mf(m)f(-m)=(f(m)+f(-m)+1)\left(mf(0)+\tfrac{m^3-m}{24}\mu\right),\;\forall m\in\mathbb{Z};
    \end{equation}
    \begin{equation}\label{Eq_3.18}
      \tfrac{m^3-m}{24}f(m)f(-m)=(m\theta+\tfrac{m^3-m}{24}\nu)(f(m)+f(-m)+1),\;\forall m\in\mathbb{Z}.
    \end{equation}

 Let $n=0$ in Eq.~\eqref{Eq_3.16}.  Then we have
    \begin{equation*}
      (f(m)+1)f(m)=0,\;\;\forall m\ne 0.
    \end{equation*}
  Hence $f(m)=0$ or $-1$ for $m\neq0$.

Set
\[
\mathcal{I}_1=\{m\mid f(m)=0\}, \mathcal{I}_2=\{m\mid f(m)=-1\}.
\]

By Lemma~\ref{lemma2.15}, we have the following conclusion.
\begin{lemma}\label{lemma3.7}
Let $f$ be a $\mathbb C$-valued function defined on $\mathbb{Z}$ satisfying Eq.~(\ref{Eq_3.16}). If $m,n\in \mathbb{Z}$ such that $m\ne n$, $m+n\neq 0$ and for $i=1,2$,
$m, n\in \mathcal{I}_i$, then $m+n\in \mathcal{I}_i$.
\end{lemma}
 Let $m=1$ in Eqs. \eqref{Eq_3.17} and \eqref{Eq_3.18}. Then we have
  \begin{equation}\label{Eq_3.19}
    f(1)f(-1)=(f(1)+f(-1)+1)f(0),
  \end{equation}
  \begin{equation}\label{Eq_3.20}
    (f(1)+f(-1)+1)\theta=0.
  \end{equation}
Therefore, we can divide the situation into four cases:
  \begin{enumerate}
    \item[(i)] $1,-1\in \mathcal{I}_1$;
    \item[(ii)] $1,-1\in \mathcal{I}_2$;
    \item[(iii)] $1\in \mathcal{I}_1, -1\in \mathcal{I}_2$;
    \item[(iv)] $1\in \mathcal{I}_2, -1\in \mathcal{I}_1$.
  \end{enumerate}

\begin{proposition}\label{prop3.8}
    If $1,-1\in \mathcal{I}_1$, then $\theta=0$ and $f,\mu,\nu$ for $R_0$ of weight 1 in Eqs. (\ref{eq:RBV1}-\ref{eq:RBV2})
    belong to one of the following cases:
  \begin{enumerate}
    \item $\mathcal{I}_1=\{m\mid m\leqslant1\}, \mathcal{I}_2=\{m\mid m\geqslant2\}$, and $\mu, \nu\in\mathbb C$ are arbitrary;
    \item $\mathcal{I}_1=\{m\mid m\geqslant-1\}, \mathcal{I}_2=\{m\mid m\leqslant-2\}$, and $\mu, \nu\in\mathbb C$ are arbitrary;
    \item $\mathcal{I}_1=\mathbb{Z}, \mathcal{I}_2=\varnothing$ and $\mu=\nu=0$. In this case, $R_0=0$.
  \end{enumerate}
  \end{proposition}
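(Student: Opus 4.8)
The plan is to exploit the case assumption $1,-1\in\mathcal I_1$ to first pin down $\theta$, and then reduce the structure of $\mathcal I_1,\mathcal I_2$ to the classification already obtained for the Witt algebra. First I would observe that since $f(1)=f(-1)=0$, Eq.~\eqref{Eq_3.20} gives $(0+0+1)\theta=\theta=0$ immediately, disposing of $\theta$ at once. This is the easy part and handles one of the three assertions (that $\theta=0$) uniformly across all three subcases.

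Next I would determine the possible shapes of $\mathcal I_1$ and $\mathcal I_2$. The key point is that Eq.~\eqref{Eq_3.16} is formally identical to Eq.~\eqref{Eq2.7} once the factor $(m-n)$ is divided out for $m\neq n$, so Lemma~\ref{lemma3.7} (the analogue of Lemma~\ref{lemma2.15}) applies and $\mathcal I_1,\mathcal I_2$ are closed under addition of distinct non-opposite elements. Since $1,-1\in\mathcal I_1$ by hypothesis, I am exactly in the situation of Proposition~\ref{prop2.16} with $m_0=1$: there is a nonzero integer (namely $1$) with both $1$ and $-1$ in $\mathcal I_1$. Hence the trichotomy of Proposition~\ref{prop2.16} applies verbatim and forces $\mathcal I_1,\mathcal I_2$ to be one of the three stated pairs, namely $(\{m\le1\},\{m\ge2\})$, $(\{m\ge-1\},\{m\le-2\})$, or $(\mathbb Z,\varnothing)$. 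This identifies the function $f$ on all nonzero integers in each subcase.

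The remaining work is to determine $\mu$ and $\nu$ in each subcase using the two central-cocycle equations \eqref{Eq_3.17} and \eqref{Eq_3.18} (with $\theta=0$ already known). The main obstacle, and the step requiring genuine care, is verifying that in subcases (1) and (2) both $\mu$ and $\nu$ are genuinely unconstrained, whereas in subcase (3) they are forced to vanish. For subcase (3), where $f\equiv0$, I would plug $f(m)=f(-m)=0$ into \eqref{Eq_3.17} and \eqref{Eq_3.18}: the left-hand sides vanish while the right-hand sides become $\bigl(mf(0)+\tfrac{m^3-m}{24}\mu\bigr)$ and $\tfrac{m^3-m}{24}\nu$ respectively, and since $f(0)=0$ here these must vanish for all $m$, forcing $\mu=\nu=0$ and hence $R_0=0$. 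For subcases (1) and (2), the crucial observation is that whenever $m\neq0$ exactly one of $m,-m$ lies in $\mathcal I_1$ and the other in $\mathcal I_2$, so $f(m)+f(-m)+1 = 0+(-1)+1 = 0$ for every nonzero $m$; this makes the right-hand sides of both \eqref{Eq_3.17} and \eqref{Eq_3.18} vanish identically, while the products $f(m)f(-m)=0\cdot(-1)=0$ on the left also vanish, so both equations hold automatically and impose no condition on $\mu$ or $\nu$. I would also check the residual constraint from \eqref{Eq_3.19} at $m=1$, which reads $f(1)f(-1)=(f(1)+f(-1)+1)f(0)$, i.e. $0=f(0)$ in subcases (1),(2) — consistent with $f$ vanishing there. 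Assembling these computations yields exactly the three cases in the statement, completing the proof.
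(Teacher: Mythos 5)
Your overall architecture matches the paper's (extract $\theta=0$ and $f(0)=0$ from Eqs.~\eqref{Eq_3.19}--\eqref{Eq_3.20}, classify $\mathcal{I}_1,\mathcal{I}_2$, then settle $\mu,\nu$ case by case), and your treatment of $\mu$ and $\nu$ in each of the three subcases is essentially correct. The genuine gap is the assertion that ``the trichotomy of Proposition~\ref{prop2.16} applies verbatim.'' The proof of Proposition~\ref{prop2.16} does not run on Lemma~\ref{lemma2.15} alone: its key preliminary step is to substitute $n=-m$ into Eq.~\eqref{Eq2.7} to obtain $f(m)f(-m)=0$ for all $m\neq 0$, whence ``$m\in\mathcal{I}_2$ implies $-m\in\mathcal{I}_1$'', and it is this implication that launches each branch of the trichotomy. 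That substitution is exactly the diagonal case $m+n=0$ which Eq.~\eqref{Eq_3.16} explicitly excludes; in the Virasoro setting the diagonal is governed instead by Eqs.~\eqref{Eq_3.17}--\eqref{Eq_3.18}, which involve $\mu$ and $\nu$ and do not hand you $f(m)f(-m)=0$. Concretely, your argument as written does not exclude the configuration $1,-1\in\mathcal{I}_1$ together with $2,-2\in\mathcal{I}_2$. Ruling it out needs an extra argument: for instance, the paper uses Eq.~\eqref{Eq_3.18} at $m=2$ to force $\nu=-1$ and derives a contradiction; alternatively one can stay inside Eq.~\eqref{Eq_3.16}, first propagating $f(\pm3)=-1$ (take $m=\pm3$, $n=\mp1$) and then testing $m=3$, $n=-2$, where the left side equals $1$ while the right side equals $(f(3)+f(-2)+1)f(1)=0$. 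Some such step must be supplied before the three shapes of $(\mathcal{I}_1,\mathcal{I}_2)$ can be claimed.

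A smaller inaccuracy: in subcases (1) and (2) you assert that $f(m)+f(-m)+1=0$ ``for every nonzero $m$.'' This fails at $m=\pm1$, where both $1$ and $-1$ lie in $\mathcal{I}_1$ and the sum equals $1$. The conclusion that $\mu$ and $\nu$ are unconstrained still holds, because at $m=\pm1$ the prefactor $\tfrac{m^3-m}{24}$ vanishes and Eqs.~\eqref{Eq_3.17}--\eqref{Eq_3.18} reduce to $f(0)=0$ and $\theta=0$, which you have already established; but the blanket claim should be restricted to $|m|\geqslant 2$.
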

\begin{proof}
When $1,-1\in \mathcal{I}_1$, Eqs. \eqref{Eq_3.19} and \eqref{Eq_3.20} become
  \begin{equation*}
     f(0)=0,\;\;\theta=0.
  \end{equation*}
Thus $0\in \mathcal{I}_1$. Moreover, there are following three cases:
  \begin{enumerate}
    \item If $2\in \mathcal{I}_2$, then $-2\in \mathcal{I}_1$. Otherwise, if $-2\in \mathcal{I}_2$, then Eq.~\eqref{Eq_3.18} implies
               $\nu=-1$. Thus Eq.~\eqref{Eq_3.18} becomes
\begin{equation*}
      \tfrac{m^3-m}{24}f(m)f(-m)=-\tfrac{m^3-m}{24}(f(m)+f(-m)+1),\;\;\forall m\in \mathbb{Z},
\end{equation*}
which contradicts with the assumption that $1,-1\in \mathcal{I}_1$.  Since $-1,-2\in \mathcal{I}_1$, by Lemma~\ref{lemma3.7} and induction, $\mathcal{I}_1$ contains all negative integers. Therefore, for any $m\geqslant2$, $2-m\in \mathcal{I}_1$. Hence $m\in \mathcal{I}_2$. Otherwise, by Lemma~\ref{lemma3.7}, we have $2=(2-m)+m\in \mathcal{I}_1$ which contradicts with the assumption that $2\in \mathcal{I}_2$. Hence
$$\mathcal{I}_1=\{m\mid m\leqslant1\}, \mathcal{I}_2=\{m\mid m\geqslant2\}.$$
In this case, Eqs. \eqref{Eq_3.17} and \eqref{Eq_3.18} hold automatically for $m\neq0,\pm1$. Thus $\mu$ and $\nu$ are arbitrary.
   \item Similarly, if $-2\in \mathcal{I}_2$, then $\mathcal{I}_1=\{m\mid m\geqslant-1\}, \mathcal{I}_2=\{m\mid m\leqslant-2\}$ and $\mu$ and $\nu$ are arbitrary.
    \item If $2,-2\in \mathcal{I}_1$, then $\mathcal{I}_1=\mathbb{Z}, \mathcal{I}_2=\varnothing$. In this case, Eqs. \eqref{Eq_3.17} and \eqref{Eq_3.18} become
               \begin{equation*}
                 \tfrac{m^3-m}{24}\mu=0,\;\;\forall m\neq 0, \pm 1;\;\;\nu=0.
               \end{equation*}
 Thus $\mu=0$ and hence $R_0=0$.
  \end{enumerate}
Therefore the conclusion holds.
\end{proof}

Similarly, for case (ii) of $-1,1\in \mathcal{I}_2$, we have the following conclusion.
\begin{proposition}\label{prop3.9}
If $1,-1\in \mathcal{I}_2$, then $\theta=0$ and $f,\mu,\nu$ for $R_0$ of weight 1 in Eqs. (\ref{eq:RBV1})-(\ref{eq:RBV2})
belong to one of the following cases:
   \begin{enumerate}
    \item $\mathcal{I}_1=\{m\mid m\geqslant2\}, \mathcal{I}_2=\{m\mid m\leqslant1\}$, and $\mu, \nu\in\mathbb C$ are arbitrary;
    \item $\mathcal{I}_1=\{m\mid m\leqslant-2\}, \mathcal{I}_2=\{m\mid m\geqslant-1\}$, and $\mu, \nu\in\mathbb C$ are arbitrary;
    \item $\mathcal{I}_1=\varnothing, \mathcal{I}_2=\mathbb{Z}$ and $\mu=0, \nu=-1$. In this case, $R_0=-{\rm Id}$.
  \end{enumerate}
\end{proposition}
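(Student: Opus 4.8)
The plan is to derive this proposition from \autoref{prop3.8} by means of the involution on weight-$1$ Rota--Baxter operators recorded in \autoref{rek:RB1}, namely $R_0\mapsto \tilde R_0:=-R_0-{\rm Id}$. First I would compute how this involution acts on the data $(f,\theta,\mu,\nu)$ defining a homogeneous operator as in Eqs.~\eqref{eq:RBV1}--\eqref{eq:RBV2}. A direct calculation gives $\tilde R_0(L_m)=(-f(m)-1)L_m-\theta\delta_{m,0}C$ and $\tilde R_0(C)=-\mu L_0+(-\nu-1)C$, so that $\tilde f=-f-1$, $\tilde\theta=-\theta$, $\tilde\mu=-\mu$ and $\tilde\nu=-\nu-1$. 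In particular $\tilde f$ interchanges the two index sets, $\tilde{\mathcal I}_1=\mathcal I_2$ and $\tilde{\mathcal I}_2=\mathcal I_1$.

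Under this correspondence the hypothesis $1,-1\in\mathcal I_2$ becomes $1,-1\in\tilde{\mathcal I}_1$, which is exactly the hypothesis of \autoref{prop3.8} applied to $\tilde R_0$. Since $\tilde R_0$ is again a homogeneous weight-$1$ Rota--Baxter operator of degree $0$ (by \autoref{rek:RB1}), that proposition yields $\tilde\theta=0$ together with its three cases. Translating back through the formulas above, $\tilde\theta=0$ gives $\theta=0$; case (1) for $\tilde R_0$ (where $\tilde{\mathcal I}_1=\{m\leqslant 1\}$, $\tilde{\mathcal I}_2=\{m\geqslant 2\}$, with $\tilde\mu,\tilde\nu$ arbitrary) becomes case (1) here; case (2) becomes case (2); and case (3) (where $\tilde R_0=0$) becomes $\mu=0$, $\nu=-1$ and $R_0=-\tilde R_0-{\rm Id}=-{\rm Id}$, i.e.\ case (3). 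Because the substitutions $\mu\mapsto-\mu$ and $\nu\mapsto-\nu-1$ are bijections of $\mathbb C$, the arbitrariness of $\mu,\nu$ in cases (1)--(2) is preserved.

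Alternatively, and in keeping with the ``similarly'' of the statement, one can mirror the proof of \autoref{prop3.8} verbatim. One would first specialize Eqs.~\eqref{Eq_3.19} and \eqref{Eq_3.20} to the hypothesis $f(1)=f(-1)=-1$, obtaining $f(0)=-1$ (so $0\in\mathcal I_2$) and $\theta=0$, and then run the trichotomy on the location of $2$ and $-2$, using \autoref{lemma3.7} and induction to propagate membership along $\mathcal I_1$ and $\mathcal I_2$. The step I expect to be the main obstacle is the pair of ``otherwise'' sub-arguments that exclude the degenerate configurations in which $2$ and $-2$ lie in the same set: there one must feed $m=2$ into Eq.~\eqref{Eq_3.18} to pin down the value of $\nu$ and then check, via Eqs.~\eqref{Eq_3.16}--\eqref{Eq_3.18}, that this value is incompatible with $1,-1\in\mathcal I_2$. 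The involution argument sidesteps this entirely, at the modest cost of verifying that the system \eqref{Eq_3.16}--\eqref{Eq_3.18} is carried to itself under $(f,\theta,\mu,\nu)\mapsto(-f-1,-\theta,-\mu,-\nu-1)$, which is automatic once one knows $\tilde R_0$ is a Rota--Baxter operator.
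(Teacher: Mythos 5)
Your involution argument is correct and is genuinely different from what the paper does: the paper's own ``proof'' of Proposition~\ref{prop3.9} is the single word ``similarly,'' i.e.\ it intends the reader to rerun the case analysis of Proposition~\ref{prop3.8} with the roles of $\mathcal{I}_1$ and $\mathcal{I}_2$ exchanged (your second, alternative route). Your computation of the action of $R\mapsto -R-{\rm Id}$ on the data $(f,\theta,\mu,\nu)$ of Eqs.~\eqref{eq:RBV1}--\eqref{eq:RBV2} is right, the swap $\tilde{\mathcal I}_1=\mathcal I_2$, $\tilde{\mathcal I}_2=\mathcal I_1$ does place $\tilde R_0$ under the hypothesis of Proposition~\ref{prop3.8}, and the translation of the three cases back through $\mu\mapsto-\mu$, $\nu\mapsto-\nu-1$ reproduces exactly the three cases claimed (this is corroborated by the correspondence table the authors record after Theorem~\ref{Thm3.17}, e.g.\ $R_0^{\leqslant1,0,\mu,\nu}\Leftrightarrow R_0^{>1,0,-\mu,-\nu-1}$). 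What the involution buys is that you inherit the exhaustiveness of the classification in Proposition~\ref{prop3.8} for free and avoid redoing the exclusion of the degenerate configuration $2,-2\in\mathcal I_1$ with $1,-1\in\mathcal I_2$; the only things you must check, and you do, are that $-R_0-{\rm Id}$ is again a homogeneous weight-$1$ operator of degree $0$ of the required form and that the general fact from Remark~\ref{rek:RB1} applies to $V$. What the paper's mirrored route buys is independence from Proposition~\ref{prop3.8} as a black box, at the cost of repeating the delicate ``otherwise'' sub-arguments; as you anticipate, those are the fragile steps (in the mirrored setting the decisive contradiction actually comes most cleanly from Eq.~\eqref{Eq_3.16} applied to pairs such as $(m,n)=(3,-2)$ after propagating values with Lemma~\ref{lemma3.7}, rather than from Eq.~\eqref{Eq_3.18} alone), which is a further point in favor of your reduction.
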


For cases (iii) and (iv), it is straightforward to get the following conclusions.

 \begin{proposition}\label{prop3.10}
    If $1\in \mathcal{I}_1, -1\in \mathcal{I}_2$, then $\mathbb{Z}_+\subset \mathcal{I}_1, \mathbb{Z}_-\subset \mathcal{I}_2$, and $f(0),\mu,\theta,\nu \in \mathbb C$ are arbitrary.
  \end{proposition}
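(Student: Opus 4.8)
The plan is to first determine $f$ completely on the nonzero integers using Eq.~\eqref{Eq_3.16} together with Lemma~\ref{lemma3.7}, and then to observe that Eqs.~\eqref{Eq_3.17} and \eqref{Eq_3.18} impose no constraint on the remaining data $f(0),\mu,\theta,\nu$. Recall that for $m\neq0$ we already know $f(m)\in\{0,-1\}$, so the task on nonzero integers is purely to decide membership in $\mathcal{I}_1$ versus $\mathcal{I}_2$, starting from the hypotheses $f(1)=0$ and $f(-1)=-1$.

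To show $\mathbb{Z}_+\subset\mathcal{I}_1$, I would argue by induction on $m$. The base case $f(1)=0$ is the hypothesis. The step to $f(2)=0$ cannot be obtained from Lemma~\ref{lemma3.7}, since that lemma only combines \emph{distinct} summands; instead I would feed $m=2$, $n=-1$ (which satisfy $m\neq n$ and $m+n\neq0$) into Eq.~\eqref{Eq_3.16}: if $f(2)=-1$, then the left-hand side equals $f(2)f(-1)=1$ while the right-hand side is $(f(2)+f(-1)+1)f(1)=0$, a contradiction, so $f(2)=0$. For $m\geq3$, writing $m=1+(m-1)$ with $1\neq m-1$ and applying Lemma~\ref{lemma3.7} to $1,m-1\in\mathcal{I}_1$ yields $m\in\mathcal{I}_1$, closing the induction. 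The inclusion $\mathbb{Z}_-\subset\mathcal{I}_2$ is entirely symmetric: the base step $f(-2)=-1$ follows by substituting $m=-2$, $n=1$ into Eq.~\eqref{Eq_3.16} (if $f(-2)=0$ one gets $0=-1$), and the induction then combines $-1$ and $m+1$ through Lemma~\ref{lemma3.7}.

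With $f$ thus determined, for every $m\neq0$ the pair $(f(m),f(-m))$ equals $(0,-1)$ or $(-1,0)$, so that $f(m)f(-m)=0$ and $f(m)+f(-m)+1=0$. Substituting this into Eqs.~\eqref{Eq_3.17} and \eqref{Eq_3.18} makes both sides vanish identically (the $m=0$ instances being trivial), so these equations hold for \emph{all} values of $f(0),\mu,\theta,\nu$; one checks moreover that Eq.~\eqref{Eq_3.16} likewise holds for any value of $f(0)$ by running through the cases according to the signs of $m$ and $n$. Hence these four parameters are arbitrary. The main obstacle is precisely the pair of base cases $f(\pm2)$: because Lemma~\ref{lemma3.7} refuses to add an element to itself, one genuinely needs the mixed-sign substitutions in Eq.~\eqref{Eq_3.16} that exploit the opposite values $f(1)=0$ and $f(-1)=-1$ to break the symmetry and start the induction.
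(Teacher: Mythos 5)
Your proof is correct, and it takes the route the paper intends but does not write out (the paper dismisses cases (iii) and (iv) with ``it is straightforward''). The one genuinely delicate point is that Lemma~\ref{lemma3.7} cannot produce $f(\pm 2)$ from $f(\pm 1)$ because it forbids $m=n$; you resolve this correctly with the mixed-sign substitutions $(m,n)=(2,-1)$ and $(-2,1)$ in Eq.~\eqref{Eq_3.16} before running the induction, and your check that Eqs.~\eqref{Eq_3.17} and \eqref{Eq_3.18} degenerate (since $f(m)f(-m)=0$ and $f(m)+f(-m)+1=0$ for all $m\neq0$) correctly establishes that $f(0),\mu,\theta,\nu$ are arbitrary.
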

 \begin{proposition}\label{prop3.11}
    If $1\in \mathcal{I}_2, -1\in \mathcal{I}_1$, then $\mathbb{Z}_+\subset \mathcal{I}_2, \mathbb{Z}_-\subset \mathcal{I}_1$, and $f(0),\mu,\theta,\nu \in \mathbb C$ are arbitrary.
  \end{proposition}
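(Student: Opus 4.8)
The plan is to first pin down the sign sets $\mathcal{I}_1$ and $\mathcal{I}_2$ completely from the single seed $f(1)=-1$, $f(-1)=0$, and then to show that none of the remaining data $f(0),\mu,\theta,\nu$ is constrained by Eqs.~\eqref{Eq_3.16}, \eqref{Eq_3.17}, and \eqref{Eq_3.18}. The determination of $\mathcal{I}_1,\mathcal{I}_2$ rests on Lemma~\ref{lemma3.7}, but that lemma closes each $\mathcal{I}_i$ only under sums of \emph{distinct} elements, so it cannot by itself produce $2=1+1$ or $-2=(-1)+(-1)$; the induction therefore needs a separate base step at $\pm2$.

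For that base step I would feed a mixed-sign pair into Eq.~\eqref{Eq_3.16}. Taking $m=2,n=-1$ (admissible since $2\neq-1$ and $2+(-1)=1\neq0$) and using $f(-1)=0$ collapses the equation to $0=(f(2)+1)f(1)=-(f(2)+1)$, forcing $f(2)=-1$, i.e. $2\in\mathcal{I}_2$. Symmetrically, $m=-2,n=1$ together with $f(-1)=0$ gives $-f(-2)=0$, so $-2\in\mathcal{I}_1$. With $1,2\in\mathcal{I}_2$ and $-1,-2\in\mathcal{I}_1$ in hand, a routine induction using Lemma~\ref{lemma3.7} finishes the job: for $n\geq3$ write $n=(n-1)+1$ with $n-1\neq1$ and $n\neq0$ to get $n\in\mathcal{I}_2$, and for $n\leq-3$ write $n=(n+1)+(-1)$ with $n+1\neq-1$ and $n\neq0$ to get $n\in\mathcal{I}_1$. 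Hence $\mathbb{Z}_+\subset\mathcal{I}_2$ and $\mathbb{Z}_-\subset\mathcal{I}_1$.

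The last step is to verify that $f(0),\mu,\theta,\nu$ are unconstrained. The key observation is that for every $m\neq0$ exactly one of $f(m),f(-m)$ equals $-1$ and the other equals $0$, so that both $f(m)f(-m)=0$ and $f(m)+f(-m)+1=0$. Consequently each side of Eqs.~\eqref{Eq_3.17} and \eqref{Eq_3.18} vanishes for $m\neq0$, while the $m=0$ instances are trivially satisfied because every term carries a factor of $m$ or $\tfrac{m^3-m}{24}$, both of which vanish at $m=0$; thus neither equation imposes any condition on $f(0),\mu,\theta,\nu$. Finally Eq.~\eqref{Eq_3.16} involves only $f$, and a short case analysis on the signs of $m,n,m+n$ (all-positive, all-negative, mixed) shows it is satisfied by the function just constructed, exactly as already seen when setting $n=0$; so no further constraint appears. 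I expect the only genuinely non-mechanical point to be the bootstrap from $\pm1$ to $\pm2$: because Lemma~\ref{lemma3.7} excludes equal summands, one is forced to exploit the mixed-sign cancellation in Eq.~\eqref{Eq_3.16} rather than additivity within a single $\mathcal{I}_i$.
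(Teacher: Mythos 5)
Your proof is correct, and it is essentially the argument the paper intends: for cases (iii) and (iv) the paper only remarks that the conclusion is ``straightforward,'' so your write-up supplies the missing details in the same spirit as Propositions~\ref{prop2.16}--\ref{prop2.18}. You correctly identify the one point that Lemma~\ref{lemma3.7} cannot handle --- passing from $\pm1$ to $\pm2$ --- and resolve it with the mixed-sign instances $(m,n)=(2,-1)$ and $(-2,1)$ of Eq.~\eqref{Eq_3.16}, after which the induction and the verification that $f(m)f(-m)=0$ and $f(m)+f(-m)+1=0$ annihilate both sides of Eqs.~\eqref{Eq_3.17}--\eqref{Eq_3.18} (leaving $f(0),\theta,\mu,\nu$ free) go through exactly as you say.
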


From the above propositions, we see that the function $f$ is independent from $\mu,\theta$ and $\nu$.
Moreover, we obtain the following classification.

\begin{theorem}\label{Thm3.17}
A homogeneous Rota-Baxter operator of weight 1 with degree $0$ on the Virasoro algebra $V$ is of the form
\[
R_0^{\ast,\theta,\mu,\nu} = R_0^{\ast} + \theta E_{c,0} + \mu E_{0,c} + \nu E_{c,c},
\]
where $R_0^{\ast}$ is a homogeneous Rota-Baxter operator of weight $1$ with degree $0$ on the Witt algebra $W$ and $E_{c,0}, E_{0,c}, E_{c,c}$ are the operators on $V_0$ mentioned in Remark~\ref{EndV0}.
Moreover, $R_0^{\ast,\mu,\theta,\nu}$
must be one of the following cases.
 \begin{enumerate}
 	\item
 	$R_0^{\leqslant1,0,\mu,\nu}$, where $\mu,\nu\in\mathbb{C}$.
 	\item
 	$R_0^{\geqslant-1,0,\mu,\nu}$, where $\mu,\nu\in\mathbb{C}$.
 	\item
 	$R_0^{0,0,0,0}=0$.
 	\item
 	$R_0^{>1,0,\mu,\nu}$, where $\mu,\nu\in\mathbb{C}$.
 	\item
 	$R_0^{<-1,0,\mu,\nu}$, where $\mu,\nu\in\mathbb{C}$.
 	\item
 	$R_0^{\varnothing,0,0,-1}=-{\rm Id}_V$.
 	\item
 	$R_0^{+,\alpha,\theta,\mu,\nu}$, where $\alpha,\theta,\mu,\nu\in\mathbb{C}$.
 	\item
 	$R_0^{-,\alpha,\theta,\mu,\nu}$, where $\alpha,\theta,\mu,\nu\in\mathbb{C}$.
  \end{enumerate}
Conversely, the above operators are all the homogeneous 
Rota-Baxter operators of weight $1$ with degree $0$ on the Virasoro
algebra $V$.
\end{theorem}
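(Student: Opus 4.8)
The plan is to reduce the theorem to an organisational step built on the case analysis already carried out in Propositions~\ref{prop3.8}--\ref{prop3.11}, followed by a converse verification. First I would start from the general form of a weight~$1$, degree~$0$ operator $R_0$ on $V$ given by Eqs.~\eqref{eq:RBV1}--\eqref{eq:RBV2} with $k=0$, so that $R_0(L_m)=f(m)L_m+\theta\delta_{m,0}C$ and $R_0(C)=\mu L_0+\nu C$; thus $R_0$ is completely encoded by the function $f$ together with the scalars $\theta,\mu,\nu$. Using the basis operators of Remark~\ref{EndV0}, I would record the decomposition
\[
R_0 = R_0^{\ast} + \theta E_{c,0} + \mu E_{0,c} + \nu E_{c,c},
\]
where $R_0^{\ast}$ is the graded operator $L_m\mapsto f(m)L_m$, $C\mapsto 0$ on $V$. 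This exhibits $R_0$ as a Witt-type part $R_0^{\ast}$ plus central corrections governed by $\theta,\mu,\nu$, which is the shape asserted in the theorem.

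Next I would pin down $f$. Setting $n=0$ in \eqref{Eq_3.16} forces $f(m)\in\{0,-1\}$ for every $m\neq0$, so the pair $\bigl(f(1),f(-1)\bigr)$ lands in exactly the four mutually exclusive cases (i)--(iv). Applying Propositions~\ref{prop3.8}, \ref{prop3.9}, \ref{prop3.10}, \ref{prop3.11} in turn determines $(\mathcal{I}_1,\mathcal{I}_2)$ and $f(0)$ in each case, and the key observation is that the resulting zero-set patterns coincide exactly with the eight functions underlying the weight~$1$, degree~$0$ Rota-Baxter operators on $W$ listed in Theorem~\ref{Thm2.21}: cases (i) and (ii) reproduce $R_0^{\leqslant1},R_0^{\geqslant-1},R_0^{0}$ and $R_0^{>1},R_0^{<-1},R_0^{\varnothing}$, while cases (iii) and (iv) reproduce $R_0^{+,\alpha},R_0^{-,\alpha}$ with $\alpha=f(0)$ free. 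Hence $R_0^{\ast}$ is the corresponding Witt operator, and the admissible $\theta,\mu,\nu$ can be read off directly: Propositions~\ref{prop3.8} and \ref{prop3.9} give $\theta=0$ with $\mu,\nu$ free except in the two degenerate subcases ($R_0^{0}$ forcing $\mu=\nu=0$, and $R_0^{\varnothing}$ forcing $\mu=0,\nu=-1$), whereas Propositions~\ref{prop3.10} and \ref{prop3.11} leave $\theta,\mu,\nu$ entirely free. Collecting these produces precisely the eight families (1)--(8).

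For the converse I would verify that each operator in the list genuinely satisfies the weight~$1$ Rota-Baxter identity on $V$. Since $C$ is central, the identity only needs to be tested on pairs $(L_m,L_n)$, and this reduces to confirming Eqs.~\eqref{Eq_3.16}, \eqref{Eq_3.17}, \eqref{Eq_3.18} for the explicit $f,\theta,\mu,\nu$ of each family. For $m+n\neq0$ this is \eqref{Eq_3.16}, which holds for every one of the eight zero-set patterns; for $m+n=0$ one checks \eqref{Eq_3.17} and \eqref{Eq_3.18}, and it is exactly here that the stated ranges of $\mu$ and $\nu$ (free, versus pinned to $0$ or $-1$) are confirmed.

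I expect the one genuinely delicate point to be the assertion that the Witt part $R_0^{\ast}$ is itself a Rota-Baxter operator on $W$, i.e.\ that $f$ satisfies the full Witt identity \eqref{Eq2.7}. The derivation on $V$ supplies only its $m+n\neq0$ instances, through \eqref{Eq_3.16}, so the $m+n=0$ instance $f(m)f(-m)=(f(m)+f(-m)+1)f(0)$ must be checked separately --- which it is, by direct substitution of each of the eight explicit functions $f$. It is worth stressing that this cannot be obtained by simply projecting $R_0$ along $V\to W$: when $\mu\neq0$ the operators $R_0$ and $R_0^{\ast}$ fail to intertwine with the projection, the discrepancy being the term $\tfrac{m^3-m}{12}\delta_{m+n,0}\mu L_0$ produced by $R_0(C)$ at $m+n=0$. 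Keeping this $m+n=0$ case aligned between the Witt part and the central parameters $\theta,\mu,\nu$, via the identification of Remark~\ref{W and V}, is the main bookkeeping to get right; everything else is an assembly of Propositions~\ref{prop3.8}--\ref{prop3.11}.
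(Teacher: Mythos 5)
Your proposal is correct and takes essentially the same route as the paper: the paper likewise decomposes $R_0$ into a Witt part plus the central operators of Remark~\ref{EndV0}, reads off $f$ from the case analysis of Propositions~\ref{prop3.8}--\ref{prop3.11} (matched against Propositions~\ref{prop2.16}--\ref{prop2.18} and Theorem~\ref{Thm2.21}), and verifies the converse directly. Your explicit flagging of the $m+n=0$ instance of Eq.~\eqref{Eq2.7} for the Witt part is a welcome refinement of a point the paper dispatches with ``it is easy to check that $f(0)$ given there also satisfies Eq.~\eqref{Eq_3.19}'', but it is not a different argument.
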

\begin{proof}
	By Eqs.~\eqref{Eq_3.16} and \eqref{Eq_3.19}, the function $f$ is independent from $\mu,\theta$ and $\nu$.
	Comparing Eqs.~\eqref{Eq2.7} with \eqref{Eq_3.16}, we see that the values of $f(m)$ at nonzero $m$ are already given in Propositions~\ref{prop2.16}, \ref{prop2.17} and	\ref{prop2.18}. Moreover, it is easy to check that $f(0)$ given there also satisfies Eq.~\eqref{Eq_3.19}. This shows the first part of the theorem.
	The second part follows from Propositions~\ref{prop3.8}, \ref{prop3.9},
	\ref{prop3.10} and \ref{prop3.11}.
	The final part can be directly verified.
\end{proof}

\begin{remark} Similar to Remark~\ref{rek:RB1}, we have the following correspondences between $R$ and $-R-{\rm Id}$ for the Rota-Baxter operators listed in Theorem~\ref{Thm3.17}:
\begin{align*}
   R_0^{\leqslant1,0,\mu,\nu} &\Longleftrightarrow\;\;
   R_0^{>1,0,-\mu,-\nu-1},\\
   R_0^{\geqslant-1,0,\mu,\nu} &\Longleftrightarrow\;\;
   R_0^{<-1,0,-\mu,-\nu-1},\\
   R_0^{0,0,0,0}=0&\Longleftrightarrow\;\;
   R_0^{\varnothing,0,0,-1}=-{\rm Id}_V,\\
   R_0^{+,\alpha,\theta,\mu,\nu}&\Longleftrightarrow\;\;
   R_0^{-,-\alpha-1,-\theta,-\mu,-\nu-1}.
\end{align*}
\end{remark}
\begin{remark}\label{Comparing2}
	Comparing Proposition~\ref{prop1} with \autoref{Thm3.3} and \autoref{Thm2.21} with \ref{Thm3.17},
	we notice that
	homogeneous Rota-Baxter operators with degree $0$
	on the Witt algebra $W$ and the Virasoro algebra $V$ are closely related.
	Explicitly, any homogeneous Rota-Baxter operator with degree $0$ on $V$ can be written as
	\[
	R_0^{\ast,\theta,\mu,\nu} = R_0^{\ast} + F,
	\]
	where $R_0^{\ast}$ is a homogeneous Rota-Baxter operator with degree $0$ on $W$ and $F$ is a certain linear operator on $V_0$ determined by $\theta,\mu$ and $\nu$.
	Moreover, except for case (3) of Proposition~\ref{prop3.9},
	all functions in Propositions~\ref{prop3.8}, \ref{prop3.9},
	\ref{prop3.10} and \ref{prop3.11} satisfy Eq.~\ref{lifting},
	we therefore conclude that all homogeneous Rota-Baxter operators with degree $0$ on $W$
	can be lifted to $V$ 
	except for $R_0^{\varnothing}=-{\rm Id}_W$.
\end{remark}

\section{Solutions of the CYBE on $W\ltimes_{{\rm ad}^{*}} W^{\ast}$ and $V\ltimes_{{\rm ad}^{*}} V^{*}$}

First we give some notations. Let $\mathfrak{g}$ be a Lie algebra.
 An element $r=\sum\limits_i a_i\otimes b_i \in \mathfrak{g}\otimes\mathfrak{g}$ is called a solution of the {\it classical Yang-Baxter equation (CYBE)} on $\mathfrak{g}$ if $r$ satisfies
\begin{equation*}
  [r_{12}, r_{13}] + [r_{12}, r_{23}] + [r_{13}, r_{23}] = 0 \text{ in } U(\mathfrak{g}),
\end{equation*}
where $U(\mathfrak{g})$ is the universal enveloping algebra of $\mathfrak{g}$ and
\begin{equation*}
  r_{12} = \sum_i a_i \otimes b_i \otimes 1, r_{13} = \sum_i a_i \otimes 1 \otimes b_i, r_{23} = \sum_i 1 \otimes a_i \otimes b_i.
\end{equation*}
For any $r=\sum\limits_i a_i\otimes b_i$, set
\begin{equation*}
r^{21}=\sum\limits_i b_i\otimes a_i.
\end{equation*}
It is obvious that $r$ is skew-symmetric if and only if $r=-r^{21}$.

Let ${\rm ad}:\frak g\rightarrow \mathfrak{gl}(\frak g)$ be the adjoint representation of $\frak g$
 defined by ${\rm ad}(x)(y)=[x,y]$ for any $x,y\in \frak g$.
Let ${\rm ad}^{\ast}\colon\mathfrak{g}\rightarrow \mathfrak{gl}(\mathfrak{g}^{\ast})$ be the dual representation of the adjoint representation of $\mathfrak{g}$. On the vector space $\mathfrak{g}\oplus\mathfrak{g}^{\ast}$, there is a natural Lie algebra structure (denoted by $\mathfrak{g} \ltimes_{{\rm ad}^{\ast}} \mathfrak g^{\ast}$) given by
\begin{equation}\label{semiproduct}
[x_1+f_1, x_2+f_2]=[x_1,x_2]+{\rm ad}^{\ast} (x_1)f_2 - {\rm ad}^{\ast} (x_2)f_1,\;\forall x_1,x_2\in \mathfrak{g}, f_1, f_2\in \mathfrak{g}^{\ast}.
\end{equation}

A linear map is said to be of \emph{finite rank} if its image has finite dimension.
A linear operator $R$ on $\mathfrak{g}$ of finite rank can be identified as an element in
$
\mathfrak{g}\otimes\mathfrak{g}^{\ast}
\subset
(\mathfrak{g} \ltimes_{{\rm ad}^{\ast}} \mathfrak g^{\ast})
\otimes
(\mathfrak{g} \ltimes_{{\rm ad}^{\ast}} \mathfrak g^{\ast})
$ as follows.
Let $\{e_i\}_{i\in {I}}$ be a basis of $\mathrm{Im} R$,
then for $x\in\mathfrak{g}$,
$R(x)$ can be written as a linear combination of the basis.
In other words, for each $i\in {I}$ there exists a unique linear functional $R_i\in\mathfrak{g}^{\ast}$ such that
\[
  R(x)=\sum_{i\in {I}}R_i(x)e_i,\;\;\forall x\in\mathfrak{g}.
\]
Note that $I$ is finite since $R$ is of finite rank.
Then we have
\begin{equation}
R=\sum_{i\in {I}} e_i\otimes R_i\in
\mathfrak{g}\otimes\mathfrak{g}^{\ast}
\subset(\mathfrak{g} \ltimes_{{\rm ad}^{\ast}} \mathfrak g^{\ast})
\otimes(\mathfrak{g} \ltimes_{{\rm ad}^{\ast}} \mathfrak g^{\ast}).
\end{equation}

\begin{lemma}\label{lem:op} \cite{Bai1}
 Let $\mathfrak{g}$ be a Lie algebra.
 A linear operator $R$ on $\mathfrak{g}$ of finite rank
 is a Rota-Baxter operator of weight $0$ on $\frak g$
 if and only if $r = R - R^{21}$ is a skew-symmetric solution
 of the CYBE on $\mathfrak{g} \ltimes_{{\rm ad}^{\ast}} \mathfrak g^{\ast}$.
\end{lemma}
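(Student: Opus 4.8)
The plan is to compute the classical Yang-Baxter expression for $r = R - R^{21}$ directly and read off the Rota-Baxter identity from it. Skew-symmetry is free: since $(R^{21})^{21} = R$, one has $r^{21} = R^{21} - R = -r$, so the only content of the lemma is the CYBE. I would work in $\mathfrak{h} := \mathfrak{g} \ltimes_{{\rm ad}^{\ast}} \mathfrak{g}^{\ast}$ and expand
\[
C(r) := [r_{12}, r_{13}] + [r_{12}, r_{23}] + [r_{13}, r_{23}]
\]
via the standard identities $[r_{12}, r_{13}] = \sum_{p,q} [a_p, a_q] \otimes b_p \otimes b_q$, $[r_{12}, r_{23}] = \sum_{p,q} a_p \otimes [b_p, a_q] \otimes b_q$ and $[r_{13}, r_{23}] = \sum_{p,q} a_p \otimes a_q \otimes [b_p, b_q]$, which hold in $U(\mathfrak{h})^{\otimes 3}$ for $r = \sum_p a_p \otimes b_p$. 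Writing $R = \sum_{i \in I} e_i \otimes R_i$ with $e_i \in \mathfrak{g}$ and $R_i \in \mathfrak{g}^{\ast}$ as in the identification preceding the lemma (this is exactly where finite rank enters, ensuring $r$ is a genuine element of $\mathfrak{h} \otimes \mathfrak{h}$), the sum $r$ breaks into a $\mathfrak{g} \otimes \mathfrak{g}^{\ast}$ piece and a $\mathfrak{g}^{\ast} \otimes \mathfrak{g}$ piece, so each of the three brackets above expands into four terms.

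The key simplification is the grading of $\mathfrak{h}$ with $\mathfrak{h}_0 = \mathfrak{g}$, $\mathfrak{h}_1 = \mathfrak{g}^{\ast}$ and $\mathfrak{h}_{\geq 2} = 0$: by \eqref{semiproduct} the bracket is homogeneous (in particular $[\mathfrak{g}^{\ast}, \mathfrak{g}^{\ast}] = 0$), and $r$ is homogeneous of total degree $1$, so $C(r)$ is homogeneous of total degree $2$ in $\mathfrak{h}^{\otimes 3}$. Since each tensor slot has degree $0$ or $1$, every surviving term must have exactly one $\mathfrak{g}$-slot and two $\mathfrak{g}^{\ast}$-slots, leaving only three graded components, of types $(\mathfrak{g}, \mathfrak{g}^{\ast}, \mathfrak{g}^{\ast})$, $(\mathfrak{g}^{\ast}, \mathfrak{g}, \mathfrak{g}^{\ast})$ and $(\mathfrak{g}^{\ast}, \mathfrak{g}^{\ast}, \mathfrak{g})$; all terms in which a bracket is taken between two elements of $\mathfrak{g}^{\ast}$ drop out. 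I would then collect the $(\mathfrak{g}, \mathfrak{g}^{\ast}, \mathfrak{g}^{\ast})$ component, pair its two dual slots against arbitrary $x, y \in \mathfrak{g}$, and simplify using $({\rm ad}^{\ast}(z)\xi)(w) = -\xi([z,w])$ and $R(x) = \sum_i R_i(x) e_i$. A short calculation collapses the three contributions to
\[
[R(x), R(y)] - R([R(x), y]) - R([x, R(y)]),
\]
which is precisely the weight $0$ obstruction of \eqref{RB-lambda}.

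To conclude, I would show the other two components vanish exactly when this one does. The cleanest route is cyclic invariance of $C(r)$ for skew-symmetric $r$ (equivalently, $C(r) \in \wedge^3 \mathfrak{h}$), under which the three ``one $\mathfrak{g}$-slot, two $\mathfrak{g}^{\ast}$-slot'' components are cyclic images of one another and hence vanish simultaneously; alternatively, the identical pairing computation applied to the other two components reproduces the same expression, as one checks directly. Either way $C(r) = 0$ if and only if $[R(x), R(y)] = R([R(x), y] + [x, R(y)])$ for all $x, y \in \mathfrak{g}$, i.e. $R$ is a Rota-Baxter operator of weight $0$. I expect the only real difficulty to be bookkeeping: tracking the twelve terms from the split of $r$, discarding the $[\mathfrak{g}^{\ast}, \mathfrak{g}^{\ast}]$ ones, and carrying the sign in $[\xi, z] = -{\rm ad}^{\ast}(z)\xi$ for $\xi \in \mathfrak{g}^{\ast}$, $z \in \mathfrak{g}$ consistently throughout; once the grading isolates a single component, everything collapses directly to the Rota-Baxter identity.
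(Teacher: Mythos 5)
Your proposal is correct and follows essentially the same route as the paper, which does not prove the lemma itself (it is cited from [Bai1]) but sketches exactly this argument in the discussion surrounding Lemma~\ref{lem4.5}: by Eq.~(\ref{semiproduct}) only the three components of the CYBE with one $\mathfrak{g}$-slot and two $\mathfrak{g}^{\ast}$-slots survive, and each is the weight-$0$ identity (\ref{RB-lambda}) evaluated on a pair of basis vectors. Your coordinate-free grading argument and the pairing computation reproducing $[R(x),R(y)]-R([R(x),y]+[x,R(y)])$ are a clean, fully worked-out version of that same sketch.
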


\begin{remark}
Note that the above conclusion was originally proved for the finite dimensional case and it is not hard to extend it to the infinite dimensional case for linear operators of finite rank.
\end{remark}

For the Witt algebra $W$, let $\{L^{\ast}_n\}_{n\in\mathbb{Z}}$ be the dual basis of $\{L_n\}_{n\in\mathbb{Z}}$. Then the Lie algebra structure on $W\ltimes_{{\rm ad}^{\ast}}W^{\ast}$ is given by
\begin{equation}
  [L_m,L_n] = (m-n)L_{m+n},\;\;[L_m,L^{\ast}_n]  = (n-2m)L^{\ast}_{n-m},\;\;[L^{\ast}_m,L^{\ast}_n]=0,\;\;\forall m,n\in \mathbb{Z}.
\end{equation}

Note that the Rota-Baxter operators of weight 0 on $W$ given in Theorem~\ref{Thm2.14} are of finite rank except for those of type (III).
By Lemma~\ref{lem:op} we obtain the following skew-symmetric solutions of the CYBE on $W\ltimes_{{\rm ad}^{\ast}}W^{\ast}$.
 \begin{enumerate}
      \item[(I)]
      $
      r_k^{\alpha} = \alpha\left(L_{-k}\otimes L^{\ast}_{-2k} - L^{\ast}_{-2k} \otimes L_{-k}\right)
      $,
      where $k\in\mathbb{Z}, \alpha\in\mathbb{C}$;
      \item[(II)]
      $
      r_{2k}^{'\beta}= \beta\left(L_0\otimes L^{\ast}_{-2k} - L^{\ast}_{-2k}\otimes L_0 + 2L_{-k}\otimes L^{\ast}_{-3k} - 2L^{\ast}_{-3k} \otimes L_{-k}\right)
      $,
      where $k\in\mathbb{Z}^{\ast}$ and $\beta\in\mathbb{C}^{\ast}$.
    \end{enumerate}

For the Virasoro algebra $V$, let $\{L^{\ast}_n\}_{n\in\mathbb{Z}}\cup\{C^{\ast}\}$ be the dual basis of $\{L_n\}_{n\in\mathbb{Z}}\cup\{C\}$. Then the Lie algebra structure on $V\ltimes_{{\rm ad}^{\ast}}V^{\ast}$ is given by
\begin{align}
  [L_m,L_n] & = (m-n)L_{m+n}+\tfrac{m^3-m}{12}\delta_{m+n,0}C,\;\; [L_m,L^{\ast}_n]  = (n-2m)L^{\ast}_{n-m},\\
  [L_m,C^{\ast}]& = -\tfrac{m^3-m}{12}L^{\ast}_{-m},
  \;\; [L^{\ast}_m,L_n^{\ast}]=[L^{\ast}_m,C]=[L_m,C]=[C^{\ast}, C] = 0,\;\;\forall m,n\in \mathbb{Z}.\nonumber
\end{align}

{
	Note that the Rota-Baxter operators of weight 0 on $V$ given in Theorems~\ref{Thm3.3} and \ref{Thm3.6} are all of finite rank.
	By Lemma~\ref{lem:op} we obtain the following skew-symmetric solutions of the CYBE on $V\ltimes_{{\rm ad}^{\ast}}V^{\ast}$.
	\begin{enumerate}
		\item[(0)]
		$r_0^{\alpha,\theta,\mu,\nu} = \alpha e_{0,0} + \theta e_{c,0} + \mu e_{0,c} + \nu e_{c,c}$,
		where $\alpha,\theta,\mu,\nu\in\mathbb{C}$ and
		\begin{align*}
			e_{0,0} &= L_0\otimes L^{\ast}_0 - L^{\ast}_0\otimes L_0,\;\;&
			e_{c,0} &= C\otimes L^{\ast}_0 - L^{\ast}_0\otimes C,\\
			e_{0,c} &= L_0\otimes C^{\ast} - C^{\ast}\otimes L_0,\;\;&
			e_{c,c} &= C\otimes C^{\ast} - C^{\ast}\otimes C.
		\end{align*}
		\item[(I)]
		$r_k^{\theta c} = \theta (C\otimes L^{\ast}_{-k} - L^{\ast}_{-k}\otimes C)$, where $k\in\mathbb{Z}^{\ast}, \theta \in\mathbb{C}^{\ast}$ and
		\item[(II)]
		$r_k^{\alpha}$, where $k\in\mathbb{Z}^{\ast},\alpha\in\mathbb{C}^{\ast}$.
		\item[(III)]
		$r_k^{'\beta + \vartheta c} = r_k^{'\beta} + r_k^{\vartheta c}$,
		where $k\in \mathbb{Z}^{\ast}, \beta\in \mathbb{C}^{\ast}, \vartheta\in\mathbb{C}$.
		\item[(IV)]
		$r_k^{\to \mu} = \mu(-\tfrac{k^2-1}{24}(L_{k}\otimes L^{\ast}_0-L^{\ast}_0\otimes L_{k})
		+ L_k\otimes C^{\ast} - C^{\ast}\otimes L_k)$, where $k\in\mathbb{Z}^{\ast}, \mu\in\mathbb{C}^{\ast}$.
	\end{enumerate}
}

\begin{lemma} \cite{BGN2}\label{lem4.2}
  Let $\frak g$ be a finite-dimensional Lie algebra and $R: \mathfrak{g}\rightarrow \mathfrak{g}$ a linear map. Then $R$ is a Rota-Baxter operator of weight 1 on $\frak g$ if and only if both $(R-R^{21}) + {\rm Id}$ and $(R-R^{21}) - {\rm Id}^{21}$ are solutions of the CYBE on $\mathfrak{g}\ltimes_{{\rm ad}^{*}} \mathfrak{g}^{*}$.
\end{lemma}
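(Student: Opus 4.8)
My plan is to reduce the statement to a computation inside $U(\hat{\mathfrak g})$, where $\hat{\mathfrak g} := \mathfrak g \ltimes_{{\rm ad}^{\ast}} \mathfrak g^{\ast}$, in the same spirit as the weight-$0$ result of Lemma~\ref{lem:op}. Since $\mathfrak g$ is finite-dimensional, I would fix a basis $\{e_i\}_{i\in I}$ of $\mathfrak g$ with dual basis $\{e_i^{\ast}\}_{i\in I}$, so that $R$ is identified with $\sum_{i} e_i\otimes R_i$ as in the text and, crucially, the identity map is now also of finite rank and is identified with the canonical element $\Omega := \sum_i e_i\otimes e_i^{\ast}$, while ${\rm Id}^{21} = \Omega^{21} = \sum_i e_i^{\ast}\otimes e_i$. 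Writing $r := R - R^{21}$, $s_+ := r + {\rm Id}$ and $s_- := r - {\rm Id}^{21}$, the goal is to show that $R$ satisfies Eq.~\eqref{RB-lambda} with $\lambda=1$ if and only if $C(s_{\pm})=0$, where $C(s) := [s_{12},s_{13}]+[s_{12},s_{23}]+[s_{13},s_{23}]$.

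First I would observe that the two conditions are in fact equivalent, so only one must be analyzed. Since $r^{21}=-r$, a direct check gives $s_-=-s_+^{21}$. Moreover the flip $\tau_{13}$ of the first and third tensor legs is an algebra automorphism of $U(\hat{\mathfrak g})^{\otimes 3}$ satisfying $\tau_{13}(C(s)) = -C(s^{21})$, and $C$ is homogeneous of degree $2$, so $C(-s)=C(s)$. Combining these yields $C(s_+)=0 \iff C(s_-)=0$; hence it suffices to treat $s_+$ and the symmetric phrasing of the lemma records both halves of this equivalence.

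The core of the argument would then be a direct expansion of $C(s_+)$ using the bracket \eqref{semiproduct} of $\hat{\mathfrak g}$, sorting the result according to which of the summands $\mathfrak g$ and $\mathfrak g^{\ast}$ occupies each tensor factor. The purely Rota-Baxter part $C(r)$ reproduces, exactly as in Lemma~\ref{lem:op}, the weight-$0$ operator identity for $R$. The new feature is the term ${\rm Id}$: the part of $C(s_+)$ purely quadratic in $\Omega$, namely $[\Omega_{12},\Omega_{13}]+[\Omega_{12},\Omega_{23}]+[\Omega_{13},\Omega_{23}]$, vanishes because $\tfrac{1}{2}(\Omega+\Omega^{21})$ is the ${\rm ad}$-invariant symmetric (Casimir) element attached to the canonical pairing of $\mathfrak g$ with $\mathfrak g^{\ast}$ in $\hat{\mathfrak g}$, while the cross terms linear in $\Omega$ assemble precisely into the extra summand $R([x,y])$. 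Thus $C(s_+)=0$ becomes, after evaluating against arbitrary $x,y\in\mathfrak g$, the identity $[R(x),R(y)] = R([R(x),y]+[x,R(y)]+[x,y])$, which is Eq.~\eqref{RB-lambda} with $\lambda=1$; reading the same computation backwards shows that the weight-$1$ identity forces $C(s_+)=0$.

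The main obstacle will be the bookkeeping in this last step: one must expand the triple bracket in the semidirect product, carefully track the coadjoint-action contributions produced by ${\rm ad}^{\ast}$, and verify both that the purely-$\Omega$ part cancels by invariance of the Casimir and that the mixed $\Omega$-$R$ terms recombine into exactly one copy of $R([x,y])$ — and not, say, a half-weight or wrong-sign correction. Matching each homogeneous component of $C(s_+)$ across the various summands of $\hat{\mathfrak g}^{\otimes 3}$ is where all the care is needed; everything else is formal.
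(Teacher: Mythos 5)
The paper does not prove this lemma: it is quoted from \cite{BGN2}, and the only in-paper justification is the coordinate sketch given after Lemma~\ref{lem4.5}, which observes that for $r=(R-R^{21})+{\rm Id}$ the only nontrivial component equations of the CYBE on $\mathfrak{g}\ltimes_{{\rm ad}^{\ast}}\mathfrak{g}^{\ast}$ are $[[r]](e_i^{\ast},e_j^{\ast},e_k)$, $[[r]](e_i^{\ast},e_j,e_k^{\ast})$, $[[r]](e_i,e_j^{\ast},e_k^{\ast})$, and that these are exactly the weight-$1$ identity \eqref{RB-lambda}. Your expansion of $C(s_+)$ into the pure $r$-part, the cross terms, and the pure $\Omega$-part is the same computation organized by homogeneity rather than by components, and your preliminary reduction $s_-=-s_+^{21}$ together with $\tau_{13}(C(s))=-C(s^{21})$ and $C(-s)=C(s)$ is a correct and clean way to dispose of the second condition; so the route is essentially the intended one, just carried out in more detail than the paper records.

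One justification in your sketch is not valid as stated, although the conclusion it aims at is true. You claim $[\Omega_{12},\Omega_{13}]+[\Omega_{12},\Omega_{23}]+[\Omega_{13},\Omega_{23}]=0$ ``because $\tfrac12(\Omega+\Omega^{21})$ is the ${\rm ad}$-invariant symmetric element.'' Invariance of a symmetric $2$-tensor $t$ only makes $[t_{12},t_{13}]+[t_{12},t_{23}]+[t_{13},t_{23}]$ an invariant element of $\Lambda^3$, not zero (for a simple Lie algebra with its Casimir it is the nonzero Cartan $3$-tensor), and in any case $\Omega$ is not that symmetric element. The vanishing of $C(\Omega)$ must instead be checked directly in $\mathfrak{g}\ltimes_{{\rm ad}^{\ast}}\mathfrak{g}^{\ast}$: writing $[e_i,e_j]=\sum_k c_{ij}^k e_k$ one gets $[\Omega_{13},\Omega_{23}]=0$ because $\mathfrak{g}^{\ast}$ is abelian, while $[\Omega_{12},\Omega_{13}]=\sum c_{ij}^k\, e_k\otimes e_i^{\ast}\otimes e_j^{\ast}$ and $[\Omega_{12},\Omega_{23}]=-\sum c_{ij}^k\, e_k\otimes e_i^{\ast}\otimes e_j^{\ast}$ cancel by the explicit form of ${\rm ad}^{\ast}$. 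With that repair, and with the bookkeeping of the mixed $\Omega$--$R$ terms carried out (they do recombine into a single $R([x,y])$, as you anticipate), your argument goes through.
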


\begin{remark}
Since $R$ is a Rota-Baxter operator of weight 1 on a Lie algebra $\frak g$ if and only if so is $-R-{\rm Id}$ on $\frak g$ and
\begin{equation*}
  ((-R-{\rm Id})-(-R-{\rm Id})^{21}) + {\rm Id} = -((R-R^{21}) - {\rm Id}^{21}),
\end{equation*}
we only list the solutions of the CYBE obtained from $(R-R^{21}) + {\rm Id}$.
\end{remark}

For infinite dimensional vector spaces $V_1$ and $V_2$, we define the formal tensor product
$V_1\widehat{\otimes}V_2$ to be the space of formal series on the basis of $V_1\otimes V_2$. Its elements are called {\it formal tensors}. A formal tensor can also be identified as an infinite matrix with the basis of $V_1$ as its row-index set and the basis of $V_2$ as its column-index set. We will not distinguish these two
presentations in this paper.

For a Lie algebra $\mathfrak{g}$, the CYBE on $\mathfrak{g}$ is an equation of tensors in $\mathfrak{g}\otimes\mathfrak{g}$. We need to generalize the notion of CYBE to formal tensors with suitable conditions.

Note that for $r=\sum\limits_{i,j\in {I}} a_{ij} e_i\otimes e_j \in \mathfrak{g}{\otimes}\mathfrak{g}$, the CYBE equals to the following equations:
\begin{equation}\label{formal CYBE}
  [[r]](e_i,e_j,e_k):=\sum_{s,t\in {I}}(C^{i}_{st}a_{sj}a_{tk}+a_{is}C^{j}_{st}a_{tk}+a_{is}a_{jt}C^{k}_{st})=0,\;\;\forall i,j,k\in {I},
\end{equation}
where $C^{i}_{rs}$ are the structural coefficients of $\mathfrak{g}$. The summation is finite
since only finitely many coefficients of $r$ are nonzero.

An infinite matrix $(a_{ij})_{i\in {I}, j\in {J}}$
is said to be {\it row-finite} if each row contains
only finitely many nonzero entries.
An infinite matrix is said to be {\it column-finite} if each column contains only finitely many nonzero entries.
For example, a linear map, viewed as an infinite matrix, is column-finite and vice versa.
An infinite matrix which is both row-finite and column-finite is said to be {\it row-and-column-finite}.

For a formal tensor $r=\sum\limits_{i,j\in {I}} a_{ij} e_i\otimes e_j \in \mathfrak{g}\widehat{\otimes}\mathfrak{g}$, to ensure the summation in Eq.~(\ref{formal CYBE}) is finite, we need $(a_{ij})_{i,j\in {I}}$ to be a row-and-column-finite matrix.

Therefore, a formal tensor $r=\sum\limits_{i,j\in {I}} a_{ij} e_i\otimes e_j \in \mathfrak{g}\widehat{\otimes}\mathfrak{g}$ is called a solution of the \emph{formal CYBE} if it is row-and-column-finite and satisfies Eq.~(\ref{formal CYBE}).

A linear operator $R$ on $\mathfrak{g}$ can be identified as an element in
$\mathfrak{g}\widehat{\otimes}\mathfrak{g}^{\ast}
\subset(\mathfrak{g} \ltimes_{{\rm ad}^{\ast}} \mathfrak g^{\ast})
\widehat{\otimes}(\mathfrak{g} \ltimes_{{\rm ad}^{\ast}} \mathfrak g^{\ast})$
as follows.
Let $\{e_i\}_{i\in {I}}$ be a basis of $\mathfrak{g}$ and
$\{e^{\ast}_i\}_{i\in {I}}$ be its dual defined by
\begin{equation*}
  e_i^{\ast}(e_j)=\delta_{ij},\;\;\forall i,j\in {I}.
\end{equation*}
By Zorn's lemma, $\{e^{\ast}_i\}_{i\in {I}}$ can be extended to a basis of $\mathfrak{g}^{\ast}$, say $\{e^{\ast}_i\}_{i\in {I}}\cup\{f_j\}_{j\in {J}}$. Then we have
\begin{equation*}
  R=\sum_{i\in {I}} R(e_i)\otimes e^{\ast}_i +\sum_{j\in {J}}0\otimes f_j
  \in \mathfrak{g}\widehat{\otimes}\mathfrak{g}^{\ast}
  \subset(\mathfrak{g} \ltimes_{{\rm ad}^{\ast}} \mathfrak g^{\ast})
  \widehat{\otimes}(\mathfrak{g} \ltimes_{{\rm ad}^{\ast}} \mathfrak g^{\ast}).
\end{equation*}

If $R^{21}$ is also column-finite, then we say $R$ is {\it balanced}.
Both Lemma~\ref{lem:op} and \ref{lem4.2} can be easily extended to the infinite dimensional case for balanced Rota-Baxter operators. Therefore we have the following conclusion.

\begin{lemma}\label{lem4.5}
  Let $\mathfrak{g}$ be a Lie algebra and $R\colon \mathfrak{g}\rightarrow\mathfrak{g}$ a balanced linear map. Then
  \begin{enumerate}
    \item $R$ is a Rota-Baxter operator of weight $0$ on $\frak g$ if and only if $r = R - R^{21}$ is a skew-symmetric solution of the formal CYBE on $\mathfrak{g} \ltimes_{{\rm ad}^{\ast}} \mathfrak g^{\ast}$. In particular, when $R$ is of finite rank, it coincides with the conclusion in Lemma~\ref{lem:op}.
    %that is, in this case, the two corresponding solutions of the (formal) CYBE coincide.
    \item $R$ is a Rota-Baxter operator of weight 1 on $\frak g$ if and only if both $(R-R^{21}) + {\rm Id}$ and $(R-R^{21}) - {\rm Id}^{21}$ are solutions of the formal CYBE on $\mathfrak{g}\ltimes_{{\rm ad}^{*}} \mathfrak{g}^{*}$.
  \end{enumerate}
\end{lemma}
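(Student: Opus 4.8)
The plan is to deduce both statements from their finite-dimensional prototypes, Lemma~\ref{lem:op} (weight $0$) and Lemma~\ref{lem4.2} (weight $1$), by observing that the \emph{only} genuinely new issue in the infinite-dimensional setting is whether the formal CYBE is well-defined for the tensors in question, and that the ``balanced'' hypothesis is exactly what removes that issue. Once well-definedness is secured, the equivalence is the same bracket computation as in the finite-rank case, performed componentwise.

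First I would check the finiteness that makes the formal CYBE meaningful. Since $R$ is a linear map it is automatically column-finite as an infinite matrix, and the balanced hypothesis says $R^{21}$ is column-finite too, i.e.\ $R$ is row-finite; hence $R$ is row-and-column-finite, and so is its transpose $R^{21}$. Consequently $r = R - R^{21}$ is row-and-column-finite, and so are $r + {\rm Id}$ and $r - {\rm Id}^{21}$, because ${\rm Id}=\sum_{i\in I}e_i\otimes e_i^{\ast}$ and ${\rm Id}^{21}$ have at most one nonzero entry in each row and each column. Therefore every component $[[\,\cdot\,]](e_i,e_j,e_k)$ in Eq.~(\ref{formal CYBE}) is a finite sum, so the formal CYBE is a well-defined system of scalar equations for each of these tensors. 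The skew-symmetry asserted in part~(1) is immediate, since $r^{21}=R^{21}-R=-r$.

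The key point is then that, for each fixed triple $(i,j,k)$, the defining equation $[[r]](e_i,e_j,e_k)=0$ involves only finitely many entries of $r$ and finitely many structure constants of $\mathfrak{g}\ltimes_{{\rm ad}^{\ast}}\mathfrak g^{\ast}$, and these are precisely the terms produced when one expands $[r_{12},r_{13}]+[r_{12},r_{23}]+[r_{13},r_{23}]$ using the semidirect-product brackets of Eq.~(\ref{semiproduct}). Thus I would carry out that expansion componentwise and match the resulting finite scalar equations against the Rota-Baxter identity Eq.~(\ref{RB-lambda}); because every sum is finite, the manipulations in the proofs of Lemma~\ref{lem:op} and Lemma~\ref{lem4.2} go through verbatim, term by term, with no rearrangement of infinite series. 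For part~(2) one uses in addition the standard fact that $R$ is a Rota-Baxter operator of weight $1$ if and only if $-R-{\rm Id}$ is, together with the identity relating $(R-R^{21})+{\rm Id}$ and $(R-R^{21})-{\rm Id}^{21}$ recorded just before the statement, so that it suffices to treat one of the two tensors.

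The hard part is not algebra but bookkeeping: one must verify that in passing from the tensorial form of the CYBE to its componentwise form Eq.~(\ref{formal CYBE}), and in every bracket computation inside $(\mathfrak{g}\ltimes_{{\rm ad}^{\ast}}\mathfrak g^{\ast})\widehat{\otimes}(\mathfrak{g}\ltimes_{{\rm ad}^{\ast}}\mathfrak g^{\ast})$, no infinite summation is silently introduced. The balanced condition guarantees exactly this, and no new algebraic input beyond the finite-dimensional lemmas is needed; this is what justifies the claim that Lemmas~\ref{lem:op} and \ref{lem4.2} extend to the present setting.
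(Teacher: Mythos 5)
Your proposal is correct and takes essentially the same route as the paper: the paper likewise notes that balancedness makes every component of the formal CYBE a finite sum, and then observes (via Eq.~(\ref{semiproduct})) that the only nontrivial components $[[r]](e_i^{\ast},e_j^{\ast},e_k)$, $[[r]](e_i^{\ast},e_j,e_k^{\ast})$, $[[r]](e_i,e_j^{\ast},e_k^{\ast})$ are exactly Eq.~(\ref{RB-lambda}) with $\lambda=0$ (resp.\ $\lambda=1$) evaluated at pairs of basis vectors, so Lemmas~\ref{lem:op} and \ref{lem4.2} carry over verbatim.
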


Indeed, by Eq.~(\ref{semiproduct}), one sees that Eq.~(\ref{formal CYBE}) is trivial except for the cases: $[[r]](e_i^{\ast},e_j^{\ast},e_k)$, $[[r]](e_i^{\ast},e_j,e_k^{\ast})$ and $[[r]](e_i,e_j^{\ast},e_k^{\ast})$ for $i,j,k\in {I}$. However, for $r=(R-R^{21})+{\rm Id}$ (resp. $r=R-R^{21}$), these equations are nothing but Eq.~(\ref{RB-lambda}) with $\lambda=1$ (resp. $\lambda=0$) and $x=e_i,y=e_j; x=e_i,y=e_k; x=e_j,y=e_k$ respectively.

Therefore, except for the solutions of CYBE obtained from the Rota-Baxter operators of weight $0$ on $W$ which are of finite rank,
 Lemma~\ref{lem4.5} (1) gives the following solutions of the formal CYBE on $W \ltimes_{{\rm ad}^{\ast}} W^{\ast}$ from the Rota-Baxter operators of weight 0 on $W$ which are of type (III) given in Theorem~\ref{Thm2.14}:
   \begin{equation*}
      r_k^{l,\gamma} = \sum\limits_{m\equiv -k \pmod{l}}\tfrac{k}{m+2k}\gamma(L_{m+k}\otimes L^{\ast}_m - L^{\ast}_m\otimes L_{m+k}),
   \end{equation*}
      where $k,l\in\mathbb{Z}^{\ast}$, $l\nmid k$ and $\gamma\in\mathbb{C}^{\ast}$.

  Moreover, Lemma~\ref{lem4.5} (2) gives the following solutions of the formal CYBE on $W \ltimes_{{\rm ad}^{\ast}} W^{\ast}$ from the Rota-Baxter operators of weight 1 on $W$ given in Theorem~\ref{Thm2.21}.
  Note that here ${\rm Id}_W=\sum\limits_{m\in\mathbb{Z}}L_m\otimes L_m^{\ast}$.
   \begin{enumerate}
      \item
      $
       r_0^{\leqslant1} = \sum\limits_{m\leqslant1}L_m\otimes L^{\ast}_m + \sum\limits_{m>1}L^{\ast}_m\otimes L_m
      $.
      \item
      $
       r_0^{\geqslant-1} = \sum\limits_{m\geqslant-1}L_m\otimes L^{\ast}_m + \sum\limits_{m<-1}L^{\ast}_m\otimes L_m
      $.
      \item
      $
        r_0^0 = \sum\limits_m L_m\otimes L^{\ast}_m = {\rm Id}_W
      $.
      \item
      $
        r_0^{>1} = \sum\limits_{m>1}L_m\otimes L^{\ast}_m + \sum\limits_{m\leqslant1}L^{\ast}_m\otimes L_m
      $.
      \item
      $
        r_0^{<-1} = \sum\limits_{m<-1}L_m\otimes L^{\ast}_m + \sum\limits_{m\geqslant-1}L^{\ast}_m\otimes L_m
      $.
      \item
      $
        r_0^{\varnothing} = \sum\limits_m L^{\ast}_m\otimes L_m = {\rm Id}_W^{21}
      $.
      \item
      $
        r_0^{+,\alpha} = \sum\limits_{m<0}L^{\ast}_m\otimes L_m  +\sum\limits_{m>0}L_m\otimes L^{\ast}_m + (\alpha+1)L_0\otimes L^{\ast}_0 - \alpha L^{\ast}_0\otimes L_0
      $, where $\alpha\in\mathbb{C}$.
      \item
      $
        r_0^{-,\alpha} = \sum\limits_{m>0}L^{\ast}_m\otimes L_m  +\sum\limits_{m<0}L_m\otimes L^{\ast}_m + (\alpha+1)L_0\otimes L^{\ast}_0 - \alpha L^{\ast}_0\otimes L_0
      $, where $\alpha\in\mathbb{C}$.
    \end{enumerate}

    We remark that
    although the summation is infinite, the solution of formal CYBE on any highest weight irreducible representation of $W$ will be finite.

Lemma~\ref{lem4.5} (2) also gives the following solutions of the formal CYBE on $V \ltimes_{{\rm ad}^{\ast}} V^{\ast}$ from the Rota-Baxter operators of weight 1 on $V$ given in Theorem~\ref{Thm3.17}.
Note that here ${\rm Id}_V=\sum\limits_{m\in\mathbb{Z}}L_m\otimes L_m^{\ast}+C\otimes C^{\ast}$.
    \begin{enumerate}
      \item
      $r_0^{\leqslant1} + \mu e_{0,c} + \nu e_{c,c} + C\otimes C^{\ast}$,
      where $\mu,\nu\in\mathbb{C}$.
      \item
      $r_0^{\geqslant-1} + \mu e_{0,c} + \nu e_{c,c} + C\otimes C^{\ast}$,
      where $\mu,\nu\in\mathbb{C}$.
      \item
      $r_0^0 + C\otimes C^{\ast}$ i.e. ${\rm Id}_V$.
      \item
      $r_0^{>1} + \mu e_{0,c} + \nu e_{c,c} + C\otimes C^{\ast}$,
      where $\mu,\nu\in\mathbb{C}$.
      \item
      $r_0^{<-1} + \mu e_{0,c} + \nu e_{c,c} + C\otimes C^{\ast}$,
      where $\mu,\nu\in\mathbb{C}$.
      \item
      $r_0^{\varnothing} + C^{\ast}\otimes C$ i.e. ${\rm Id}_V^{21}$.
     \item
     $r_0^{+,\alpha} + \theta e_{c,0} + \mu e_{0,c} + \nu e_{c,c} + C\otimes C^{\ast}$, where $\alpha,\theta,\mu,\nu\in\mathbb{C}$.
      \item
      $r_0^{-,\alpha} + \theta e_{c,0} + \mu e_{0,c} + \nu e_{c,c} + C\otimes C^{\ast}$, where $\alpha,\theta,\mu,\nu\in\mathbb{C}$.
     \end{enumerate}

    We remark that
    although the summation is infinite, the solution of the formal CYBE on any highest weight irreducible representation
    of $V $ will be a finite expression.

\section{Induced pre-Lie algebras on the Witt and Virasoro algebras}
\begin{definition}
A {\it pre-Lie algebra} $A$ is a vector space $A$ with a bilinear product $\ast$ satisfying
  \begin{equation}\label{eq:prelie}
    (x\ast y)\ast z - x\ast(y\ast z) = (y\ast x)\ast z - y\ast(x\ast z),\;\;\forall x,y,z\in A.
      \end{equation}
\end{definition}

\begin{proposition}\label{prop4.2} \cite{GS}
 Let $(\mathfrak{g},[\cdot, \cdot])$ be a Lie algebra with a Rota-Baxter operator $R$ of weight 0 on it. Define a new operation $x\ast y = [R(x),y]$ for any $x,y\in A$. Then $(\mathfrak{g},\ast)$ is a pre-Lie algebra.
\end{proposition}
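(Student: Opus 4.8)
The plan is to verify the defining identity \eqref{eq:prelie} directly by expanding both sides in terms of the Lie bracket and the operator $R$, and then reducing the difference using the weight-$0$ Rota-Baxter relation \eqref{RB-lambda} (with $\lambda=0$) together with the Jacobi identity. For the product $x\ast y=[R(x),y]$, the left-hand associator is
\[
(x\ast y)\ast z - x\ast(y\ast z) = [R([R(x),y]),z] - [R(x),[R(y),z]],
\]
so identity \eqref{eq:prelie} amounts to the claim that this associator is symmetric under the interchange of $x$ and $y$.

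First I would rewrite the term $R([R(x),y])$ using the Rota-Baxter relation of weight $0$, namely $R([R(x),y]) = [R(x),R(y)] - R([x,R(y)])$, and then expand the resulting nested bracket $[[R(x),R(y)],z]$ via the Jacobi identity. Carrying this out, the associator becomes a sum of terms of the shapes $[R(x),[R(y),z]]$, $[R(y),[R(x),z]]$ and $[R(\cdots),z]$; after one cancellation it simplifies to the compact form $-[R(y),[R(x),z]] - [R([x,R(y)]),z]$.

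Next I would form the difference between this simplified associator and its $x\leftrightarrow y$ swap. The nested-bracket contributions collapse, by Jacobi, to $[[R(x),R(y)],z]$, while the remaining $R$-terms combine into $-[R([R(x),y]+[x,R(y)]),z]$. Applying the Rota-Baxter relation once more identifies $R([R(x),y]+[x,R(y)])$ with $[R(x),R(y)]$, so this second contribution equals $-[[R(x),R(y)],z]$. The two contributions cancel, hence the difference vanishes and \eqref{eq:prelie} holds.

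The computation is entirely routine and presents no conceptual obstacle; the only point requiring care is the bookkeeping of signs and the two distinct invocations of the weight-$0$ Rota-Baxter relation (once to eliminate $R([R(x),y])$, and once to recombine the residual $R$-terms), together with the correct use of the Jacobi identity to convert nested brackets into a single double bracket.
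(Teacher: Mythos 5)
Your verification is correct. The paper itself gives no proof of this proposition---it is quoted from the reference [GS]---so there is no in-paper argument to compare against; your direct computation is the standard one. The key identity you arrive at,
\[
(x\ast y)\ast z - x\ast(y\ast z) - (y\ast x)\ast z + y\ast(x\ast z)
= \bigl[R([R(x),y]+[x,R(y)]),z\bigr] - [R(x),[R(y),z]] + [R(y),[R(x),z]],
\]
reduces via the weight-$0$ Rota--Baxter relation to $[[R(x),R(y)],z] - [R(x),[R(y),z]] + [R(y),[R(x),z]]$, which vanishes by the Jacobi identity, exactly as you describe. Your intermediate detour (first eliminating $R([R(x),y])$ and then recombining the residual $R$-terms) is slightly longer than necessary---one can postpone the Rota--Baxter substitution until after forming the $x\leftrightarrow y$ difference, at which point it is applied only once---but the bookkeeping in your version is consistent and the argument is sound.
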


 Let $(A, \ast)$ be a pre-Lie algebra.  Then the commutator
  \begin{equation}
    [x, y]_{\ast} = x\ast y - y\ast x,\;\;\forall x,y\in A
  \end{equation}
  defines a Lie algebra $\mathfrak{g}(A)$ called the {\it sub-adjacent Lie algebra} of $A$ and $A$ is also called a {\it compatible pre-Lie algebra} on the Lie algebra $\mathfrak{g}(A)$.

 Let us denote the commutator of the induced pre-Lie algebra given in Proposition~\ref{prop4.2} by $[\cdot,\cdot]_R$.
 If $R'$ is another Rota-Baxter operator of weight 0 on $\mathfrak{g}$ such that $\alpha R + \beta R'$ is such a Rota-Baxter operator for
 any scalars $\alpha$ and $\beta$, then
  \begin{equation*}
  	[\cdot,\cdot]_{\alpha R + \beta R'} = \alpha[\cdot,\cdot]_R+\beta[\cdot,\cdot]_{R'}.
  \end{equation*}
  Moreover, after rescaling   by $L_m\rightarrow \tfrac{1}{\alpha}L_m$
  for $\alpha\in \mathbb  C^{\ast}$,
  the pre-Lie algebras induced by $\alpha R$ can be identified with that induced by $R$. Therefore, to give all pre-Lie algebras induced from Rota-Baxter operators of weight 0 on the Lie algebra $\mathfrak{g}$, it suffices to consider only a complete set of representatives of those operators under the $\mathbb C^*$-action (cf. Remark~\ref{rek:RB0}).

\subsection{Induced pre-Lie algebras on the Witt algebra}
\mbox{}

Using the construction given in Prop. \ref{prop4.2}, we have the following.
\begin{theorem}\label{Thm4.3}
The homogeneous Rota-Baxter operators of weight 0 on the Witt algebra $W$ obtained in Theorem~\ref{Thm2.14}  give rise to the following pre-Lie algebras on the underlying vector space $W$:
\begin{enumerate}
	\item[(0)]
	$L_m\ast^0 L_n = 0,
	\;\;\forall m,n\in\mathbb{Z}$.
	\item[(I)]
	$L_m \ast_k L_n = (k-n)\delta_{m,0}L_{n-k},
	\;\;\forall m,n\in\mathbb{Z}$,
	where $k\in\mathbb{Z}$.
	\item[(II)]
	$L_m \ast_{2k}' L_n = (2k-n)\delta_{m,0}L_n + (2k-2n)\delta_{m+k,0}L_{n-k},
	\;\;\forall m,n\in\mathbb{Z}$,
	where $k\in\mathbb{Z}^{\ast}$.
	\item[(III)]
	$L_m\ast_k^l L_n =
	\tfrac{(m+k-n)k}{m+k}\delta_{m,l\mathbb{Z}}L_{m+n}
	\;\;\forall m,n\in\mathbb{Z}$,
	where $k,l\in \mathbb{Z}^{\ast}$ and $l\nmid k$.
\end{enumerate}
Moreover, these pre-Lie algebras are not mutually isomorphic.
\end{theorem}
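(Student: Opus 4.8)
The plan is to treat the two assertions separately: first produce the four families of products, which is a direct computation, and then show that the resulting pre-Lie algebras are pairwise non-isomorphic, where the real difficulty lies.

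For the products I would use Proposition~\ref{prop4.2}, so that $x\ast y=[R(x),y]$. Because multiplying a weight-$0$ Rota-Baxter operator by a nonzero scalar yields an isomorphic pre-Lie algebra (via the rescaling $L_m\mapsto\alpha^{-1}L_m$ noted earlier), it suffices to run through the $\mathbb C^{\ast}$-representatives $R_k^{0}=0$, $R_k^{1}$, $R_{2k}^{'1}$, $R_k^{l,1}$ of Remark~\ref{rek:RB0}. For each one I would substitute the value $R(L_m)$ into $[R(L_m),L_n]$, expand using $[L_a,L_b]=(a-b)L_{a+b}$, and collapse the Kronecker deltas; this produces cases (0)--(III) and is routine bookkeeping, the only care being the tracking of indices inside the deltas.

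For non-isomorphism I would extract invariants intrinsic to a pre-Lie algebra $A$ from its sub-adjacent Lie algebra $\mathfrak g(A)=(W,[\cdot,\cdot]_{\ast})$. The crucial observation is that the left annihilator
\[
\mathrm{Ann}_L(A)=\{x\in A\mid x\ast A=0\}=\{x\mid[R(x),W]=0\}=\ker R,
\]
the last equality holding because $Z(W)=0$. Since the weight-$0$ relation gives $R([x,y]_{\ast})=[R(x),R(y)]$, the map $R$ is a Lie homomorphism $(W,[\cdot,\cdot]_{\ast})\to(W,[\cdot,\cdot])$; hence $\mathrm{Ann}_L(A)$ is an ideal of $\mathfrak g(A)$ and the intrinsic quotient satisfies $\mathfrak g(A)/\mathrm{Ann}_L(A)\cong(R(W),[\cdot,\cdot])$. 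Evaluating $R(W)$ yields $0$ in case (0), the one-dimensional abelian algebra $\mathbb CL_{-k}$ in case (I), the non-abelian two-dimensional algebra $\langle L_0,L_{-k}\rangle$ with $[L_0,L_{-k}]=kL_{-k}$ in case (II), and the subalgebra $\bigoplus_{j}\mathbb CL_{lj}\cong W$ in case (III). These four Lie algebras are pairwise non-isomorphic, so algebras from different cases cannot be isomorphic.

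The hard part will be separating the parameters within a single case, since the quotient above is insensitive to $k$ in cases (I) and (II) and is always $\cong W$ in case (III). Here I would use the $\mathbb Z$-grading: every product is homogeneous, so the left multiplication $\Lambda=L(e)$ by a generator of the one-dimensional complement of $\mathrm{Ann}_L$ (cases (I), (II)), or the family of all left multiplications (case (III)), is a weighted shift whose step encodes $|k|$ and whose support encodes $l$ and the residue of $-k$ modulo $l$; for instance $\Lambda$ is semisimple exactly when $k=0$ and otherwise a genuine shift with nilpotent-type generalized kernel. The obstacle is that an abstract isomorphism need not preserve the grading, so one must either show that every isomorphism is graded up to an automorphism of $W$, or recast these shift invariants (the generalized kernels $\bigcup_j\ker\Lambda^{j}$ and the maximal subspaces on which $\Lambda$ acts bijectively) in a grading-free way. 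A point demanding particular care is the reflection $L_n\mapsto-L_{-n}$, an automorphism of $W$ turning a shift of step $k$ into one of step $-k$: it shows that only $|k|$ can be recovered, so the parameter matching must be carried out modulo such intrinsic symmetries, and ruling out any further accidental isomorphisms beyond them is where I expect the argument to be most delicate and to require an explicit determination of the isomorphisms rather than a soft invariant count.
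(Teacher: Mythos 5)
Your computation of the products follows the paper's route (Proposition~\ref{prop4.2} applied to the $\mathbb{C}^{\ast}$-representatives of Remark~\ref{rek:RB0}), but there is a concrete bookkeeping gap: direct substitution does \emph{not} produce the stated formulas. For instance, for $R_k^1(L_m)=\delta_{m+2k,0}L_{m+k}$ one gets $[R_k^1(L_m),L_n]=\delta_{m+2k,0}(-k-n)L_{n-k}$, which is supported on $m=-2k$, not on $m=0$ as in case (I) of the statement. The paper passes to the displayed normal forms by an additional change of basis (a degree shift $L_m\mapsto L_{m+2k}$ in (I) and (II), and $L_m\mapsto L_{m+k}$ in (III)), which is a vector-space relabelling giving an isomorphic pre-Lie algebra; you need to include this step, or else your ``routine bookkeeping'' ends at formulas that do not match cases (I)--(III).

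On the non-isomorphism, you in fact do more than the paper, which merely asserts that this part is straightforward. Your invariant $\mathrm{Ann}_L(A)=\ker R$ (valid since $Z(W)=0$), together with the observation that the weight-$0$ identity makes $R$ a Lie homomorphism $\mathfrak g(A)\to W$ so that $\mathfrak g(A)/\mathrm{Ann}_L(A)\cong \mathrm{Im}\,R$, cleanly separates the four types (the images being $0$, a one-dimensional abelian algebra, a two-dimensional non-abelian algebra, and a copy of $W$), and this part of your argument is correct and complete. The separation of parameters within a type, however, is left unfinished, and your caution there is warranted: the reflection $L_n\mapsto -L_{-n}$ that you flag is not merely a warning but an actual isomorphism $(W,\ast_k)\to(W,\ast_{-k})$, since both sides of the intertwining relation equal $-(k-n)\delta_{m,0}L_{k-n}$. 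So the blanket claim that the listed algebras are pairwise non-isomorphic cannot hold over the full parameter range and must be read per type, or modulo such symmetries. In summary, your proposal proves the cross-type non-isomorphism rigorously (which the paper does not record), omits the degree-shift normalization needed for the product formulas, and leaves the within-type classification open --- exactly the point at which the paper's own proof offers only an assertion.
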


\begin{proof}
The conclusion follows from Proposition~\ref{prop4.2} by direct computation.
Note that the complete set is chosen to be the union $\bigcup_{k\in\mathbb{Z}}\mathcal{R}^W_k$
with $\mathcal{R}^W_k$ as in Remark~\ref{rek:RB0}.
Moreover, we also use a degree shifting by
for
$L_m\rightarrow L_{m+2k}$ in  (I) and (II), and
$L_m\rightarrow L_{m+k}$ in (III) respectively.
It is also straightforward to show that these pre-Lie algebras are not mutually isomorphic.
\end{proof}

The following conclusion is an immediate consequence.

\begin{proposition}
The sub-adjacent Lie algebras of the pre-Lie algebras in  \ref{Thm4.3} are given as follows respectively.
\begin{enumerate}
	\item[(0)]
	$[L_m,L_n]_{0}=0\;\;\forall m,n\in\mathbb{Z}$.
	\item[(I)]
	$
	[L_m,L_n]_{R_k^1}=
	\begin{cases}
	(k-n)L_{n-k} & m=0,n\neq0;\\
	0 & m,n\neq0;
	\end{cases}
	$\\
	where $k\in\mathbb{Z}$.
	\item[(II)]
	$
	[L_m,L_n]_{R_{2k}^{'1}}=
	\begin{cases}
	(2k-n) L_n & m=0,n\neq0,-k;\\
	(2k-2n) L_{n-k} & m=-k,n\neq0,-k;\\
	kL_{-k} & m=0, n=-k;\\
	0 & m, n\neq0, -k,
	\end{cases}
	$\\
	where $k\in\mathbb{Z}^{\ast}$.
	\item[(III)]
	$
	[L_m, L_n]_{R_k^{l,1}}=
	\begin{cases}
	\tfrac{(m+k-n)k}{m+k} L_{m+n} & m\in l\mathbb{Z}, n\notin l\mathbb{Z};\\
	\tfrac{(m-n)(m+n+k)k}{(m+k)(n+k)} L_{m+n} & m, n\in l\mathbb{Z};\\
	0 &  m,n\notin l\mathbb{Z},
	\end{cases}
	$\\
	where $k,l\in \mathbb{Z}^{\ast}$ and $l\nmid k$.
\end{enumerate}
\end{proposition}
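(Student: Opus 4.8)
The plan is to apply the definition of the sub-adjacent Lie bracket $[x,y]_\ast = x\ast y - y\ast x$ directly to each of the four pre-Lie products $\ast^0,\ast_k,\ast_{2k}',\ast_k^l$ listed in Theorem~\ref{Thm4.3}, then antisymmetrize and simplify. Since each of these products carries a Kronecker-delta factor singling out a distinguished value of the left index (namely $m=0$ in (I), $m\in\{0,-k\}$ in (II), and $m\in l\mathbb{Z}$ in (III)), the computation splits naturally according to which of the two indices $m,n$ lies in the support of these deltas.

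For case (0) the product vanishes identically, so $[L_m,L_n]_0=0$ with nothing to verify. For case (I), substitution gives $[L_m,L_n]_{R_k^1}=(k-n)\delta_{m,0}L_{n-k}-(k-m)\delta_{n,0}L_{m-k}$, and the three regimes in the table correspond respectively to exactly one delta firing ($m=0,n\ne0$), to antisymmetry ($m\ne0,n=0$), and to both deltas vanishing ($m,n\ne0$). Case (II) is treated the same way; the only point requiring care is the overlap sub-case $m=0,n=-k$, where both $L_0\ast_{2k}'L_{-k}$ and $L_{-k}\ast_{2k}'L_0$ contribute a multiple of $L_{-k}$, combining as $3kL_{-k}-2kL_{-k}=kL_{-k}$.

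The one genuinely algebraic step is case (III). Here antisymmetrization yields
\[
[L_m,L_n]_{R_k^{l,1}}=\frac{(m+k-n)k}{m+k}\delta_{m,l\mathbb{Z}}L_{m+n}-\frac{(n+k-m)k}{n+k}\delta_{n,l\mathbb{Z}}L_{m+n},
\]
so the mixed case $m\in l\mathbb{Z},\,n\notin l\mathbb{Z}$ and the vanishing case $m,n\notin l\mathbb{Z}$ are immediate, whereas the case $m,n\in l\mathbb{Z}$ forces the two fractions over the common denominator $(m+k)(n+k)$. I expect this to be the main (and essentially only) obstacle: one must check that the numerator $(m+k-n)(n+k)-(n+k-m)(m+k)$ collapses, after the $mn$ and $k^2$ contributions cancel, to $(m-n)(m+n+k)$, producing the stated coefficient $\tfrac{(m-n)(m+n+k)k}{(m+k)(n+k)}$. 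Once this identity is in hand the whole proposition follows by direct substitution, in keeping with its status as an immediate consequence of Theorem~\ref{Thm4.3}.
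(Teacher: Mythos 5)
Your computation is correct and is exactly the direct antisymmetrization that the paper intends when it states the proposition as an ``immediate consequence'' of Theorem~\ref{Thm4.3} (the paper supplies no further proof). All three checks you flag as the substantive steps --- the overlap sub-case $m=0$, $n=-k$ giving $3kL_{-k}-2kL_{-k}=kL_{-k}$ in (II), and the identity $(m+k-n)(n+k)-(n+k-m)(m+k)=(m-n)(m+n+k)$ in (III) --- work out as you describe.
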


\subsection{Induced pre-Lie algebras on the Virasoro algebra}
Similarly in the case of the Virasoro algebra, we can derive the following pre-Lie algebra structures by Prop. \ref{prop4.2}.
\begin{theorem}\label{Thm4.5}
The homogeneous Rota-Baxter operators of weight 0 with degree $0$ on the Virasoro algebra $V$ obtained in Theorem~\ref{Thm3.3}  give rise to the following pre-Lie algebras on the underlying vector space $V$:
\begin{enumerate}
\item $L_m\ast_{(0,0)} L_n=C\ast_{(0,0)} L_m=L_m\ast_{(0,0)} C=0,\;\;\forall m,n\in\mathbb{Z}$.
\item $L_m\ast_{(1,0)} L_n=-n\delta_{m,0} L_n, \;\; C\ast_{(1,0)} L_n=L_n\ast_{(1,0)} C=0,\;\;\forall m,n\in\mathbb{Z}$.
\item $L_m\ast_{(0,1)} L_n=0, \;\; C\ast_{(0,1)} L_n=- n L_n,\;\; L_n\ast_{(0,1)} C=0,\;\;\forall m,n\in\mathbb{Z}$.
\item $L_m\ast_{(1,1)} L_n=-n\delta_{m,0} L_n, \;\; C\ast_{(1,1)} L_n=- n L_n,\;\; L_n\ast_{(1,1)} C=0,\;\;\forall m,n\in\mathbb{Z}$.
\end{enumerate}
Moreover, these pre-Lie algebras are not mutually isomorphic.
\end{theorem}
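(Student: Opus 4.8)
The plan follows the two assertions of the theorem. For the explicit description of the products I would substitute the representative operators of Remark~\ref{rek:RB01 on V} into the construction $x\ast y=[R(x),y]$ of Proposition~\ref{prop4.2}. Writing a weight-$0$ degree-$0$ operator in the form $R(L_m)=\delta_{m,0}(\alpha L_m+\theta C)$ and $R(C)=\mu L_0+\nu C$ as in Theorem~\ref{Thm3.3}, the brackets are evaluated from Eqs.~\eqref{Virasoro_1} and \eqref{Virasoro_2}. The decisive point is that $C$ is central, so every bracket $[\,\cdot\,,C]$ vanishes; this immediately gives $L_m\ast C=0$ and $C\ast C=0$ and, more importantly, eliminates the parameters $\theta$ and $\nu$ from all products. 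What remains is $L_0\ast L_n=-\alpha nL_n$ and $C\ast L_n=-\mu nL_n$, so the induced pre-Lie structure depends only on the pair $(\alpha,\mu)$. Combining the $\mathbb{C}^{\ast}$-rescaling of Remark~\ref{rek:RB0}, which normalizes $\alpha$ to $0$ or $1$, with the rescaling $C\mapsto\mu^{-1}C$ that sends a nonzero $\mu$ to $1$, the four cases $(\alpha,\mu)\in\{(0,0),(1,0),(0,1),(1,1)\}$ reproduce exactly the tables (0)--(4). This computation is routine.

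For the non-isomorphism assertion I would attach to each pre-Lie algebra $(A,\ast)$ a family of isomorphism invariants and show the four families are pairwise distinct. The algebra $(0,0)$ is separated immediately, since it is the only one with $A\ast A=0$, whereas the remaining three all satisfy $A\ast A=\mathrm{span}\{L_n:n\neq0\}$. The substance of the argument is to separate the three nonzero algebras. Since a pre-Lie isomorphism induces an isomorphism of the sub-adjacent Lie algebra $\mathfrak{g}(A)$ given by $[x,y]=x\ast y-y\ast x$, I would first compute these Lie algebras; and because the bracket discards the asymmetry of $\ast$, I would then refine with genuinely pre-Lie data — the left and right annihilators $\{x:x\ast A=0\}$ and $\{y:A\ast y=0\}$, the set of elements with nonzero self-product, and the location of the distinguished ``grading'' element (the one whose left multiplication is diagonalizable with spectrum $\mathbb{Z}$) relative to $A\ast A$ and to the centre of $\mathfrak{g}(A)$.

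The main obstacle is precisely this separation of $(1,0)$, $(0,1)$ and $(1,1)$. What distinguishes them structurally is the position of the central element: in $(1,0)$ the grading is carried by $L_0$ while $C$ is isolated, in $(0,1)$ it is carried by the central element $C$ while $L_0$ is isolated, and in $(1,1)$ it is carried by $L_0$ (equivalently $C$) with $L_0-C$ spanning the centre of $\mathfrak{g}(A)$. The difficulty is that these three configurations share the same sub-adjacent Lie algebra type and the same annihilator dimensions, so the coarse invariants do not suffice; one must produce a finer, genuinely pre-Lie invariant that records how the centre of $\mathfrak{g}(A)$ meets the product asymmetrically — for example whether the distinguished central direction can be realized inside the image of a right-multiplication operator, or whether the grading element admits a representative lying in $A\ast A$ — and then verify that no linear bijection respecting $\ast$ can interchange these configurations. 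Carrying out this last step rigorously is the delicate heart of the proof; once a discriminating invariant is in hand, the remaining checks are bookkeeping.
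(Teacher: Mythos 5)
Your first paragraph is correct and is essentially the paper's own argument: substitute $R_0^{\alpha,\theta,\mu,\nu}$ into $x\ast y=[R(x),y]$, observe that centrality of $C$ eliminates $\theta$ and $\nu$, so that the only surviving products are $L_0\ast L_n=-\alpha nL_n$ and $C\ast L_n=-\mu nL_n$, and then normalize $(\alpha,\mu)$ to $\{0,1\}^2$ by the rescalings $L_m\mapsto\alpha^{-1}L_m$ and $C\mapsto\mu^{-1}C$. No issues there.

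The gap is in the non-isomorphism claim, and it is not one you could have closed by finding a finer invariant: under the standard notion of isomorphism of pre-Lie algebras (a linear bijection preserving $\ast$), the three nonzero algebras $(1,0)$, $(0,1)$, $(1,1)$ are pairwise isomorphic. The linear map fixing every $L_n$ with $n\neq0$ and exchanging $L_0$ with $C$ carries the table of case (3) onto that of case (2): in each algebra the only nonzero products are ``distinguished degree-zero vector $\ast\, L_n=-nL_n$'', while the other degree-zero vector is a two-sided annihilator. Likewise the map fixing all $L_n$ and sending $C\mapsto L_0-C$ identifies case (4) with case (2), since in case (4) one has $(L_0-C)\ast L_n=-nL_n+nL_n=0$ and $x\ast(L_0-C)=0$, so in the new basis only $L_0\ast L_n=-nL_n$ survives. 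Every invariant you propose (annihilators, sub-adjacent Lie algebra, location of the ``grading element'') is permuted by these maps, which is exactly why your separation stalls: the labels $L_0$ and $C$ have no intrinsic meaning in the abstract pre-Lie algebra. The paper's own proof gives no argument at this point (it ends with ``the conclusion follows''), so the ``moreover'' clause can only be salvaged by reading ``isomorphic'' in a rigidified sense --- e.g.\ isomorphisms preserving the $\mathbb{Z}$-grading and the line $\mathbb{C}C$ --- which neither the paper nor your proposal makes precise. You should either exhibit the isomorphisms above and flag the claim, or state explicitly the restricted category in which the four structures are to be distinguished.
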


\begin{proof}
By Proposition~\ref{prop4.2}, the induced pre-Lie algebra from $R_0^{\alpha,\theta,\mu,\nu}$ is given by
\[
L_m\ast_{(\alpha,\mu)} L_n=-n\alpha\delta_{m,0} L_n, \;\; C\ast_{(\alpha,\mu)} L_n=- n\mu L_n,\;\; L_n\ast_{(\alpha,\mu)} C=0,\;\;\forall m,n\in\mathbb{Z},
\]
where $\alpha,\mu\in \mathbb C$. For $\alpha\ne 0$ or $\mu\ne 0$, we use the linear transformation by
$L_m\rightarrow \tfrac{1}{\alpha}L_m$ for any $m\in \mathbb{Z}$ or $C\rightarrow \tfrac{1}{\mu} C$. Then the conclusion follows.
\end{proof}

The following conclusion follows immediately.
\begin{proposition}
The sub-adjacent Lie algebras of the pre-Lie algebras in  Theorem~\ref{Thm4.5} are given as follows respectively.
\begin{enumerate}
\item $[L_m, L_n]_{(0,0)}=[C, L_m]_{(0,0)}=0,\;\;\forall m,n\in\mathbb{Z}$.
\item $[L_m, L_n]_{(1,0)}=-n\delta_{m,0} L_n+m\delta_{n,0} L_m, \;\; [C, L_n]_{(1,0)}=0,\;\;\forall m,n\in\mathbb{Z}$.
\item $[L_m, L_n]_{(0,1)}=0, \;\; [C, L_n]_{(0,1)}=- n L_n,\;\;\forall m,n\in\mathbb{Z}$.
\item $[L_m, L_n]_{(1,1)}=-n\delta_{m,0} L_n+m\delta_{n,0} L_m, \;\; [C, L_n]_{(1,1)}=- n L_n,\;\;\forall m,n\in\mathbb{Z}$.
\end{enumerate}
\end{proposition}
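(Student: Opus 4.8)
The plan is to apply the definition of the sub-adjacent Lie bracket $[x,y]_{\ast} = x\ast y - y\ast x$ to each of the four pre-Lie products produced in Theorem~\ref{Thm4.5}, antisymmetrizing the explicit formulas on the basis $\{L_m\}_{m\in\mathbb{Z}}\cup\{C\}$. Because every product is given in closed form, the computation is purely mechanical, so I expect no genuine obstacle; the only point requiring attention is keeping track of which symmetrization terms survive.

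I would first treat the brackets $[L_m,L_n]$. Writing the general product from the proof of Theorem~\ref{Thm4.5} as $L_m\ast_{(\alpha,\mu)} L_n = -n\alpha\delta_{m,0}L_n$, antisymmetrization yields
\[
[L_m,L_n]_{(\alpha,\mu)} = -n\alpha\delta_{m,0}L_n + m\alpha\delta_{n,0}L_m.
\]
Setting $\alpha=0$ recovers the vanishing $L$--$L$ bracket of cases (1) and (3), while $\alpha=1$ gives the bracket $-n\delta_{m,0}L_n+m\delta_{n,0}L_m$ of cases (2) and (4). The key observation is that the single term in the one-sided product splits into two symmetric contributions under the commutator.

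Next I would compute the mixed brackets. Since $L_n\ast_{(\alpha,\mu)} C = 0$ holds in all four cases, one gets $[C,L_n]_{(\alpha,\mu)} = C\ast_{(\alpha,\mu)} L_n - L_n\ast_{(\alpha,\mu)} C = -n\mu L_n$, which vanishes for $\mu=0$ (cases (1) and (2)) and equals $-nL_n$ for $\mu=1$ (cases (3) and (4)). Finally I would note that no further data is needed: $[L_m,C]$ is determined by skew-symmetry from $[C,L_m]$, and $[C,C]=0$ because $C$ is central in $V$, so that $C\ast C = [R(C),C]=0$. Collecting these formulas case by case reproduces exactly the four bracket tables in the statement.
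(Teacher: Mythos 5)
Your proposal is correct and follows the same route as the paper, which simply states that the result follows immediately from antisymmetrizing the pre-Lie products $x\ast_{(\alpha,\mu)}y$ of Theorem~\ref{Thm4.5} via $[x,y]_{\ast}=x\ast y-y\ast x$. Your uniform computation in the parameters $(\alpha,\mu)$ reproduces all four cases exactly as stated.
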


\begin{theorem} \label{Thm4.7}
The homogeneous Rota-Baxter operators of weight 0 with a nonzero degree on the Virasoro algebra $V$ obtained in Theorem~\ref{Thm3.6} provide the following pre-Lie algebras on the underlying vector space $V$:
 \begin{enumerate}
      \item[(I)]
      $L_m\ast^0 L_n=C\ast^0 L_m=L_m\ast^0 C=0,\;\;\forall m,n\in\mathbb{Z}$.
      \item[(II)]
      $L_m\ast_k L_n=\delta_{m,0}\left((k-n) L_{n-k}-\tfrac{k^3-k}{12}\delta_{n,3k}C\right), \;\;
      C\ast_{k} L_n=L_n\ast_{k} C=0,\;\;\forall m,n\in\mathbb{Z}$, where $k\in\mathbb{Z}^{\ast}$.
      \item[(III)]
      $L_m\ast_{2k}' L_n=\delta_{m,0}(2k- n) L_n
      +2\delta_{m+k,0}\left((k-n) L_{n-k}-\tfrac{k^3-k}{12}\delta_{n,3k}C\right),\;\;
      C\ast_{2k}' L_n=L_n\ast_{2k}' C=0,\;\;\forall m,n\in\mathbb{Z}$, where $k\in\mathbb{Z}^{\ast}$.
      \item[(IV)]
      $L_m\ast_k^{\to} L_n=-\tfrac{k^2-1}{24}\delta_{m,0}C\ast_k^{\to}  L_n, \;\;
      C\ast_k^{\to}  L_n=(k-n) L_{n+k}+\tfrac{k^3-k}{12}\delta_{n+k,0}C,
      \;\; L_n\ast_k^{\to} C=0,\;\;\forall m,n\in\mathbb{Z}$, where $k\in\mathbb{Z}^{\ast}$.
    \end{enumerate}
Moreover, these pre-Lie algebras are not mutually isomorphic.
\end{theorem}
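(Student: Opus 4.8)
The plan is to obtain the products as a direct application of Proposition~\ref{prop4.2} and then to separate the resulting algebras by a short list of isomorphism invariants.

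First I would compute the products. For each of the four families of operators in Theorem~\ref{Thm3.6} I take the representative fixed by the $\mathbb{C}^{\ast}$-action (Remark~\ref{rek:RB0 on V}) and set $x\ast y=[R(x),y]$. Since $C$ is central, one has $x\ast C=0$ for all $x$ and $C\ast C=0$ automatically, so only $L_m\ast L_n=[R(L_m),L_n]$ and $C\ast L_n=[R(C),L_n]$ need to be evaluated, using Eqs.~\eqref{Virasoro_1} and \eqref{Virasoro_2}. For type~(IV) the relations $R_k^{\to}(L_m)=-\tfrac{k^2-1}{24}\delta_{m,0}L_{m+k}$ and $R_k^{\to}(C)=\mu L_k$ immediately give $L_m\ast L_n=-\tfrac{k^2-1}{24}\delta_{m,0}(C\ast L_n)$, which is exactly the stated shape; the remaining three families are equally mechanical. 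A degree shift $L_m\mapsto L_{m+dk}$ (with $d$ the relevant integer) then brings each product into the normal form displayed in the theorem. This part is routine.

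The substance is the non-isomorphism claim, which I would prove by exhibiting invariants preserved by any pre-Lie isomorphism. Introduce the left and right annihilators $\mathrm{Ann}_L=\{x\mid x\ast A=0\}$ and $\mathrm{Ann}_R=\{y\mid A\ast y=0\}$; both are canonical subspaces, so their codimensions and mutual position are invariants. I would use the following ladder: (a) the product is identically zero exactly for type~(I), which isolates it; (b) $\operatorname{codim}\mathrm{Ann}_L$ equals $2$ for type~(III) and $1$ for types~(II) and (IV), which isolates (III); (c) the inclusion $\mathrm{Ann}_R\subseteq\mathrm{Ann}_L$ holds for type~(II) but fails for type~(IV) --- concretely, the distinguished line $\mathbb{C}C$ lies in $\mathrm{Ann}_R\cap\mathrm{Ann}_L$ in case (II), whereas $C\notin\mathrm{Ann}_L$ in case (IV) because there $C$ acts nontrivially on the left. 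Steps (a)--(c) separate the four families from one another.

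The main obstacle is to separate, within one family, the members with distinct degrees $k$. For this I would use the canonical (up to scalar) left-multiplication operator $T$: in type~(II) take $T=L(x_0)$ with $x_0$ spanning the line $A/\mathrm{Ann}_L$; in type~(IV) take $T=L(C)$, where $C$ is singled out as a generator of $\mathrm{Ann}_R/(\mathrm{Ann}_R\cap\mathrm{Ann}_L)$; and in type~(III), where $A/\mathrm{Ann}_L$ is two-dimensional, use the two distinguished left-acting directions, namely the images of the indices $0$ and $-k$. In each case $T$ is a weighted shift of the index by $\pm k$, so that $k$ is encoded in the action of $T$, equivalently in the isomorphism type of the sub-adjacent Lie algebra $\mathfrak{g}(A)$. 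The delicate point --- and the crux of the proof --- is that the coarse numerical invariants of a single weighted shift are insensitive to the shift length; the argument must therefore combine the left-multiplication data with the position of the distinguished central line, and in type~(III) with the relative position of the two left-acting directions, in order to pin down $k$ unambiguously. Once $k$ is recovered, the listed algebras are pairwise distinguished and the theorem follows.
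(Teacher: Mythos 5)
Your computational plan and your invariants (a)--(c) are fine, and in fact go beyond what the paper records: its proof consists of ``direct computation'' via Proposition~\ref{prop4.2}, the observation that $R_k^{0}$ and $R_k^{\theta c}$ both yield the zero product (I), a degree shift $L_m\mapsto L_{m+2k}$, and the bare assertion that non-isomorphism is ``straightforward to check.'' I verified your annihilator ladder: $\operatorname{codim}\mathrm{Ann}_L$ is $0,1,2,1$ in cases (I)--(IV) (in case (IV) the left multiplications by $L_0$ and by $C$ are proportional, which is why the codimension is $1$ and not $2$), and $\mathrm{Ann}_R\subseteq\mathrm{Ann}_L$ holds in (II) but fails in (IV) since $C\in\mathrm{Ann}_R\setminus\mathrm{Ann}_L$ there. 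So the four families are correctly separated.

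The genuine gap is the step you yourself call the crux: separating distinct degrees inside one family. You never produce the invariant that ``pins down $k$ unambiguously,'' and no such invariant exists, because within each family the algebras for $k$ and $-k$ are isomorphic. For instance, in case (II) with $k=1$ the map $\phi(L_n)=(-1)^nL_{-n}$, $\phi(C)=C$ satisfies
\begin{equation*}
\phi(L_0\ast_1L_n)=(1-n)(-1)^{n-1}L_{1-n}=(-1)^n(n-1)L_{1-n}=\phi(L_0)\ast_{-1}\phi(L_n),
\end{equation*}
and all other products are zero on both sides, so $(V,\ast_1)\cong(V,\ast_{-1})$; for general $k$ one takes $\phi(L_n)=c_nL_{-n}$ with $c_0=1$, $c_{n-k}=-c_n$ for $n\neq k$, and $\phi(C)=c_{2k}C$, and analogous maps work for (III) and (IV). Hence at most $|k|$ can be recovered, and your proposed endgame (``once $k$ is recovered\dots'') cannot be carried out as stated; the theorem's ``moreover'' clause itself is only true modulo the identification $k\leftrightarrow-k$. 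Even the weaker claim that $|k|$ is an invariant still needs an actual argument: the canonical operator $T$ is a weighted shift by $k$, and the obvious numerical data ($\dim\ker T^j$, $\operatorname{codim}\operatorname{Im}T^j$) are independent of $k$, so one must pass to finer conjugation invariants --- for example the relative position of $\bigcup_j\ker T^j$ and $\bigcap_j\operatorname{Im}T^j$, which coincide for $|k|=1$ but differ by an infinite-dimensional subspace for $|k|\geq2$ --- and this part of the argument is entirely missing from your proposal.
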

\begin{proof}
The conclusion follows from Proposition~\ref{prop4.2} by a direct
computation.
Note that the complete set is chosen to be the union $\bigcup_{k\in\mathbb{Z}^{\ast}}\mathcal{R}^V_k$
with $\mathcal{R}^V_k$ as in Remark~\ref{rek:RB0 on V}
and that the operators $R_k^0$ and $R_k^{\theta c}$ give the same pre-Lie algebra (I).
Moreover, we also use a degree shifting:
$L_m\rightarrow L_{m+2k}$ for (II) and (III) respectively. It is also straightforward to check
that these pre-Lie algebras are not mutually
isomorphic.
\end{proof}

\begin{proposition}
The sub-adjacent Lie algebras of the pre-Lie algebras in  Theorem~\ref{Thm4.7} are given as follows respectively.
\begin{enumerate}
      \item[(I)]
      $[L_m, L_n]_{0}=[C, L_m]_{0}=0,\;\;\forall m,n\in\mathbb{Z}$.

      \item[(II)]  	
      	$[L_m, L_n]_{R_k^1}=
      	\begin{cases}
      	-2k L_{2k}-\tfrac{k^3-k}{12}C & m=0, n=3k;\\
      	(k-n) L_{n-k} & m=0, n\neq0,3k;\\
      	0 & m, n\neq0,
      	\end{cases}$\\
      	$[C, L_n]_{R_k^1}=0,\;\;\forall n\in\mathbb{Z}$,
      	where $k\in\mathbb{Z}^{\ast}$.

      \item[(III)]
      $[L_m, L_n]_{R_{2k}^{'1}}=
      \begin{cases}
      -4k L_{2k}
      +\tfrac{k-k^3}{6}C &
      m=-k, n=3k;\\
      - k L_{-k}&
      m=-k, n=0;\\
      2(k-n)L_{n-k}&
      m=-k,n\neq0,-k,3k;\\
      (2k-n)L_n&
      m=0,n\neq0,-k;\\
      0&
      m,n\neq0,-k,
      \end{cases}$\\
      $[C, L_n]_{R_{2k}^{'1}}=0,\;\;\forall n\in\mathbb{Z}$, where $k\in\mathbb{Z}^{\ast}$.

      \item[(IV)]
      $[L_m, L_n]_{R_k^{\to}}=
      \begin{cases}
      -\tfrac{k^2-1}{24}\left(2kL_0+\tfrac{k^3-k}{12}C\right) &
      m=0, n=-k; \\
      -\tfrac{k^2-1}{24}(k-n) L_{n+k}&
      m=0, n\neq0,-k;\\
      0&
      m,n\neq0,
      \end{cases}$\\
      $[C,L_n]_{R_k^{\to}}=
      \begin{cases}
      2kL_0+\tfrac{k^3-k}{12}C &
      n=-k;\\
      (k-n)L_{n+k}&
      n\neq-k,
      \end{cases}$\\
      where $k\in\mathbb{Z}^{\ast}$.
\end{enumerate}

\end{proposition}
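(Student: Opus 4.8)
The plan is to compute the commutator bracket $[x,y]_\ast = x\ast y - y\ast x$ directly for each of the four pre-Lie products listed in Theorem~\ref{Thm4.7}, since by definition the sub-adjacent Lie algebra $\mathfrak{g}(A)$ of a pre-Lie algebra $(A,\ast)$ carries precisely this bracket. No use of the pre-Lie identity~\eqref{eq:prelie} is required; the entire content lies in carefully unwinding the Kronecker deltas appearing in the products and tracking signs under antisymmetrization.

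For case (I) the product vanishes identically, so the bracket is zero on all of $V$. For cases (II) and (IV) each product is supported on a single seed index---$\delta_{m,0}$ in (II), and the factor $\delta_{m,0}$ multiplying $C\ast_k^{\to} L_n$ in (IV)---so in the difference $L_m\ast L_n - L_n\ast L_m$ at most one of the two summands survives, depending on which of $m,n$ equals the seed value. This yields the three-way split stated: a generic value when exactly one index hits the seed, a special value when the central contribution $\tfrac{k^3-k}{12}C$ is activated (at $n=3k$ in (II), respectively $n+k=0$ in (IV)), and zero otherwise. The brackets $[C,L_n]$ are immediate: they vanish in (I)--(III) because $C$ acts trivially from both sides, while in (IV) one simply reads off $C\ast_k^{\to} L_n$ since $L_n\ast_k^{\to} C = 0$.

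The only case needing genuine care is (III), where $L_m\ast_{2k}' L_n$ carries \emph{two} seeds, $\delta_{m,0}$ and $\delta_{m+k,0}$. I would organize the computation by recording, for each of $m$ and $n$, whether it lies in $\{0,-k\}$; since $k\in\mathbb{Z}^{\ast}$ we have $0\ne -k$, so the two seeds never collide and their contributions do not interfere. The five branches in the statement are exactly the enumeration of pairs $(m,n)$ by which seed (if any) each index activates. The subtle entry is $m=-k,\,n=3k$: here the $\delta_{m+k,0}$ seed fires, the factor $2$ in front of the central term combines with the coefficient $-\tfrac{k^3-k}{12}$, and after antisymmetrization one obtains the stated central contribution $\tfrac{k-k^3}{6}C = -\tfrac{k^3-k}{6}C$.

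The main obstacle, such as it is, is purely bookkeeping: keeping track of the exceptional indices at which the central element $C$ enters, and getting the overall signs right under antisymmetrization, most delicately in case (III) where the coefficient $2$ and the two distinct seeds must be combined correctly. Once these are handled, each bracket agrees with the stated formula by direct inspection, completing the proof.
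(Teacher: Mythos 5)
Your proposal is correct and is exactly the argument the paper intends: the paper gives no proof, stating only that the proposition ``follows immediately'' from the definition $[x,y]_{\ast}=x\ast y-y\ast x$ applied to the products of Theorem~\ref{Thm4.7}, which is precisely the direct antisymmetrization and delta-bookkeeping you carry out. Your case analysis (single seed in (II) and (IV), two non-colliding seeds $\delta_{m,0}$ and $\delta_{m+k,0}$ in (III), and the central term at $n=3k$ resp.\ $n=-k$) checks out against every branch of the stated formulas.
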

\section{Induced PostLie algebras on the Witt and Virasoro algebras}

\begin{definition} \cite{Va}
 A {\it PostLie algebra} is a Lie algebra $(\mathfrak{g},[,])$ with a bilinear product $\circ$ satisfying the following equations:
  \begin{equation}
    ((x\circ y)\circ z-x\circ(y\circ z))-((y\circ x)\circ z-y\circ(x\circ z))+[x,y]\circ z=0,\label{eq:polie1}
  \end{equation}
  \begin{equation}
    z\circ[x,y]-[z\circ x,y]-[x,z\circ y]=0,\label{eq:polie2}
  \end{equation}
  for all $x,y,z\in \mathfrak{g}$. We denote it by $(\frak g, [,],\circ)$.
\end{definition}

  \begin{lemma} \cite{BGN1}\label{lem:post}
    Let $(\mathfrak{g},[,])$ be a Lie algebra and $R: \mathfrak{g}\rightarrow \mathfrak{g}$ a Rota-Baxter operator of weight 1.
    Define a new operation $x\circ y = [R(x),y]$ on $\mathfrak{g}$. Then $(\mathfrak{g},[,],\circ)$ is a PostLie algebra.
  \end{lemma}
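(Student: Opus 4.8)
The plan is to verify directly the two defining axioms of a PostLie algebra, Eqs.~\eqref{eq:polie1} and \eqref{eq:polie2}, by substituting $x\circ y=[R(x),y]$ and expanding, using only the Jacobi identity of $\mathfrak{g}$ together with the weight-$1$ Rota-Baxter relation \eqref{RB-lambda} (i.e.\ with $\lambda=1$).

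For the second axiom \eqref{eq:polie2} I would compute the three terms $z\circ[x,y]=[R(z),[x,y]]$, $[z\circ x,y]=[[R(z),x],y]$ and $[x,z\circ y]=[x,[R(z),y]]$. Their combination $z\circ[x,y]-[z\circ x,y]-[x,z\circ y]$ is precisely the Jacobi identity applied to the triple $R(z),x,y$, so \eqref{eq:polie2} holds without invoking the Rota-Baxter condition at all.

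The substantive part is the first axiom \eqref{eq:polie1}. Expanding $\circ$ gives $(x\circ y)\circ z=[R([R(x),y]),z]$, $x\circ(y\circ z)=[R(x),[R(y),z]]$, the two $x\leftrightarrow y$ swapped terms, and $[x,y]\circ z=[R([x,y]),z]$. I would collect the three summands whose outer bracket has the form $[\,\cdot\,,z]$, yielding $[\,R([R(x),y])-R([R(y),x])+R([x,y]),\,z\,]$. Rewriting $-[R(y),x]=[x,R(y)]$ and applying the weight-$1$ relation, the inner argument becomes $R([R(x),y]+[x,R(y)]+[x,y])=[R(x),R(y)]$, so these three terms collapse to $[[R(x),R(y)],z]$. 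This is the one place where the Rota-Baxter hypothesis is used, and it is the key step to get right.

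What then remains of \eqref{eq:polie1} is $[[R(x),R(y)],z]-[R(x),[R(y),z]]+[R(y),[R(x),z]]$, which vanishes by a single application of the Jacobi identity to $R(x),R(y),z$. Hence both axioms hold and $(\mathfrak{g},[,],\circ)$ is a PostLie algebra. The computation is otherwise routine; the only subtlety is the bookkeeping in \eqref{eq:polie1}, namely recognizing that the three $R$-terms reassemble exactly into the left-hand side of the weight-$1$ Rota-Baxter identity.
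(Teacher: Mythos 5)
Your verification is correct: axiom \eqref{eq:polie2} reduces to the Jacobi identity for the triple $R(z),x,y$, and in \eqref{eq:polie1} the three terms $[R([R(x),y]),z]-[R([R(y),x]),z]+[R([x,y]),z]$ collapse via linearity of $R$ and the weight-$1$ identity \eqref{RB-lambda} to $[[R(x),R(y)],z]$, after which Jacobi applied to $R(x),R(y),z$ finishes the argument. The paper gives no proof of this lemma (it is quoted from \cite{BGN1}), and your direct computation is exactly the standard one.
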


 Let $(\mathfrak{g},[,], \circ)$ be a PostLie algebra. Then the following operation (cf. \cite{BGN1})
  \begin{equation}\label{eq:PLie}
    \{x,y\}=x\circ y-y\circ x+[x,y],\;\;\forall x,y\in \frak g
  \end{equation}
  defines a Lie algebra structure on $\mathfrak{g}$.

\subsection{Induced PostLie algebras on the Witt algebra}

Direct application of Lemma~\ref{lem:post} gives the following.
  \begin{theorem}\label{Thm6.3}
The homogeneous Rota-Baxter operators of weight 1 with degree $0$ on the Witt algebra $W$ provided in Theorem~\ref{Thm2.21}
give rise to the following PostLie algebras $(W, [,],\circ)$, where $(W,[,])$ is the Witt algebra.
   \begin{enumerate}
      \item
      $
        L_m\circ^{\leqslant1} L_n=
        \begin{cases}
          -(m-n)L_{m+n} & m\geqslant2;\\
          0 & m\leqslant1.
        \end{cases}
      $
      \item
      $
        L_m\circ^{0} L_n=0
      ,\;\;\forall m,n\in\mathbb{Z}$.
      \item
      $
        L_m\circ^{>1} L_n=
        \begin{cases}
          -(m-n)L_{m+n} & m\leqslant1;\\
          0 & m\geqslant2.
        \end{cases}
      $
      \item
      $
        L_m\circ^{-{\rm Id}_W}L_n=-(m-n)L_{m+n}
      ,\;\;\forall m,n\in\mathbb{Z}$.
      \item
      $
        L_m\circ^{+,\alpha} L_n=
        \begin{cases}
          -(m-n)L_{m+n} & m<0; \\
          -\alpha n L_n & m=0; \\
          0 & m>0,
        \end{cases}
      $\\ where $\alpha\in\mathbb{C}$.
    \end{enumerate}
Moreover, these PostLie algebras are not mutually isomorphic.
 \end{theorem}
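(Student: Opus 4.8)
The plan is to obtain the five structures directly from \autoref{lem:post}, then to cut the eight operators of \autoref{Thm2.21} down to five isomorphism classes by a symmetry, and finally to separate those five by an invariant. The construction itself is mechanical: writing $L_m\circ L_n=[R(L_m),L_n]$ and using $[L_m,L_n]=(m-n)L_{m+n}$, each product is read off from the definition of $R$ on the homogeneous basis. For example, for $R_0^{\leqslant1}$ one gets $L_m\circ L_n=[-L_m,L_n]=-(m-n)L_{m+n}$ when $m\geqslant2$ and $0$ otherwise, which is case (1); the other operators are treated identically. By \autoref{lem:post} each yields a PostLie algebra on $(W,[,])$, so the only real content is organising the list and proving non-isomorphism.

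The reduction from eight operators to five classes I would carry out using the flip $\varphi\colon L_n\mapsto -L_{-n}$, which is a Lie algebra automorphism of $W$ with $\varphi^2=\mathrm{Id}$. Pushing a PostLie product forward along $\varphi$, i.e. setting $L_a\,\widetilde{\circ}\,L_b=\varphi(\varphi(L_a)\circ\varphi(L_b))$, produces an \emph{isomorphic} PostLie algebra. A short computation then shows that $\varphi$ carries the structures of $R_0^{\geqslant-1}$, $R_0^{<-1}$ and $R_0^{-,\alpha}$ onto those of $R_0^{\leqslant1}$, $R_0^{>1}$ and $R_0^{+,\alpha}$ respectively, while fixing the structures of $R_0^{0}=0$ and $R_0^{\varnothing}=-\mathrm{Id}_W$. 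The $\varphi$-orbits are thus $\{\leqslant1,\geqslant-1\}$, $\{>1,<-1\}$, $\{+,\alpha;-,\alpha\}$, $\{0\}$ and $\{\varnothing\}$, exactly the five cases listed. (Note this is the flip symmetry, not the $R\leftrightarrow -R-\mathrm{Id}$ correspondence of \autoref{rek:RB1}, which pairs the operators differently and need not preserve the PostLie structure.)

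For the final, and hardest, claim of pairwise non-isomorphism I would use an intrinsic invariant rather than a chosen map. Any PostLie isomorphism is in particular a Lie automorphism of $W$, hence preserves the set $N=\{x\mid x\circ y=0\ \forall y\}$ and its induced Lie-algebra structure. One computes $N=W$ for case (2), $N=0$ for case (4) (since $x\circ y=-[x,y]$ and $W$ is centreless), $N=\mathrm{span}\{L_m\mid m\leqslant1\}$ for case (1), $N=\mathrm{span}\{L_m\mid m\geqslant2\}$ for case (3), and $N=\mathrm{span}\{L_m\mid m\geqslant1\}$ or $\mathrm{span}\{L_m\mid m\geqslant0\}$ for case (5) according as $\alpha\neq0$ or $\alpha=0$. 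These are separated as Lie algebras: case (1) contains an $\mathfrak{sl}_2$ triple $\{L_{-1},L_0,L_1\}$ while the other positively graded $N$'s do not, and the latter are distinguished by the dimension of their abelianization (e.g. $1$, $2$, $3$ for $m\geqslant0$, $m\geqslant1$, $m\geqslant2$). The one delicate point I expect is the one-parameter family in case (5): for $\alpha,\alpha'\neq0$ the subalgebra $N$ alone does not distinguish them. Here I would use the extra invariant coming from the spectrum of the left multiplication $L_0\circ(-)$, whose eigenvalue on $L_n$ is $-\alpha n$; this slope is unchanged by the scalings $\psi_\lambda\colon L_n\mapsto\lambda^n L_n$ (which fix $L_0$), and $\varphi$ only interchanges the $(+)$- and $(-)$-families, so distinct $\alpha$ force distinct spectra and hence non-isomorphic PostLie algebras. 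This, together with the separation of cases (1)--(4), completes the argument.
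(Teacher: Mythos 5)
Your proposal is correct and follows essentially the same route as the paper: both read off the five products by direct computation from Lemma~\ref{lem:post} and collapse the eight operators of Theorem~\ref{Thm2.21} to five classes via the flip $L_m\mapsto -L_{-m}$. The only difference is that where the paper dismisses the pairwise non-isomorphism as ``straightforward,'' you supply an actual invariant (the left annihilator $N=\{x\mid x\circ y=0\ \forall y\}$ together with the spectrum of $L_0\circ(-)$); this is sound, though the separation of the $\alpha$-family in case (5) ultimately leans on the known description of the automorphisms of $W$ as $L_n\mapsto\epsilon\lambda^n L_{\epsilon n}$ with $\epsilon=\pm1$, $\lambda\in\mathbb{C}^{\ast}$, which you should state or cite explicitly.
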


\begin{proof}
The conclusion follows from Lemma~\ref{lem:post} by direct
computation.
Note that
the Rota-Baxter operators
$R_0^{\leqslant1}$ and $R_0^{\geqslant-1}$
give the PostLie algebra (1);
the Rota-Baxter operator
$R_0^{0}$, i.e. $0$,
gives the PostLie algebra (2);
the Rota-Baxter operators
$R_0^{>1}$ and $R_0^{<-1}$
give the PostLie algebra (3);
the Rota-Baxter operator
$R_0^{\varnothing}$, i.e. $-{\rm Id}_W$,
gives the PostLie algebra (4);
the Rota-Baxter operators
$R_0^{+,\alpha}$ and $R_0^{-,\alpha}$
give the PostLie algebras (5).
In fact, the
PostLie algebras obtained by $R_0^{\geqslant-1}$, $R_0^{<-1}$ and $R_0^{-,\alpha}$
are isomorphic to the PostLie algebras obtained by
$R_0^{\leqslant1}$, $R_0^{>1}$ and $R_0^{+,\alpha}$ respectively through the linear transformation
$L_m\rightarrow-L_{-m}$. Moreover, it is also straightforward
to show that these PostLie algebras are not mutually isomorphic.
\end{proof}

  \begin{proposition}
    The PostLie algebras in Theorem~\ref{Thm6.3} give rise to the following Lie algebras under the bracket $\{,\}$ defined in Eq.~(\ref{eq:PLie}):
    \begin{enumerate}
      \item
      $
        \{L_m, L_n\}_{R_0^{\leqslant1}}=
        \begin{cases}
          -2(m-n)L_{m+n} & m,n\geqslant2; \\
          -(m-n)L_{m+n} & m\geqslant2, n\leqslant1;\\
          0 & m,n\leqslant1.
        \end{cases}
      $
      \item
      $
        \{L_m, L_n\}_{0}=0
      ,\;\;\forall m,n\in \mathbb{Z}$.
      \item
      $
        \{L_m, L_n\}_{R_0^{>1}}=
        \begin{cases}
          -2(m-n)L_{m+n} & m,n\leqslant1; \\
          -(m-n)L_{m+n} & m\leqslant1,n\geqslant2;\\
          0 & m,n\geqslant2.
        \end{cases}
      $
      \item
      $
        \{L_m, L_n\}_{-{\rm Id}_W}=-2(m-n)L_{m+n}
      ,\;\;\forall m,n\in \mathbb{Z}$.
      \item
      $
        \{L_m, L_n\}_{R_0^{+,\alpha}}=
        \begin{cases}
          -2(m-n)L_{m+n} & m,n<0; \\
          -(m-n)L_{m+n} & m<0,n>0; \\
          -(1-\alpha)mL_m & m<0,n=0;\\
          -\alpha n L_n & m=0,n>0; \\
          0 & m>0, n>0,
        \end{cases}
      $\\
      where $\alpha\in\mathbb{C}$.
    \end{enumerate}
  \end{proposition}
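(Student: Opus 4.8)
The plan is to obtain each bracket $\{L_m, L_n\}$ by evaluating the definition in Eq.~(\ref{eq:PLie}), $\{x,y\} = x\circ y - y\circ x + [x,y]$, on the basis vectors, expanding the three summands $L_m\circ L_n$, $L_n\circ L_m$ and $[L_m, L_n]$ by means of the explicit PostLie products listed in Theorem~\ref{Thm6.3} and the Witt relation $[L_m, L_n] = (m-n)L_{m+n}$ of Eq.~(\ref{Witt}). Since each of the five products $\circ$ is defined piecewise according to the position of its first index---by the sign conditions $m\geq 2$ versus $m\leq 1$ in cases (1) and (3), trivially in cases (2) and (4), and by $m<0$, $m=0$, $m>0$ in case (5)---the whole proof reduces to identifying the regions of the index pair $(m,n)$ on which these three summands take constant form and collecting coefficients region by region.

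First I would record that $\{\cdot,\cdot\}$ is skew-symmetric and that $\{L_m, L_m\}=0$, so it suffices to fix one ordering per region and recover the others by antisymmetry. For cases (1)--(4) every summand that appears is a scalar multiple of $L_{m+n}$, so on each region the answer is a single scalar times $L_{m+n}$, and that scalar is dictated entirely by which of the two indices lands in the support of the underlying Rota-Baxter operator; this yields the two- or three-way splits recorded in the statement. These four cases are structurally identical and differ only in which half-line carries the support, so once one of them is done the others follow by the same computation.

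The case demanding genuine care is (5), for $R_0^{+,\alpha}$, because its product is \emph{not} homogeneous at the index $0$: the value $R_0^{+,\alpha}(L_0)=\alpha L_0$ contributes a weight $\alpha$ that differs from the weight produced by the negative indices. Consequently the subregions in which one of $m,n$ equals $0$ cannot be absorbed into the generic pattern and must be treated individually, keeping the $L_0$-contribution separate from the strictly negative and strictly positive ones; this is exactly what produces the non-uniform rows at $m=0$ and at $n=0$. I would therefore split the index set into $\{m,n<0\}$, $\{m<0,\,n>0\}$, $\{m<0,\,n=0\}$, $\{m=0,\,n>0\}$ and $\{m,n>0\}$ and compute on each separately.

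I do not expect any conceptual obstacle: with Theorem~\ref{Thm6.3} in hand the entire statement is a bookkeeping exercise. The points most likely to cause slips are sign errors at the transition indices (the thresholds at $1,2$ in cases (1) and (3), and at $0$ in case (5)) and inadvertent double counting when invoking antisymmetry. Careful separation of the regions, together with the sanity checks that the resulting bracket is skew-symmetric and satisfies the Jacobi identity, should guard against both.
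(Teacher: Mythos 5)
Your method---evaluating $\{x,y\}=x\circ y-y\circ x+[x,y]$ of Eq.~(\ref{eq:PLie}) on basis vectors region by region---is the right one and is the only proof the paper has in mind. The gap is in the step you dismiss as bookkeeping: actually carrying out that evaluation does \emph{not} return the coefficients displayed in the proposition. Take case (2), $R=0$: then $L_m\circ L_n=0$ and Eq.~(\ref{eq:PLie}) gives $\{L_m,L_n\}=[L_m,L_n]=(m-n)L_{m+n}$, not $0$. Take case (1) with $m,n\geqslant 2$: there $L_m\circ L_n=-(m-n)L_{m+n}$ and $L_n\circ L_m=(m-n)L_{m+n}$, so
\[
\{L_m,L_n\}=-(m-n)L_{m+n}-(m-n)L_{m+n}+(m-n)L_{m+n}=-(m-n)L_{m+n},
\]
not $-2(m-n)L_{m+n}$; likewise for $m\geqslant2$, $n\leqslant1$ one gets $0$ rather than $-(m-n)L_{m+n}$, and for $m,n\leqslant1$ one gets $(m-n)L_{m+n}$ rather than $0$. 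Every entry of the displayed tables is exactly $L_m\circ L_n-L_n\circ L_m$ with the summand $[L_m,L_n]$ of Eq.~(\ref{eq:PLie}) omitted, so your assertion that the region-by-region computation ``yields the splits recorded in the statement'' fails the moment the computation is written down.

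The sanity check you propose at the end would have exposed this. The table in case (1) as printed is not even a Lie bracket: taking $L_2$, $L_3$, $L_{-5}$ one finds $\{L_2,\{L_3,L_{-5}\}\}=32L_0$, $\{L_3,\{L_{-5},L_2\}\}=-42L_0$ and $\{L_{-5},\{L_2,L_3\}\}=20L_0$, whose sum is $10L_0\neq0$. By contrast the values produced by Eq.~(\ref{eq:PLie}) as written, namely $-(m-n)L_{m+n}$ for $m,n\geqslant2$, $0$ for $m\geqslant2$, $n\leqslant1$, and $(m-n)L_{m+n}$ for $m,n\leqslant1$, do satisfy the Jacobi identity, being the standard weight-one double bracket $[R(x),y]+[x,R(y)]+[x,y]$. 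So you must either carry the $[x,y]$ term through and arrive at formulas different from those in the statement, or read the statement with $\{x,y\}=x\circ y-y\circ x$, in which case it is not the bracket of Eq.~(\ref{eq:PLie}) and, by the Jacobi failure above, not a Lie bracket at all. Either way the proposal as written does not establish the proposition; the discrepancy between Eq.~(\ref{eq:PLie}) and the displayed coefficients has to be confronted, not assumed away.
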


\begin{remark}
It is straightforward to verify the following.

The Lie algebra structures of $(W,\{,\}_{R_0^{\leqslant1}})$ and $(W,[,])$ are isomorphic on the subspace $\bigoplus_{m>1}W_m$.

The Lie algebra structures of $(W,\{,\}_{R_0^{>1}})$ and $(W,[,])$ are isomorphic on the subspace $\bigoplus_{m\leqslant1}W_m$.

The Lie algebra structures of $(W,\{,\}_{R_0^{+,\alpha}})$ and $(W,[,])$ are isomorphic on the subspace $\bigoplus_{m<0}W_m$.
\end{remark}

\subsection{Induced PostLie algebras on the Virasoro algebra}

For the case of the Virasoro algebra, the construction of Lemma~\ref{lem:post} provides the following result.
 \begin{theorem}\label{Thm6.6}
The homogeneous Rota-Baxter operators of weight 1 with degree $0$ on the Virasoro algebra $V$ given in Theorem~\ref{Thm3.17}
give rise to the following PostLie algebras $(V, [,],\circ)$, where $(V,[,])$ is the Virasoro algebra:
    \begin{enumerate}
      \item
      $
        L_m\circ^{\leqslant1,\mu} L_n
        \begin{cases}
          -(m-n)L_{m+n}-\tfrac{m^3-m}{12}\delta_{m+n,0}C & m\geqslant2;\\
          0 & m\leqslant1,
        \end{cases}  \\
        C\circ^{\leqslant1,\mu} L_n  = - \mu nL_n,\;\;
        L_m\circ^{\leqslant1,\mu} C  = 0,\;\;\forall m,n\in \mathbb{Z}
      $, where $\mu\in\mathbb{C}$.

      \item
      $
        L_m\circ^0 L_n  =
        C\circ^0 L_n  =
        L_m\circ^0 C  =0
      ,\;\;\forall m,n\in\mathbb{Z}$.

      \item
      $
        L_m\circ^{>1,\mu} L_n  =
        \begin{cases}
          -(m-n)L_{m+n}-\tfrac{m^3-m}{12}\delta_{m+n,0}C  & m\leqslant1;\\
          0 & m\geqslant2,
        \end{cases} \\
        C\circ^{>1,\mu} L_n  = - \mu nL_n,\;\;
        L_m\circ^{>1,\mu} C  = 0,\;\;\forall m,n\in \mathbb{Z}
        $, where $\mu\in\mathbb{C}$.

      \item
      $
        L_m\circ^{-{\rm Id}_V} L_n  =-(m-n)L_{m+n}-\tfrac{m^3-m}{12}\delta_{m+n,0}C$,\\
      $ C\circ^{-{\rm Id}_V}  L_n  =
        L_m\circ^{-{\rm Id}_V}  C  =0
      ,\;\;\forall m,n\in \mathbb{Z}$.

      \item
      $
        L_m\circ^{+,\alpha,\mu} L_n  =
        \begin{cases}
          -(m-n)L_{m+n}-\tfrac{m^3-m}{12}\delta_{m+n,0}C  & m<0; \\
          -\alpha n L_n & m=0; \\
          0 & m>0,
        \end{cases} \\
        C\circ^{+,\alpha,\mu}  L_n  = - \mu nL_n,\;\;
        L_m\circ^{+,\alpha,\mu}  C  = 0,\;\;\forall m,n\in \mathbb{Z}
        $, where $\alpha,\mu\in\mathbb{C}$.
    \end{enumerate}
  \end{theorem}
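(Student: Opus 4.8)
The proof is a direct application of Lemma~\ref{lem:post}: for each operator $R$ listed in Theorem~\ref{Thm3.17}, the product $x\circ y=[R(x),y]$ automatically defines a PostLie algebra on $V$, so the only task is to evaluate this product on the basis $\{L_m\}_{m\in\mathbb{Z}}\cup\{C\}$ by means of the Virasoro relations \eqref{Virasoro_1} and \eqref{Virasoro_2}. The plan is first to isolate the one structural fact that trivializes most of the computation: since $C$ is central, $[\,\cdot\,,C]=0$, hence $L_m\circ C=0$ and $C\circ C=0$ for every such $R$, and moreover the central component of any image $R(x)$ contributes nothing to $[R(x),y]$. Writing $R(L_m)=f(m)L_m+\theta\delta_{m,0}C$ and $R(C)=\mu L_0+\nu C$ as in Theorem~\ref{Thm3.17}, this gives $L_m\circ L_n=f(m)[L_m,L_n]$ and $C\circ L_n=\mu[L_0,L_n]=-\mu nL_n$, while the parameters $\theta$ and $\nu$, which multiply only $C$, disappear. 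This explains why the induced PostLie structures in the statement depend on $\mu$ (and on $\alpha$ through the value $f(0)$) but not on $\theta$ or $\nu$.

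With this observation the five cases are read off by substituting the explicit values of $f$ from Theorem~\ref{Thm3.17} into $L_m\circ L_n=f(m)\bigl((m-n)L_{m+n}+\tfrac{m^3-m}{12}\delta_{m+n,0}C\bigr)$. For example, for $R_0^{\leqslant1,0,\mu,\nu}$ one has $f(m)=-1$ for $m\geqslant2$ and $f(m)=0$ for $m\leqslant1$, which reproduces case (1); cases (2)--(5) follow identically from $R_0^{0}$, $R_0^{>1,0,\mu,\nu}$, $R_0^{\varnothing}=-{\rm Id}_V$, and $R_0^{+,\alpha,\theta,\mu,\nu}$ respectively. The only point requiring attention is to keep track of the central cocycle term $\tfrac{m^3-m}{12}\delta_{m+n,0}C$ across the various sign regimes of $f$, but this is mechanical.

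To account for all eight operators of Theorem~\ref{Thm3.17} while listing only five PostLie algebras, the final step is to exhibit the linear map $\phi\colon L_m\mapsto-L_{-m}$, $C\mapsto-C$ and to verify, using \eqref{Virasoro_1}, that it is an automorphism of $V$; here the sign $\tfrac{(-m)^3-(-m)}{12}=-\tfrac{m^3-m}{12}$ together with $C\mapsto-C$ is exactly what makes the central cocycle transform consistently. One then checks that $\phi$ intertwines the PostLie products arising from $R_0^{\geqslant-1,0,\mu,\nu}$, $R_0^{<-1,0,\mu,\nu}$, $R_0^{-,\alpha,\theta,\mu,\nu}$ with those arising from $R_0^{\leqslant1,0,\mu,\nu}$, $R_0^{>1,0,\mu,\nu}$, $R_0^{+,\alpha,\theta,\mu,\nu}$ respectively, and that it preserves the surviving parameter $\mu$ (as seen from $C\circ L_n=-\mu nL_n$). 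This is the exact analogue of the reduction used in the proof of Theorem~\ref{Thm6.3}. I expect the main obstacle to be not any single difficult argument but the disciplined bookkeeping of the central-extension terms and of which parameters survive or vanish; the structural observation of the first paragraph is what keeps this under control.
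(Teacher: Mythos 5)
Your proposal is correct and follows essentially the same route as the paper: apply Lemma~\ref{lem:post}, observe that the central components ($\theta$, $\nu$, i.e.\ the $E_{c,0}$ and $E_{c,c}$ parts) drop out of $[R(x),y]$ because $C$ is central, compute $L_m\circ L_n=f(m)[L_m,L_n]$ and $C\circ L_n=-\mu nL_n$ case by case, and identify the remaining three operators' PostLie structures with the listed ones via $L_m\mapsto -L_{-m}$, $C\mapsto -C$. No gaps.
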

\begin{proof}
Note that since the images of $E_{c,0}$ and $E_{c,c}$ lie in the center of the Virasoro algebra $V$,
the PostLie algebras obtained by a homogeneous Rota-Baxter operator $R_0^{\ast,\theta,\mu,\nu}$ of weight 1 with degree $0$ on $V$ is independent from $\theta$ and $\nu$. Thus we denote by $(V,[,],\circ^{\ast,\mu})$ the PostLie algebra given by $R_0^{\ast,\theta,\mu,\nu}$ with fixed $\ast$ and $\mu$ .

Then the conclusion follows from Lemma~\ref{lem:post} by direct
computation.
Note that the Rota-Baxter operators
$R_0^{\leqslant1,0,\mu,\nu}, R_0^{\geqslant-1,0,\mu,\nu}$
give the PostLie algebras (1);
the Rota-Baxter operator
$R_0^{0,0,0,0}$, i.e. $0$,
gives the PostLie algebra (2);
the Rota-Baxter operators
$R_0^{>1,0,\mu,\nu}$, $R_0^{<-1,0,\mu,\nu}$
give the PostLie algebras (3);
the Rota-Baxter operator
$R_0^{\varnothing,0,0,-1}$, i.e. $-{\rm Id}_V$,
gives the PostLie algebra in (4);
the Rota-Baxter operators
$R_0^{+,\alpha,\theta,\mu,\nu}, R_0^{-,\alpha,\theta,\mu,\nu}$
give the PostLie algebras (5).
In fact, the PostLie algebras obtained by Rota-Baxter operators
$R_0^{\geqslant-1,0,\mu,\nu}, R_0^{<-1,0,\mu,\nu}, R_0^{-,\alpha,\theta,\mu,\nu}$
are isomorphic to the PostLie algebras obtained by Rota-Baxter operators
$R_0^{\leqslant1,0,\mu,\nu}, R_0^{>1,0,\mu,\nu}, R_0^{+,\alpha,\theta,\mu,\nu}$
respectively through the linear transformation of basis
$L_m\rightarrow-L_{-m}, C\rightarrow-C$.
Moreover, it is also straightforward to show that these PostLie algebras are not mutually isomorphic.
\end{proof}

  \begin{proposition}
    The PostLie algebras in Theorem~\ref{Thm6.6} give rise to the following Lie algebras under the bracket  $\{,\}$ defined in Eq.~(\ref{eq:PLie}):
    \begin{enumerate}
      \item
      $
        \{L_m, L_n\}_{\leqslant1,\mu}
        \begin{cases}
          -2(m-n)L_{m+n}-\tfrac{m^3-m}{6}\delta_{m+n,0}C & m,n\geqslant2;\\
          -(m-n)L_{m+n}-\tfrac{m^3-m}{12}\delta_{m+n,0}C & m\geqslant2,n\leqslant1;\\
          0 & m,n\leqslant1,
        \end{cases}  \\
        \{C, L_n\}_{\leqslant1,\mu}  = - \mu nL_n
      ,\;\;\forall m,n\in\mathbb{Z}$, where $\mu\in\mathbb{C}$.

      \item
      $
        \{L_m, L_n\}_0  =
        \{C, L_n\}_0  = 0
      ,\;\;\forall m,n\in\mathbb{Z}$.

      \item
      $
        \{L_m, L_n\}_{>1,\mu}  =
        \begin{cases}
          -2(m-n)L_{m+n}-\tfrac{m^3-m}{6}\delta_{m+n,0}C  & m,n\leqslant1;\\
          -(m-n)L_{m+n}-\tfrac{m^3-m}{12}\delta_{m+n,0}C  & m\leqslant1,n\geqslant2;\\
          0 & m,n\geqslant2,
        \end{cases} \\
        \{C, L_n\}_{>1,\mu}  = - \mu nL_n
      ,\;\;\forall m,n\in\mathbb{Z}$, where $\mu\in\mathbb{C}$.

      \item
      $
        \{L_m, L_n\}_{-{\rm Id}_V}  = -2(m-n)L_{m+n}-\tfrac{m^3-m}{6}\delta_{m+n,0}C,\;\;
        \{C, L_n\}_{-{\rm Id}_V} = 0
      ,\;\;\forall m,n\in\mathbb{Z}$.

      \item
      $
        \{L_m, L_n\}_{+,\alpha,\mu}  =
        \begin{cases}
          -2(m-n)L_{m+n}-\tfrac{m^3-m}{6}\delta_{m+n,0}C  & m,n<0; \\
          -(m-n)L_{m+n}-\tfrac{m^3-m}{12}\delta_{m+n,0}C  & m<0,n>0; \\
          -(1-\alpha)mL_m-\tfrac{m^3-m}{12}\delta_{m,0}C  & m<0,n=0; \\
          -\alpha n L_n & m=0,n>0; \\
          0 & m,n>0,
        \end{cases} \\
        \{C, L_n\}_{+,\alpha,\mu}  = - \mu nL_n
      ,\;\;\forall m,n\in\mathbb{Z}$
      , where $\alpha,\mu\in\mathbb{C}$.
    \end{enumerate}
  \end{proposition}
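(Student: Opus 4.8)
The plan is to evaluate the bracket $\{x,y\}=x\circ y-y\circ x+[x,y]$ of Eq.~\eqref{eq:PLie} directly, substituting for $\circ$ the explicit PostLie products $\circ^{\ast,\mu}$ tabulated in Theorem~\ref{Thm6.6} and for $[\,\cdot\,,\cdot\,]$ the Virasoro bracket of Eqs.~\eqref{Virasoro_1} and \eqref{Virasoro_2}. The brackets involving the central element fall out first: in every case of Theorem~\ref{Thm6.6} one has $L_n\circ^{\ast,\mu}C=0$ and, because $C$ is central, $[C,L_n]=0$, so $\{C,L_n\}=C\circ^{\ast,\mu}L_n$, which is read off immediately as $-\mu nL_n$ (or $0$ where $\mu$ is absent). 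This disposes of all mixed brackets at once and reduces the problem to computing $\{L_m,L_n\}$.

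For $\{L_m,L_n\}$ I would run the same case split that defines each product in Theorem~\ref{Thm6.6}: $m\geqslant 2$ versus $m\leqslant 1$ for cases (1) and (3), all of $\mathbb{Z}$ for case (4), and the trichotomy $m<0$, $m=0$, $m>0$ for case (5). Within each region $L_m\circ L_n$ is given directly by the table, while $L_n\circ L_m$ is obtained by interchanging the roles of $m$ and $n$ and re-testing which region $n$ lies in; adding $[L_m,L_n]$ and collecting coefficients then yields the listed formula. The boundary strata must be isolated, since there exactly one of the two $\circ$-terms is active: this is what produces the asymmetric entries, e.g. the $-(1-\alpha)mL_m$ occurring at $m<0,\,n=0$ in case (5) and the special $m=0,\,n=-k$-type entries elsewhere.

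The only step that is not purely mechanical is the bookkeeping of the central term. Each active product carries a summand $-\tfrac{m^3-m}{12}\delta_{m+n,0}C$, and in $L_n\circ L_m$ the analogous summand is $-\tfrac{n^3-n}{12}\delta_{m+n,0}C$; the point is that on the locus $n=-m$ one has $\tfrac{n^3-n}{12}=-\tfrac{m^3-m}{12}$, so under the difference $L_m\circ L_n-L_n\circ L_m$ these two central contributions reinforce rather than cancel, after which the further central term $\tfrac{m^3-m}{12}\delta_{m+n,0}C$ supplied by $[L_m,L_n]$ must be added in. Keeping these contributions straight across the case split is the crux of the argument. Finally, I would note that the operators differing only in the parameters $\theta,\nu$ (and the reflected families obtained via $L_m\mapsto -L_{-m}$, $C\mapsto -C$) yield the same bracket, so the five displayed Lie algebras are exactly those attached to the five PostLie algebras of Theorem~\ref{Thm6.6}, completing the enumeration.
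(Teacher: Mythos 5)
Your overall strategy---substituting the $\circ^{\ast,\mu}$ tables of Theorem~\ref{Thm6.6} and the Virasoro bracket into Eq.~\eqref{eq:PLie} and running the case split region by region---is exactly the direct computation that the paper leaves implicit, and your handling of the mixed brackets $\{C,L_n\}$ and of the reinforcing central summands in $L_m\circ L_n-L_n\circ L_m$ is correct. The problem is the final step, ``adding $[L_m,L_n]$ and collecting coefficients then yields the listed formula'': it does not. If you include the $+[x,y]$ term of Eq.~\eqref{eq:PLie} as you describe, every entry you obtain differs from the displayed one by exactly $[L_m,L_n]$. The cleanest test is case (2): there $\circ^{0}=0$, so Eq.~\eqref{eq:PLie} gives $\{L_m,L_n\}_0=[L_m,L_n]=(m-n)L_{m+n}+\tfrac{m^3-m}{12}\delta_{m+n,0}C$, not $0$. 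Likewise for $-{\rm Id}_V$ one gets $L_m\circ L_n-L_n\circ L_m+[L_m,L_n]=-2[L_m,L_n]+[L_m,L_n]=-[L_m,L_n]$, whereas the proposition lists $-2[L_m,L_n]$; and in the two-sided regimes the central coefficient would come out as $-\tfrac{m^3-m}{6}+\tfrac{m^3-m}{12}=-\tfrac{m^3-m}{12}$ rather than the listed $-\tfrac{m^3-m}{6}$.

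What the displayed tables actually record is the commutator $L_m\circ L_n-L_n\circ L_m$ alone. So you must either drop the $+[x,y]$ term (i.e.\ read the statement as computing the commutator of $\circ$ rather than the bracket of Eq.~\eqref{eq:PLie} as literally written) or correct every displayed formula by the extra $[L_m,L_n]$; as it stands, your proposal asserts that a computation reproduces formulas which it does not reproduce, and that assertion is the step that fails. A quick consistency check would have exposed this: for a weight-$1$ Rota--Baxter operator $R$, Eq.~\eqref{RB-lambda} says precisely that $[R(x),R(y)]=R(\{x,y\})$ with $\{x,y\}$ as in Eq.~\eqref{eq:PLie}, i.e.\ $R$ is a homomorphism from $(\mathfrak g,\{\cdot,\cdot\})$ to $(\mathfrak g,[\cdot,\cdot])$; applied to $R=-{\rm Id}_V$ this forces $\{x,y\}=-[x,y]$, immediately flagging the stray factor of $2$ in case (4).
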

\begin{remark}
It is straightforward to verify the following.

The Lie algebra structures of $(V,\{,\}_{\leqslant1,0})$ and $(V,[,])$ are isomorphic on the subspace $\bigoplus_{m>1}V_m$.

The Lie algebra structures of $(V,\{,\}_{>1,0})$ and $(V,[,])$ are isomorphic on the subspace $\bigoplus_{m\leqslant1}V_m$.

The Lie algebra structures of $(V,\{,\}_{-{\rm Id}_V})$ and $(V,[,])$ are isomorphic.

The Lie algebra structures of $(V,\{,\}_{+,\alpha,0})$ and $(V,[,])$ are isomorphic on the subspace $\bigoplus_{m<0}V_m$.
\end{remark}

\section*{Acknowledgments} C. Bai thanks the support by NSFC
(grant nos. 11271202, 11221091, 11425104) and SRFDP (grant no. 20120031110022). N. Jing
thanks the partial support of Simons Foundation (grant no. 198129) and
NSFC (grant nos. 11271138, 11531004).

\bibliographystyle{amsplain}

\end{document}